\documentclass[11pt]{article}

\usepackage{authblk}
\usepackage{amsmath, amssymb, amsthm, mathrsfs, bm}
\usepackage{geometry, extarrows, xcolor, graphicx, enumitem, booktabs, subcaption}
\usepackage[citestyle=authoryear,natbib=true,backend=bibtex]{biblatex}
\usepackage{titlesec}
\usepackage{hyperref}
\hypersetup{
    colorlinks,
    linkcolor={blue!50!black},
    citecolor={blue!50!black},
    urlcolor={blue!80!black}
}
\usepackage{array}
\usepackage[ruled,vlined]{algorithm2e}
\newcolumntype{A}{>{\centering\arraybackslash}m{0.12\columnwidth}}
\newcolumntype{B}{>{\centering\arraybackslash}m{0.4\columnwidth}}

\usepackage{fullpage}

\DeclareMathOperator*{\argmin}{argmin}
\DeclareMathOperator*{\argmax}{argmax}
\newcommand{\indpt}{\perp \!\!\! \perp}
\newcommand{\cond}[2]{P_{#1|#2}}

\theoremstyle{plain}
\newtheorem{theorem}{Theorem}

\newtheorem{lemma}[theorem]{Lemma}
\newtheorem{corollary}[theorem]{Corollary}
\newtheorem{assumption}{Assumption}

\theoremstyle{remark}

\newtheorem{remark}{Remark}

\title{Sufficiently Reduced Distributional Regression}
\author[1]{Alexander Henzi}
\author[2]{Tiange Liu}
\author[2]{Xinwei Shen}
\affil[1]{Tsinghua University, Department of Statistics and Data Science, \href{mailto:henzia@tsinghua.edu.cn}{henzia@tsinghua.edu.cn}}
\affil[2]{University of Washington, \href{mailto:ivytgliu@uw.edu}{ivytgliu@uw.edu} and \href{mailto:xwshen@uw.edu}{xwshen@uw.edu}}
\date{\today}

\begin{document}
\maketitle

\begin{abstract}
We propose Sufficiently Reduced Distributional Regression (SRDR), a generative method that combines conditional distribution estimation with nonlinear sufficient dimension reduction (SDR). It builds on a characterization of sufficiency through strictly proper scoring rules: a dimension reduction is sufficient if and only if predicting the response from the reduced covariates incurs no loss in expected score relative to the full covariates. Sufficient dimension reduction thus becomes a risk minimization problem. SRDR jointly trains a dimension reduction map and a generative prediction model by minimizing the energy score, which can be estimated by sampling without density evaluation or adversarial training. The framework extends to multi-environment data and to classification. We prove that the estimated conditional distributions converge in energy distance to the true ones, which implies that the learned representation is asymptotically sufficient. In simulations and applications to CT slice localization, superconductivity, and digit classification, SRDR recovers low-dimensional sufficient structure and matches or outperforms state-of-the-art nonlinear SDR methods in representation quality and predictive performance.
\end{abstract}

\normalsize


\section{Introduction}

The goal of distributional regression is to estimate the full distribution of an outcome variable $Y \in \mathbb{R}^m$ conditional on covariates $X \in \mathbb{R}^p$. While there are standard tools for estimating conditional distributions for univariate outcomes, such as via their density function, cumulative distribution function (CDF), or quantile function, distributional regression becomes more difficult for multivariate outcomes and high-dimensional covariates. On the one hand, this is due to the curse of dimensionality, making it difficult to estimate a functional response --- the conditional distribution --- given only point-valued observations. On the other hand, the calculation and usage of multivariate densities or distribution functions is often impractical, and it is more attractive to generate samples from the outcome distribution of interest, from which any desired quantity can be approximated via its empirical counterpart from the sample. These difficulties motivated the use of deep learning and generative modeling for conditional distribution estimation, where one estimates a sampling mechanism $d(X, \eta)$ that takes the covariate $X$ and independent noise $\eta \in \mathbb{R}^s$ as input and generates samples that should ideally follow the conditional distribution $\cond{Y}{X}$ of $Y$ given $X$ \citep{Zhou2023, Song2025, ShenMeinshausen2025}. 

From a statistical perspective, approximating conditional distributions with high-dimensional covariates and outcome variables to a reasonable accuracy is typically only feasible if the underlying structure of the data is not truly high-dimensional, but rather governed by a simpler, lower-dimensional function $e(X) \in \mathbb{R}^q$ of the covariates, which contains sufficient information about the outcome $Y$. Formally, this means that $Y$ and $X$ are independent given $e(X)$, in symbols $Y \indpt X \mid e(X)$, and the remaining randomness of $Y$ after conditioning on $X$ is captured via independent noise $\eta$. This is precisely the definition of sufficiency of $e(X)$ in the literature on dimension reduction \citep{Li2018}, and equivalent to $\cond{Y}{X}$ being equal to $\cond{Y}{e(X)}$. In a generative modeling setting, sufficiency requires that the generator is now of the form $d(e(X),\eta)$, taking lower-dimensional features $e(X)$ instead of the full covariate $X$ as input.

While neural networks and generative models have been applied successfully to each of the two problems discussed above, distributional regression and dimension reduction, the combination has not been widely considered in the literature so far. For instance, \citet{Zhou2023,Song2025} propose generative methods for estimating conditional distributions, without explicit dimension reduction. \citet{Liang2022,Chen2024,TangLi2025} propose state-of-the-art methods for nonlinear sufficient dimension reduction (SDR) using neural networks, but they do not consider the problem of generating samples of $Y$ based on the lower-dimensional representation of $X$. Hence, dimension reduction and distributional regression are mostly considered as disjoint tasks. Notable exceptions are the recent work by \citet{XuEtAl2025}, who propose an approach for dimension reduction and conditional sampling based on conditional stochastic interpolation; and the work by \citet{TanLiXue2026}, who simultaneously developed a method that is essentially equal to our proposal. We discuss the latter in more detail in Section \ref{sec:literature}.

The main contribution of this paper is a new characterization of sufficient dimension reduction via strictly proper scoring rules. A proper scoring rule is a function $S = S(P, Y)$ mapping a probability measure $P$ and outcome $Y$ to a numerical score, such that
\begin{equation} \label{eq:proper_scoring_rule}
    \mathbb{E}_{Y \sim P}[S(P,Y)] \leq \mathbb{E}_{Y\sim P}[S(Q,Y)]
\end{equation}
for all $P,Q$ in a certain set $\mathcal{P}$, and $S$ is strictly proper if equality in \eqref{eq:proper_scoring_rule} only holds for $P = Q$ \citep{GneitingRaftery2007,WaghmareZiegel2025}. We prove that a dimension reduction $e(X)$ is sufficient if, and only if, the expected loss $\mathbb{E}[S(\cond{Y}{e(X)}, Y)]$ for predicting $Y$ based on the conditional distribution $\cond{Y}{e(X)}$ equals $\mathbb{E}[S(\cond{Y}{X}, Y)]$, the expected loss based on the full covariate $X$, which is the smallest over all measurable reductions of $X$. This connection makes it possible to perform sufficient dimension reduction via any sufficiently flexible distributional regression method that enforces the compression of the covariates $X$ to some lower dimensional representation $e(X)$. While it is well-known that sufficient dimension reduction can be performed by optimizing various loss functions or dependence measures \citep{Li2018}, the connection to proper scoring rules is new to the dimension reduction literature, to the best of our knowledge. It phrases sufficient dimension reduction as a standard risk minimization problem, and allows us to simultaneously reduce dimension and target the accuracy of probabilistic predictions for $Y$ based on the dimension reduction $e(X)$.

Based on the above insight, we propose a flexible generative method for sufficient dimension reduction and distributional regression, called Sufficiently Reduced Distributional Regression (SRDR). For a sample $(X_1, Y_1), \dots, (X_n, Y_n)$ with unknown conditional distributions $\cond{Y}{X}$, and a class $\mathcal{M}$ of pairs $(e,d)$, which we will take to be deep or wide neural networks, we estimate a representation $d(e(X),\eta)$ of $Y$ by minimizing the average empirical score,
\[
    \min_{(e,d) \in \mathcal{M}} \ \frac{1}{n}\sum_{i=1}^n S(P_{e,d,X_i}, Y_i).
\]
The expectation of this average score is minimal only if the distribution $P_{e,d,X_i}$ of the generated outcomes $d(e(X_i),\eta_i)$, conditional on $X_i$, equals $\cond{Y}{X_i}$; and the dimension of $e(X) \in \mathbb{R}^q$ is chosen smaller than $p$, to enforce dimension reduction. This is in the same spirit as the bottleneck in autoencoders \citep[see, e.g.,][and the references therein]{Ghosh2022} or the belted neural networks proposed by \citet{TangLi2025} for sufficient dimension reduction, and yields the desired lower-dimensional representation of the data. In practice, we take the energy score for estimation, which is defined as
\begin{equation} \label{eq:energy_score}
        \mathrm{ES}(P,Y) =  \mathbb{E}_{Z \sim P}[\|Y - Z\|] - \frac{1}{2}\mathbb{E}_{Z, Z' \sim P}[\|Z - Z'\|],
\end{equation}
where $\|\cdot\|$ denotes the Euclidean norm and $Z,Z'$ in the second expectation are independent \citep{GneitingRaftery2007}. Hence, for the identity dimension reduction map $e(x)=x$, the estimation of $d$ in $d(x,\eta)$ is equivalent to the engression method proposed by \citet{ShenMeinshausen2025}. The energy score has favorable practical properties for estimating generative models, in that it can be easily approximated with an unbiased estimator by sampling, and does not require density estimates or adversarial training. 

From a theoretical perspective, we show that minimizing the energy score yields a statistically consistent estimator of the conditional distributions $\cond{Y}{X}$ in the so-called energy distance \citep{SzekelyRizzo2023}, which is a metric on the space of probability distributions. In concurrent work, \citet{chen2026error,huang2026theoretical} analyze engression, where the noise $\eta$ enters the generator together with the full covariate $X$. In our model, $\eta$ enters only after the dimension reduction $e(X)$, and this is what links consistency of the conditional distribution estimator to consistency of the dimension reduction: the fact that the estimated conditional distributions given the lower-dimensional $\hat{e}_n(X)$ are consistent for $\cond{Y}{X}$ implies that $\hat{e}_n(X)$ is sufficient, in an asymptotic sense; see Section \ref{sec:thmdiscuss} for a comparison. As a relevant extension, we show that our method and the theoretical guarantees also apply to heterogeneous multi-environment data, where both the marginal distribution of $X$ and the conditional distributions $\cond{Y}{X}$ can differ between environments, but a shared dimension reduction $e(X)$ is sufficient in all environments. This corresponds to a nonlinear version of the linear partial sufficient dimension reduction \citep{Chiaromonte2002}. Furthermore, we extend SRDR to classification problems, where a categorical outcome $Y$ is represented by a one-hot vector and the method applies with only minor modifications compared to a regression setting.

We apply SRDR to a wide range of regression and classification tasks. Simulations demonstrate that in problems where the relationship between $X$ and $Y$ is governed by a lower-dimensional sufficient reduction $e(X)$, SRDR outperforms direct engression in terms of out-of-sample energy score as prediction error, highlighting the advantages of incorporating sufficient dimension reduction into distributional regression and generative modeling. Furthermore, the recovered sufficient representation shows a stronger dependence, as measured by distance correlation \citep{Szekely2007}, with the outcome variable than the dimension reduction obtained with competing methods. Empirical studies on real data, including CT slice localization \citep{GrafCavallaro2011}, superconductivity data \citep{Hamidieh2018}, and MNIST classification, show that SRDR outperforms competing SDR methods, both in the quality of the dimension reduction and in predictive accuracy. In transfer learning experiments with the multi-environment SRDR, transferring the pretrained reduction to a new environment gives a consistent gain over training the same architecture from scratch, and SRDR is competitive with the method by \citet{GeZhouHuang2025}.

The structure of the article is as follows. In Section \ref{sec:literature} we provide an overview of related literature. The connections between proper scoring rules and sufficient dimension reduction, which form the basis of our approach, are elaborated in Section \ref{sec:scoring}. Section \ref{sec:srdr} provides the description of our algorithm, and an extension to data grouped into multiple environments. In Section \ref{sec:guarantees} we prove consistency. Empirical results in Section \ref{sec:simulations} and \ref{sec:app} demonstrate that our method is competitive with state-of-the-art methods for distributional regression and for sufficient dimension reduction.

\section{Related Literature} \label{sec:literature}

Distributional regression and sufficient dimension reduction are well-studied problems in the literature. Classical statistical methods estimate the conditional distribution of an outcome variable in terms of the density function, distribution function, and typically include some form of dimension reduction. For instance, in the popular generalized additive models for location, scale, and shape \citep{Rigby2005}, the outcome $Y$ is assumed to follow a parametric family, and the covariates $X$ only affect it through the low-dimensional parameters of the family. Semiparametric models reduce the covariates to a low-dimensional projection or predictor $\theta(X)$, which is assumed to be sufficient, and the conditional distribution of $Y$ given $\theta(X)$ is estimated nonparametrically \citep{HallYao2005,Henzi2023,Balabdaoui2024,Walz2024}. Methods based on the distribution or quantile function model individual probabilities or quantiles, often with parametric dependence on $X$, and can be viewed as semiparametric approaches \citep{Foresi1995,Koenker2005,Chernozhukov2013}. A recent overview of distributional regression is given by \citet{Kneib2023}.
These classical approaches combine distributional modeling with dimension reduction, but are often limited in their ability to handle complex, high-dimensional data. Motivated by such settings, machine learning methods have become increasingly popular, including distributional random forests \citep{Cevid2022}, conditional generative adversarial networks \citep{Mirza2014}, variational autoencoders \citep{Kingma2013}, and conditional stochastic interpolation \citep{Huang2023}. While these methods offer great flexibility and scalability, they typically focus on conditional distribution estimation and do not explicitly recover sufficient low-dimensional representations of the covariates.

Sufficient dimension reduction aims at finding a lower-dimensional representation $e(X)$ of the covariates such that $Y \indpt X \mid e(X)$, thereby containing all relevant information for predicting $Y$ in $e(X)$. Earlier work focused on linear functions, beginning with inverse regression proposed by \citet{Li1991}; see \citet{Li2018} for an overview. In the nonlinear case, \citet{Lee2013} developed a general theory for sufficient dimension reduction based on dimension reduction $\sigma$-fields, which are sub-$\sigma$-fields $\mathcal{G} \subseteq \sigma(X)$ for which
\[
    Y \indpt X \mid \mathcal{G}.
\]
A sufficient dimension reduction $e(X)$ is not identifiable, since any injective transformation of $e(X)$ is still sufficient; however, there exists a unique, minimal dimension reduction $\sigma$-field under mild conditions \citep[Theorem 1]{Lee2013}. Popular methods for nonlinear sufficient dimension reduction include kernel embeddings \citep[e.g.,][]{Yeh2009, LiEtAl2011}, random forests \citep{Loyal2022, Dai2024}, and neural network based methods that optimize dependence measures between $e(X)$ and $Y$ as target function, such as distance covariance \citep{Huang2024, JiaoEtAl2024, GeZhouHuang2025}, or generalized martingale difference divergence \citep{Chen2024}. None of these approaches considers the problem of generating samples for $Y$ based on the dimension reduction $e(X)$.

Existing methods that share some similarity to our approach are by \citet{Liang2022, XuEtAl2025, TangLi2025}. \citet{Liang2022} propose a method based on stochastic neural networks, which, like ours, inserts random noise into layers of the neural network for a stochastic approximation of the data. However, their model assumes additive noise and a layer-wise structure for the function $e$ \citep[Equation 6]{Liang2022}, and their method yields a random dimension reduction, depending on Gaussian noise inserted into the network. \citet{XuEtAl2025} propose a method based on conditional stochastic interpolation, which learns an interpolation path from Gaussian noise to the terminal distribution $\cond{Y}{X=x}$. Under suitable regularity conditions, this yields a sufficient dimension reduction and a consistent estimation of the conditional distribution. In contrast, our approach directly targets conditional predictive accuracy through a loss function, and only requires a single forward pass through the learned model for the generation of samples. \citet{TangLi2025} characterize sufficiency of $e(X)$ through the moment conditions $\mathbb{E}[g(Y,t) | e(X)] = \mathbb{E}[g(Y,t) | X]$, for a family of functions $g(y,t)$ indexed by a parameter $t \in I \subset \mathbb{R}$, for univariate $Y$ and without predicting $Y$. Section \ref{sec:test_functions} gives the close connection between their approach and minimizing the energy score.

Our method builds on two generative methods that are trained by minimizing the energy score. Engression \citep{ShenMeinshausen2025} fits a sampling mechanism $d(X, \eta)$ for $\cond{Y}{X}$ without dimension reduction and is the special case $e(x) = x$ of our method. Distributional principal autoencoders \citep{ShenMeinshausen2024} learn a low-dimensional representation of $X$ from which $X$ itself is generated, and our method is equivalent to them in the special case $X = Y$. However, the notion of sufficiency in the regression setting considered here is not sensible in unsupervised dimension reduction, where it would mean that $X \indpt X \mid e(X)$, i.e., $e(X) \in \mathbb{R}^q$ is a lossless compression of $X \in \mathbb{R}^p$.

During the preparation of the current manuscript, we came across the very recent work by \citet{TanLiXue2026}, who concurrently and independently propose essentially the same method. Their analysis focuses on finite-sample convergence rates and on the parameter efficiency of the bottleneck architecture, whereas our characterization of sufficiency holds for general strictly proper scoring rules, and we also treat multi-environment data and classification.

\section{Proper Scoring Rules and Sufficiency} \label{sec:scoring}

\subsection{Notation}
We assume that $X \in \mathbb{R}^p$ and $Y \in \mathbb{R}^m$, where the underlying probability space is always equipped with the Borel $\sigma$-field. The marginal distribution of $X$ is denoted by $P_X$, and the conditional distributions of $Y$ given $e(X)$ with a measurable function $e\colon \mathbb{R}^p \rightarrow \mathbb{R}^q$ by $\cond{Y}{e(X)}$. For a composition of the form $d(e(X),\eta)$ with $\eta$ independent of $X$, we use $P_{e,d,X}$ to abbreviate the conditional distribution of $d(e(X),\eta)$ given $X$. We write $\mathbb{E}_{U \sim P}[\cdot]$ for the expectation over $U$ with distribution $P$ whenever it is necessary to indicate over which distribution the expectation is calculated. A measurable function $e\colon \mathbb{R}^p \rightarrow \mathbb{R}^q$ is called a dimension reduction of $X$; we interchangeably use the term ``dimension reduction'' to refer to $e$ or to the random vector $e(X)$. A dimension reduction is sufficient if $Y \indpt X \mid e(X)$, or, equivalently, $\cond{Y}{X} = \cond{Y}{e(X)}$ almost surely. 

\subsection{Proper Scoring Rules, Divergence, and Entropy} \label{sec:scoring_rules}

A proper scoring rule \citep{GneitingRaftery2007,WaghmareZiegel2025} is a function $S = S(P, Y)$ for probability measures $P$ and outcomes $Y$, such that
\begin{equation*}
    \mathbb{E}_{Y \sim P}[S(P,Y)] \leq \mathbb{E}_{Y\sim P}[S(Q,Y)]
\end{equation*}
for all $P,Q$ in a certain set $\mathcal{P}$; it is strictly proper if equality only holds for $P = Q$. Well-known examples are the logarithmic score, defined as $S(P,Y) = -\log(f_P(Y))$, where $f_P$ is the density of $P$, or the energy score \eqref{eq:energy_score}. The latter is strictly proper with respect to all distributions $P$ such that $\mathbb{E}_P[\|Z\|] < \infty$.
Associated with every proper scoring rule there is a divergence
\[
    D(P,Q) = \mathbb{E}_P[S(Q,Y)] - \mathbb{E}_P[S(P,Y)] 
\]
which is non-negative and equals zero only if $P = Q$, if $S$ is strictly proper. Divergence measures of proper scoring rules are not metrics, but often behave like a squared metric on probability distributions \citep{WaghmareZiegel2025}. The entropy is
\[
    H(P) = \mathbb{E}_P[S(P,Y)],
\]
and, intuitively, measures how diffuse the distribution $P$ is.

\subsection{Characterization of Sufficiency via Proper Scoring Rules}

We now present our main result on the connection between proper scoring rules and sufficiency. Suppose that we have a candidate dimension reduction $e(X)$ and want to verify if it is sufficient, which amounts to checking $Y \indpt X \mid e(X)$. It turns out that simply comparing the loss in terms of a strictly proper scoring rule answers this question.

\begin{theorem} \label{thm:minimizer_sufficient}
Let $S$ be a strictly proper scoring rule with respect to a set of probability measures $\mathcal{P}$ on $\mathbb{R}^m$. Assume that $P_{Y|X} \in \mathcal{P}$ almost surely and that $\mathbb{E}[|S(P_{Y|X},Y)|] < \infty$. Then for all measurable $e \colon \mathbb{R}^p \rightarrow \mathbb{R}^q$, it holds that
\[
    \mathbb{E}[S(P_{Y|e(X)}, Y)] \geq \mathbb{E}[S(P_{Y|X}, Y)],
\]
with equality if and only if $e$ is sufficient.
\end{theorem}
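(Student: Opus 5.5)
Our approach conditions on $X$ and applies propriety pointwise. Write $P_X^* = P_{Y|X}$ and $Q_X = P_{Y|e(X)}$. Since $e(X)$ is $\sigma(X)$-measurable, $Q_X$ is a function of $X$. The tower property therefore gives
\[
\mathbb{E}[S(Q_X, Y)] - \mathbb{E}[S(P^*_X, Y)] = \mathbb{E}\bigl[\, \mathbb{E}_{Y \sim P^*_X}[S(Q_X,Y)] - \mathbb{E}_{Y\sim P^*_X}[S(P^*_X,Y)] \,\bigr] = \mathbb{E}[D(P^*_X, Q_X)].
\]
Here $D$ is the divergence from Section \ref{sec:scoring_rules}. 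To justify this, we use a regular conditional distribution of $Y$ given $X$: for any jointly measurable $g(x,y)$ that is nonnegative or integrable, $\mathbb{E}[g(X,Y)\mid X] = \int g(X,y)\,P_{Y|X}(dy)$. We apply this with $g(x,y) = S(Q_x, y)$, which requires measurability of $(x,y)\mapsto S(Q_x,y)$. We take this as part of the standing assumption that the expectations in the statement make sense.

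The one technical point is integrability. $\mathbb{E}|S(P^*_X,Y)|<\infty$ is assumed, but $\mathbb{E}[S(Q_X,Y)]$ may a priori be $+\infty$ or ill-defined. We handle this by splitting $S(Q_X,Y) = S(P^*_X,Y) + \bigl(S(Q_X,Y) - S(P^*_X,Y)\bigr)$. By propriety, the conditional expectation of the second term given $X$ is $D(P^*_X,Q_X) \ge 0$. We also need $Q_X \in \mathcal{P}$ almost surely. This holds if $\mathcal{P}$ is convex, because $Q_X$ is a mixture of the $P^*_{X'}$ over $X'$ with $e(X')=e(X)$. Otherwise we assume it implicitly so that $S(Q_X,\cdot)$ is defined. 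With these points, $\mathbb{E}[S(Q_X,Y)]$ is well defined in $(-\infty,\infty]$ and equals $\mathbb{E}[S(P^*_X,Y)] + \mathbb{E}[D(P^*_X,Q_X)]$. This proves the inequality.

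For the equality case, the identity above shows that equality holds iff $\mathbb{E}[D(P^*_X,Q_X)]=0$. Because $D \ge 0$, this is equivalent to $D(P^*_X,Q_X)=0$ almost surely. By strict propriety, that in turn is equivalent to $P_{Y|X} = P_{Y|e(X)}$ almost surely, which is exactly the definition of sufficiency of $e$. Conversely, if $e$ is sufficient then $Q_X = P^*_X$ almost surely, so the two expected scores agree trivially.

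I expect the main obstacle to be measure-theoretic rather than conceptual. The difficulty is making the pointwise-in-$x$ conditioning argument rigorous, namely the measurability of $x \mapsto S(Q_x,\cdot)$ and the handling of possibly infinite or undefined expectations. I would deal with this as above, writing the difference as a nonnegative conditional quantity so that only the assumed integrability of $S(P_{Y|X},Y)$ is needed.
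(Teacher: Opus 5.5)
Your proof is correct and is essentially the paper's own argument: condition on $X$, apply (strict) propriety to $P_{Y|X}$ versus $P_{Y|e(X)}$ pointwise in $X$, and conclude that equality holds iff the two conditional laws agree almost surely, i.e.\ iff $e$ is sufficient. You are more careful about measure theory than the paper, which leaves these points implicit, but two of your side claims need correcting. First, convexity of $\mathcal{P}$ only gives closure under finite mixtures: for the energy score with $Y=X$ standard Cauchy and constant $e$, all hypotheses hold, yet $P_{Y|e(X)}$ has no first moment, so $P_{Y|e(X)}\in\mathcal{P}$ genuinely has to be assumed. Second, a nonnegative conditional mean of $S(P_{Y|e(X)},Y)-S(P_{Y|X},Y)$ does not by itself make the unconditional expectation exist, so $\mathbb{E}[S(P_{Y|e(X)},Y)]$ is best read as the iterated expectation, as in the paper.
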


\begin{proof}
Because $S$ is strictly proper and $\mathbb{E}[|S(\cond{Y}{X},Y)|] < \infty$, we have
\begin{align*}
    \mathbb{E}[S(\cond{Y}{e(X)}, Y)] & = \mathbb{E}_{X \sim P_X}\left[\mathbb{E}_{Y \sim \cond{Y}{X}}[S(\cond{Y}{e(X)}, Y) \mid X]\right] \\
    & \geq  \mathbb{E}_{X \sim P_X}\left[\mathbb{E}_{Y \sim \cond{Y}{X}}[S(\cond{Y}{X}, Y)]\right],
\end{align*}
with equality if and only if $\cond{Y}{e(X)} = \cond{Y}{X}$ almost surely, i.e., if and only if $e$ is sufficient \citep[see, e.g.,][Section 1 (b)]{AdragniCook2009}.
\end{proof}

The statement of Theorem \ref{thm:minimizer_sufficient} also holds for $e(X)$ replaced by any sub-$\sigma$-field of $\sigma(X)$, i.e., in the setting by \citet{Lee2013}. The fact that reducing $X$ by a coarsening $e(X)$ increases the expected score was already known in the statistical forecasting literature \citep[Theorem 3]{Holzmann2014}, but, to the best of our knowledge, the connection to sufficient dimension reduction has not been exploited so far. Compared to existing results on the characterization of sufficiency for nonlinear dimension reduction, our theorem is relatively short and simple to state, because minimality of the expected score directly relates to sufficiency, via conditioning of the expectation on $X$. Thus, sufficiency is characterized entirely in terms of predictive optimality. By comparison, \citet{TangLi2025} require precise definitions of when a function class characterizes sufficiency, and \citet{XuEtAl2025} impose technical assumptions on the continuity of their conditional velocity field. This again highlights that proper scoring rules as an estimation criterion for distributional regression naturally align with the definition of sufficiency. 
To provide more intuition for our approach, we reformulate Theorem \ref{thm:minimizer_sufficient} in terms of the divergence and entropy.

\begin{corollary} \label{cor:divergence_entropy}
Under the assumptions of Theorem \ref{thm:minimizer_sufficient}, for a strictly proper scoring rule with divergence $D$ and entropy $H$, a dimension reduction $e$ is sufficient if and only if $\mathbb{E}[D(\cond{Y}{e(X)}, \cond{Y}{X})] = 0$, or if and only if $\mathbb{E}[H(\cond{Y}{e(X)})] = \mathbb{E}[H(\cond{Y}{X})]$.
\end{corollary}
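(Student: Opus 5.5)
The plan is to reduce both characterizations to Theorem~\ref{thm:minimizer_sufficient} and to the definition of strict propriety, using only the tower property of conditional expectation. Throughout I take, as Theorem~\ref{thm:minimizer_sufficient} implicitly does, that $\cond{Y}{e(X)}$ also lies in $\mathcal{P}$ almost surely, so that $D$ and $H$ are well defined at it. This holds automatically if $\mathcal{P}$ is closed under mixtures, since $\cond{Y}{e(X)}$ is the conditional mixture of $\cond{Y}{X}$ over $X$ given $e(X)$.

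\textbf{Entropy statement.} First I will identify expected entropies with expected scores. Conditioning on $X$ gives
\[
\mathbb{E}[H(\cond{Y}{X})] = \mathbb{E}\bigl[\mathbb{E}[S(\cond{Y}{X},Y)\mid X]\bigr] = \mathbb{E}[S(\cond{Y}{X},Y)],
\]
because $Y \mid X \sim \cond{Y}{X}$. Conditioning on $e(X)$ instead, and using $Y \mid e(X) \sim \cond{Y}{e(X)}$ with $\cond{Y}{e(X)}$ being $\sigma(e(X))$-measurable, gives
\[
\mathbb{E}[H(\cond{Y}{e(X)})] = \mathbb{E}[S(\cond{Y}{e(X)},Y)].
\]
The equivalence then follows directly from the equality case of Theorem~\ref{thm:minimizer_sufficient}.

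If $\mathbb{E}[S(\cond{Y}{e(X)},Y)]$ is $+\infty$, the two expected entropies differ. In that case $e$ cannot be sufficient, since sufficiency would make this quantity equal to the finite value $\mathbb{E}[S(\cond{Y}{X},Y)]$. So the equivalence also holds in this case.

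\textbf{Divergence statement.} For $P,Q\in\mathcal{P}$, strict propriety gives $D(P,Q)\ge 0$, with equality if and only if $P=Q$. Hence $\mathbb{E}[D(\cond{Y}{e(X)},\cond{Y}{X})]=0$ holds if and only if the nonnegative random variable $D(\cond{Y}{e(X)},\cond{Y}{X})$ vanishes almost surely. That is equivalent to $\cond{Y}{e(X)}=\cond{Y}{X}$ almost surely, which is exactly sufficiency. The argument is the same whichever argument order of $D$ is used.

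It is also worth noting the identity obtained by conditioning on $X$:
\[
\mathbb{E}[D(\cond{Y}{X},\cond{Y}{e(X)})] = \mathbb{E}[S(\cond{Y}{e(X)},Y)] - \mathbb{E}[S(\cond{Y}{X},Y)].
\]
This shows the divergence formulation is literally the gap in Theorem~\ref{thm:minimizer_sufficient}.

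\textbf{Expected obstacle.} The only delicate points are measure-theoretic. One needs a regular conditional distribution so that $H(\cond{Y}{e(X)})$ and $D(\cdot,\cdot)$ are measurable random variables. One also needs membership of $\cond{Y}{e(X)}$ in $\mathcal{P}$, and the possibly infinite expectations must be handled as described above.
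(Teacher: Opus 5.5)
Your proof is correct and follows the reasoning the paper leaves implicit. The tower property identifies $\mathbb{E}[H(\cond{Y}{e(X)})]$ and $\mathbb{E}[H(\cond{Y}{X})]$ with the two expected scores in Theorem~\ref{thm:minimizer_sufficient}, and strict propriety gives the divergence characterization; your pointwise argument is the right one for the argument order as stated, since for a non-symmetric $D$ the gap in Theorem~\ref{thm:minimizer_sufficient} equals $\mathbb{E}[D(\cond{Y}{X},\cond{Y}{e(X)})]$ rather than $\mathbb{E}[D(\cond{Y}{e(X)},\cond{Y}{X})]$.
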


The interpretation in terms of the divergence is straightforward, in that zero divergence implies equality of $\cond{Y}{X} = \cond{Y}{e(X)}$, almost surely. When minimizing a scoring rule $S$ as target function for distributional regression, one can expect convergence of the conditional distributions in the corresponding divergence measure $D$; we prove this in Theorem \ref{thm:consistencyMULTIY} for our method with the energy score. For the entropy, if $e(X)$ is not sufficient, then there is randomness in $Y$ that can be explained by $X$ even after conditioning on $e(X)$, causing $\cond{Y}{e(X)}$ to be more diffuse than $\cond{Y}{X}$, as measured with respect to $H$. So only a sufficient $e(X)$ achieves minimal entropy.

Theorem \ref{thm:minimizer_sufficient} and Corollary \ref{cor:divergence_entropy} are general, model-agnostic characterizations of sufficiency in terms of prediction optimality. In Section \ref{sec:srdr}, we apply these results to propose a practical estimator based on generative models.

\subsection{Sufficient Dimension Reduction with the Energy Score} \label{sec:test_functions}

The scoring rule of main interest in this article is the energy score \eqref{eq:energy_score}, whose divergence measure equals the so-called energy distance \citep{SzekelyRizzo2023}, up to a scaling factor of $1/2$,
\begin{align}
    \mathcal{D}^2(P,Q) & = \mathbb{E}_{Z \sim P, Y \sim Q}[\|Y - Z\|] - \frac{1}{2}\mathbb{E}_{Z, Z' \sim P}[\|Z - Z'\|] - \frac{1}{2}\mathbb{E}_{Y, Y' \sim Q}[\|Y - Y'\|] \nonumber \\
    & = \frac{\Gamma((m+1)/2)}{2\pi^{(m+1)/2}} \int_{\mathbb{R}^m} \frac{|\phi_P(t) - \phi_Q(t)|^2}{\|t\|^{m+1}} \, dt, \label{eq:energy_divergence}
\end{align}
with the characteristic functions $\phi_P(t) = \mathbb{E}_P[\exp(it^{\top}Z)]$, $\phi_Q(t) = \mathbb{E}_Q[\exp(it^{\top}Z)]$. For the energy score, there is a close connection between our Theorem \ref{thm:minimizer_sufficient} and the approach of \citet{TangLi2025}. Analogously to formula \eqref{eq:energy_divergence} for the divergence, and since $\mathrm{ES}(P,Y) = \mathcal{D}^2(P,\delta_Y)$ for the point mass $\delta_Y$ at $Y$, the energy score can be written as a squared distance on the characteristic functions,
\[
    \mathrm{ES}(P,Y) = \frac{\Gamma((m+1)/2)}{2\pi^{(m+1)/2}} \int_{\mathbb{R}^m} \frac{|\phi_P(t) - \exp(it^{\top}Y)|^2}{\|t\|^{m+1}} \, dt.
\]
Conditioning on $X$ as in the proof of Theorem \ref{thm:minimizer_sufficient} gives $\mathbb{E}[\mathrm{ES}(\cond{Y}{e(X)}, Y)] - \mathbb{E}[\mathrm{ES}(\cond{Y}{X}, Y)] = \mathbb{E}[\mathcal{D}^2(\cond{Y}{e(X)}, \cond{Y}{X})]$, which by Corollary \ref{cor:divergence_entropy} is zero if and only if $e$ is sufficient. Applying \eqref{eq:energy_divergence} conditionally on $X$, with the conditional characteristic functions $\mathbb{E}[\exp(it^{\top}Y) \mid e(X)]$ and $\mathbb{E}[\exp(it^{\top}Y) \mid X]$ in place of $\phi_P$ and $\phi_Q$, yields
\[
    \mathbb{E}[\mathcal{D}^2(\cond{Y}{e(X)}, \cond{Y}{X})] = \frac{\Gamma((m+1)/2)}{2\pi^{(m+1)/2}} \int_{\mathbb{R}^m} \frac{\mathbb{E}\left[\left|\mathbb{E}[\exp(it^{\top}Y) \mid e(X)] - \mathbb{E}[\exp(it^{\top}Y) \mid X]\right|^2\right]}{\|t\|^{m+1}} \, dt.
\]
Hence, the expected energy score is minimal, and $e$ is sufficient, if and only if
\begin{align*}
    & \mathbb{E}\left[g(Y, t) | e(X)\right] = \mathbb{E}\left[g(Y, t) | X\right]
\end{align*}
almost surely for all $t \in \mathbb{R}^m$ and for both $g = g_1$ and $g = g_2$, where $g_1(y, t) = \cos(t^{\top}y)/\|t\|^{(m+1)/2}$ and $g_2(y, t) = \sin(t^{\top}y)/\|t\|^{(m+1)/2}$ are the real and imaginary parts of $\exp(it^{\top}y)/\|t\|^{(m+1)/2}$. \citet{TangLi2025} propose to perform sufficient dimension reduction by minimizing
\[
    \mathbb{E}\left[\int_I (h(e(X),t) - g(Y,t))^2\, d\mu(t)\right]
\]
over $e$ and $h$, for a weighting measure $\mu$ on $I$ and a class of test functions $g$ that is sufficiently rich to characterize the conditional distribution of $Y$ given $X$. At the population level the minimizing $h$ is $h(e(X),t) = \mathbb{E}[g(Y,t) | e(X)]$, so this contains our energy score target above as a special case, up to the constant factor in \eqref{eq:energy_divergence} and an additive term that does not depend on $e$, when $\mu$ is the Lebesgue measure on $I = \mathbb{R}^m$ and the objectives for $g = g_1$ and $g = g_2$ are added.

The crucial difference to \citet{TangLi2025} is that the parameter $t$ of our test functions is in $\mathbb{R}^m$, whereas they require a bounded subset of $\mathbb{R}$. Moreover, \citet{TangLi2025} discretize the integral over $t$ to a finite grid $t_1, \dots, t_M$, and estimate a neural network with output dimension $M$. The network approximates the expectation $(\mathbb{E}[g(Y,t_1)| X], \dots,  \mathbb{E}[g(Y,t_M)| X])$, but without explicit dependence on the parameters $t_j$. In contrast, we estimate the distribution $\cond{Y}{X}$, which yields the expectation of the test functions for any desired $t$, and easily scales to multivariate outcomes. While the characteristic function reformulation is helpful for theoretical analyses of consistency, we do not apply it in the practical implementation, since the expectation-based formulation of the energy score is computationally more efficient.

\section{Sufficiently Reduced Distributional Regression} \label{sec:srdr}

\subsection{Population objective}
We now turn the characterization of sufficiency through proper scoring rules into an estimation method based on generative models. It is well known that for any $(X,Y)$, there exists a function $G$ and a noise variable $\eta \in \mathbb{R}^s$ independent of $X$ such that $Y = G(X, \eta)$ almost surely, where $\eta$ can have any continuous reference distribution, such as Gaussian or uniform. See, e.g., \citet[Lemma 2.1]{Zhou2023} or \citet[Equation (1)]{Song2025}. The following lemma extends this result to a setting with dimension reduction. We tacitly assume that the underlying probability space for $(X,Y)$ is atomless, meaning that there exists a random variable with distribution $\mathrm{Unif}(0,1)$ defined on the probability space, which is always possible on a suitable extension if necessary.

\begin{lemma} \label{lem:generative_sufficiency}
Let $(X, Y) \in \mathbb{R}^{p+m}$ have an arbitrary distribution on $\mathbb{R}^{p+m}$ equipped with the Borel $\sigma$-field $\mathcal{B}(\mathbb{R}^{p+m})$. For any integers $q, s \geq 1$ and for any atomless distribution $\nu$ on $(\mathbb{R}^s, \mathcal{B}(\mathbb{R}^s))$, there exist measurable $e \colon \mathbb{R}^p \rightarrow \mathbb{R}^q$, $d\colon \mathbb{R}^{q+s} \rightarrow \mathbb{R}^m$ and a random variable $\eta \sim \nu$ such that
\[
    Y = d(e(X),\eta) \quad \text{almost surely}.
\]
\end{lemma}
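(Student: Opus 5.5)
The plan is to reduce the statement to the known noise-outsourcing result, \citet[Lemma 2.1]{Zhou2023}, by compressing $X$ into a single real number with a Borel isomorphism. Every uncountable standard Borel space is Borel isomorphic to $\mathbb{R}$. So there is a bijection $\iota\colon \mathbb{R}^p \to \mathbb{R}$ with $\iota$ and $\iota^{-1}$ both Borel measurable.

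\textbf{Step 1 (the map $e$).} Set $e(x) = (\iota(x), 0, \dots, 0) \in \mathbb{R}^q$. This is measurable and works for any $q \geq 1$. Since $\iota$ is injective with a measurable inverse, $\sigma(e(X)) = \sigma(X)$. Hence $e(X)$ is trivially sufficient, and $X = \iota^{-1}(e_1(X))$ is a measurable function of $e(X)$.

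\textbf{Step 2 (the noise $\eta$).} Apply the noise-outsourcing lemma to the pair $(X,Y)$. On the atomless (possibly extended) probability space, it gives a measurable $G\colon \mathbb{R}^{p} \times [0,1] \to \mathbb{R}^m$ and $U \sim \mathrm{Unif}(0,1)$ independent of $X$ with $Y = G(X,U)$ almost surely. This is Kallenberg's transfer/randomization theorem: build the conditional distribution of $Y$ given $X$ through a measurable kernel, then realize it by coupling. We need $\eta \sim \nu$ with $\nu$ an arbitrary atomless law on $\mathbb{R}^s$. An atomless Borel probability measure on a standard Borel space is isomorphic modulo null sets to Lebesgue measure on $[0,1]$. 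So there is a measurable $\tau\colon \mathbb{R}^s \to [0,1]$ with $\tau_\# \nu = \mathrm{Unif}(0,1)$, and a measurable $\kappa\colon[0,1]\to\mathbb{R}^s$ with $\kappa_\#\mathrm{Unif}(0,1) = \nu$ and $\tau(\kappa(u)) = u$ for almost every $u$. Set $\eta = \kappa(U)$. Then $\eta \sim \nu$, $\eta$ is independent of $X$, and $U = \tau(\eta)$ almost surely.

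\textbf{Step 3 (the map $d$).} Define $d(z,v) = G(\iota^{-1}(z_1), \tau(v))$ for $z \in \mathbb{R}^q$ and $v \in \mathbb{R}^s$. This is a composition of measurable maps. The composition with $\iota^{-1}$ is defined everywhere because $\iota$ is a bijection onto $\mathbb{R}$. Then
\[
d(e(X),\eta) = G(X, \tau(\kappa(U))) = G(X,U) = Y \quad \text{almost surely}.
\]

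\textbf{Main obstacle.} The step needing the most care is Step 2: transporting the uniform noise to an arbitrary atomless $\nu$ in a way that is invertible almost surely, rather than merely pushing $\mathrm{Unif}(0,1)$ forward to $\nu$. I would handle it with the measure isomorphism theorem for standard probability spaces.

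If one prefers to avoid it, an alternative is to obtain $U$ directly from $\eta$. Take any $\eta \sim \nu$ independent of $(X,Y)$-relevant randomness. Because $\nu$ is atomless, some measurable $\tau$ has $\tau(\eta) \sim \mathrm{Unif}(0,1)$; for instance, take the CDF of a coordinate, or of a Borel-isomorphic real image, which is continuous since $\nu$ is atomless. Then apply the outsourcing lemma with the uniform variable $\tau(\eta)$ constructed on the extension.
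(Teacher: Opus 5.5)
Your main argument is correct and is essentially the paper's proof. A Borel isomorphism gives $e$ with $\sigma(e(X))=\sigma(X)$, noise outsourcing gives $Y=G(X,U)$ with $U\sim\mathrm{Unif}(0,1)$ independent of $X$, and an almost surely invertible transport between $\mathrm{Unif}(0,1)$ and $\nu$ finishes the job. The only difference is that the paper builds this transport by hand: it sets $\eta=h^{-1}(F^{-1}(U))$ and recovers $U=F(h(\eta))$, using a Borel isomorphism $h\colon\mathbb{R}^s\to\mathbb{R}$ and the continuous CDF $F$ of $h_{\#}\nu$, where you cite the isomorphism theorem for atomless standard probability spaces.

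Your fallback in the last paragraph does not work as written, for two reasons. First, a coordinate marginal of an atomless $\nu$ can have atoms, e.g.\ $\nu$ uniform on $\{0\}\times[0,1]$ for $s=2$. Second, and more importantly, the outsourced noise cannot be prescribed in advance: if $\eta$ is independent of $(X,Y)$, then $Y=d(e(X),\eta)$ forces $Y$ to be a function of $X$. The construction must therefore go from the outsourced $U$ to $\eta$, as in your main Step 2 and in the paper.
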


The above result shows that a sufficient dimension reduction exists for any dimension $q$ of $e(X)$, even for $q = 1$, independently of the dimensions of $X$ and $Y$. This is due to the fact that there exist bijective measurable --- but complicated --- mappings between any uncountable Borel spaces \citep[see, e.g.,][]{RaoSrivastava1994}. Clearly, the functions $d, e$ in Lemma \ref{lem:generative_sufficiency} can only be approximated well in practice if they satisfy additional properties, like smoothness, as we impose for our consistency result in Section \ref{sec:guarantees}. Such restrictions rule out arbitrary dimensions $s, q$ and make the formulation $Y = d(e(X),\eta)$ a model assumption.

Motivated by Lemma \ref{lem:generative_sufficiency}, we propose to estimate $\cond{Y}{X=x}$ by approximating a generator of the form $d(e(x),\eta)$, with distribution $P_{e,d,x}$, which is the push-forward measure of $\eta$ under $\eta \mapsto d(e(x),\eta)$. Combined with strictly proper scoring rules, our objective function for distributional regression and sufficient dimension reduction then becomes
\[
    \min_{(e,d)} \ \mathbb{E}[S(P_{e,d,X}, Y)],
\]
which requires specifying a function class for $(e,d)$ and the scoring rule $S$.

\subsection{Method Description}

In our method, we use the energy score for estimation, because it admits the computationally simple unbiased estimator
\[
    \widehat{\mathrm{ES}}_N(P_{e,d,x},Y) = \frac{1}{N}\sum_{i=1}^N \|d(e(x),\eta_i) - Y\| - \frac{1}{2N(N-1)}\sum_{i,j=1}^N \|d(e(x),\eta_i) - d(e(x),\eta_j)\|,
\]
where $\eta_1, \dots, \eta_N$ are sampled from the reference distribution for $\eta$. Optimization of the energy score can be done efficiently via stochastic gradient descent, like in the engression method of \citet{ShenMeinshausen2025}. To parametrize the functions $e, d$, we choose deep or wide rectified linear unit (ReLU) networks. The schematic description of our estimation method is in Algorithm \ref{alg:srdr}. For given dimension $q$ of $e$ and noise $\eta$ with dimension $s$, we initialize the functions $e$, $d$, pass samples $X_i$ and noise variables $\eta_{i,j}$, $i = 1, \dots, n$, $j = 1, \dots, N$ to generate outcomes $d(e(X_i),\eta_{i,j})$ and approximate the energy score, and optimize the neural network parameters via gradient descent until a stopping criterion is reached. In practice, the average over the training sample is replaced by an average over a minibatch in each iteration.
For the noise $\eta$, sensible distributions are independent standard Gaussian variables or the uniform distribution on an interval $[-c, c]$ for some bound $c > 0$. The former is the standard choice in generative modeling and also used by \citet{ShenMeinshausen2025}, whereas the latter is in line with our boundedness assumptions for consistency in Section \ref{sec:guarantees}, but does not produce noticeably different results from Gaussian noise in practice.

\begin{algorithm}[t]
\caption{Sufficiently Reduced Distributional Regression}
\label{alg:srdr}

\KwIn{
\begin{tabular}{@{}ll@{}}
$(q,s)$ 
& dimensions for $e$ and $\eta$ \\

$\theta_e, \theta_d$ 
& hyperparameters (width, depth) for $e$, $d$ \\

$(X_1, Y_1), \dots, (X_n, Y_n)$ 
& training sample \\

$N$ 
& number of samples for energy score approximation
\end{tabular}
}

\KwOut{estimated functions $\hat{e}, \hat{d}$}

\textbf{Initialization:} initialize $e_1, d_1$

\For{$t = 1, 2\dots,$ training iterations, until a stopping criterion at $t = T$ is reached}{
    calculate dimension reduction $e_t(X_1), \dots, e_t(X_n)$\;
    generate noise $\eta_{i,j}$, $i = 1, \dots, n$, $j = 1, \dots, N$\;
    generate samples $d_t(e_t(X_i),\eta_{i,j})$ for $i = 1, \dots, n$, $j = 1, \dots, N$, and compute
    \[
        \begin{aligned}
            & \frac{1}{n}\sum_{i=1}^n \Big(\frac{1}{N}\sum_{j=1}^N \|d_{t}(e_{t}(X_i),\eta_{i,j}) - Y_i\|  - \frac{1}{2N(N-1)}\sum_{k,j=1}^N
            \|d_{t}(e_{t}(X_i),\eta_{i,j}) - d_{t}(e_{t}(X_i),\eta_{i,k})\|\Big);
        \end{aligned}\;
    \]
    perform a gradient step on the parameters of $e_t, d_t$ with the above loss function\;
}
\Return $\hat{e} = e_T$, $\hat{d} = d_T$
\end{algorithm}

\subsection{Dimension Selection} \label{sec:dimension_selection}

In this section, we give guidance on selecting the dimensions $q$ of $e(X)$ and $s$ of the noise $\eta$. In general, dimension selection for nonlinear sufficient dimension reduction is more involved than for linear dimension reduction. The identifiable object is the dimension reduction $\sigma$-field, whose ``size'' is not directly related to the dimension of the vector $e(X) \in \mathbb{R}^q$; indeed, as demonstrated by Lemma \ref{lem:generative_sufficiency}, even one-dimensional $e(X)$ can be sufficient if $e,d$ are allowed to be arbitrary measurable functions.

The dimension $s$ of the noise $\eta$ is generally a less sensitive parameter in practice, and it intuitively measures the intrinsic dimension of the outcome $Y$ given $X$; a reasonably large $s$, such as the default $s = 100$ in \citet{ShenMeinshausen2025}, is often sufficient for most applications. For the dimension of $e(X)$, since our target function directly measures predictive accuracy of the samples $d(e(X),\eta)$ as a probabilistic forecast for $Y|X$, the selection of $q$ and $s$ can be integrated into the standard workflow of neural network hyperparameter tuning on a hold-out validation data set $(X_i', Y_i'), i = 1, \dots, N'$. That is, for a grid $q_1 < \dots < q_u$ and $s_1 < \dots < s_v$, one approximates
\[
    (q_k, s_l) \mapsto  \frac{1}{N'}\sum_{i=1}^{N'} \mathrm{ES}(\hat{P}_{X_i'}^{[q_k, s_l]}, Y_i'),
\]
where $\hat{P}_{X_i'}^{[q_k, s_l]}$ denotes the distribution of $\hat{d}(\hat{e}(X_i'),\eta_i)\mid \hat{e}(X_i')$, with $\hat{e}(x) \in \mathbb{R}^{q_k}$ and $\eta_i \in \mathbb{R}^{s_l}$. A reasonable choice is then to select small $q_k, s_l$ for which the above mapping does not further decrease for $k' > k, l' > l$. This is the elbow pattern seen in Figure \ref{fig:synthetic_sparse}: the out-of-sample score decreases up to the sufficient dimension $q^*$ and flattens afterwards, since a reduction of dimension $q > q^*$ is still sufficient and, by Remark \ref{rem:larger_dimension}, the consistency result continues to apply. An alternative criterion, based on a mixture loss over several dimensions, is described in Appendix \ref{app:mixture}.

\subsection{Extension to Multi-Environment Data} \label{sec:multi_env_data}

In some applications, the data is grouped into multiple heterogeneous environments $W \in \{1, \dots, K\}$, which may correspond to different data sources or populations, and one is interested in obtaining a dimension reduction that is sufficient across all of them. Such settings have been first considered for linear dimension reduction by \citet{Chiaromonte2002}, and more recently by \citet{JiaoEtAl2024,GeZhouHuang2025} for nonlinear models based on neural networks. Formally, we observe $(X, Y, W)$, and are interested in finding $e(X)$ such that
\[
    Y \indpt X \mid (e(X), W),
\]
meaning that sufficiency also holds conditional on the environment. Similarly to the single-environment case, this is equivalent to the existence of a noise variable $\eta$, which is independent of $X$ and $W$, and $K$ functions $d_1, \dots, d_K$ such that
\[
    (X,Y,W) = (X, d_W(e(X),\eta), W)
\]
almost surely. That is, there is a shared dimension reduction $e(X)$ that is valid for all environments, but the $d_W$ depend on the environment $W$, corresponding to partial sufficient dimension reduction as introduced by \citet{Chiaromonte2002}.

An advantage of the scoring-rule formulation for estimation of sufficient dimension reductions is that it easily extends to this multi-environment setting. For a strictly proper $S$, we have
\[
    \mathbb{E}\left[\sum_{k=1}^K S(\cond{Y_k}{e(X_k)}, Y_k)\right] \geq  \mathbb{E}\left[\sum_{k=1}^K S(\cond{Y_k}{X_k}, Y_k)\right],
\]
for $(X_k, Y_k) \sim \cond{(X,Y)}{W=k}$, with equality if and only if $\cond{Y_k}{e(X_k)} = \cond{Y_k}{X_k}$ for $k = 1, \dots, K$. So minimizing the aggregated error with a proper scoring rule over a joint dimension reduction and separate functions $d_k$ recovers a partial sufficient dimension reduction, ensuring independence of $Y$ and $X$ conditional on $(e(X), W)$. On a finite sample with the energy score, this corresponds to the minimization problem
\[
    (\hat{e}, \hat{d}_1, \dots, \hat{d}_K) \in \argmin_{e,d_1, \dots, d_K} \sum_{k=1}^K \sum_{i=1}^{n_k} \mathrm{ES}(P_{e,d_k,X_{i,k}}, Y_{i,k}),
\]
where $n_k$ is the sample size in environment $k$, and the energy score is approximated by sampling from the noise distribution as before. Algorithm \ref{alg:srdr} can be applied analogously in this setting. Consistency of the resulting estimator is established in Appendix \ref{sec:proof_multi}; the transfer learning experiment in Appendix \ref{app:transfer} illustrates the extension on real data.

\subsection{Extension to Classification Problems} \label{sec:classification}
Although SRDR is formulated for Euclidean responses, it can be applied to classification by representing the class label as a one-hot vector. For an $m$-class problem, the response is $Y \in \{v_1,\ldots,v_m\}\subset \mathbb{R}^m$, where $v_k$ is the \(k\)th standard basis vector. The conditional distribution of $Y \mid X$ is then characterized by the conditional class-probability vector
\[
    p(x) = \bigl(\mathbb{P}(Y=v_1\mid X=x),\ldots, \mathbb{P}(Y=v_m\mid X=x)\bigr).
\]
Thus, a dimension reduction $e(X)$ is sufficient for classification if and only if
\[
    \mathbb{P}(Y=v_k\mid X) = \mathbb{P}(Y=v_k\mid e(X)) \quad \text{almost surely}, \qquad k=1,\ldots,m.
\]

During implementation, we use the same SRDR objective, with a softmax output layer on $d$ that constrains the generator output to the probability simplex,
\[
    d(e(X),\eta)\in \Delta^{m-1}, \qquad
    \Delta^{m-1} = \left\{s\in[0,1]^m:\sum_{k=1}^m s_k=1\right\}.
\]
The energy score is computed between the generated simplex-valued vector and the observed one-hot response. This gives a continuous formulation of the classification problem: the model is trained using the same proper scoring rule objective, and $d$ generates probability vectors rather than discrete labels.

This formulation is consistent with the overall framework of SRDR, and strict propriety of the energy score implies that SRDR generated samples $\hat{d}(\hat{e}(X),\eta)$ should approximate the distribution of $Y \in \{v_1, \dots, v_m\}$, i.e., have roughly the same class probabilities. Although generated samples do not fall strictly within the discrete set $\{v_1, \dots, v_m\}$, their components are typically very close to $0$ or $1$ in practice. A unique predicted class label can be obtained by mapping $\mathbb{E}[\hat{d}(\hat{e}(x),\eta)]$ to a hard assignment vector $\hat{v}(x)$ whose non-zero entry corresponds to $\argmax_{k=1,\dots,m} \mathbb{E}[\hat{d}(\hat{e}(x),\eta)]$. In our empirical applications, this classification extension of SRDR is competitive with state-of-the-art methods for sufficient dimension reduction in classification problems. Moreover, the implementation is simple in the sense that essentially the same setup as for the regression SRDR can be applied, which is an advantage over more complex approaches for generative models in classification, such as \citet{Kusner2016}. The reason why the application to classification settings is simpler is that the energy score naturally supports both discrete and continuous variables, whereas methods based on likelihood (or KL-divergence) may encounter difficulties in training when the probabilities of certain outcomes converge to zero or one.

\section{Statistical Guarantees} \label{sec:guarantees}

\subsection{Assumptions} \label{sec:assumptions}

We prove consistency of our method for the estimation of the conditional distribution functions of $Y$ given $X$, showing that $P_{\hat{e}, \hat{d}, X}$ converges to $\cond{Y}{X}$. This also implies consistency of the dimension reduction, since the condition $\cond{Y}{e(X)} = \cond{Y}{X}$ characterizes sufficiency.

Our assumptions can be broadly grouped into model assumptions, regularity assumptions, and assumptions on estimation and approximation. The first assumption below is our model assumption; without any restrictions on the functions $e^*, d^*$, it is strictly speaking not an assumption, because there always exist functions such that the given representation for $Y$ holds.

\begin{assumption} \label{assumption:modelMULTIY}
There exist $e^*\colon \mathbb{R}^p \rightarrow \mathbb{R}^q$, $d^* \colon \mathbb{R}^{q+s} \rightarrow \mathbb{R}^m$ and $\eta \in \mathbb{R}^s$ independent of $X$ so that the outcome variable $Y \in \mathbb{R}^m$ satisfies
\[
    Y = d^*(e^*(X),\eta),
\]
where $\eta$ follows a known reference distribution. The training data consists of independent and identically distributed realizations $(X_1, Y_1), \dots, (X_n, Y_n)$ from this model.
\end{assumption}

The distribution of $\eta$ in Assumption \ref{assumption:modelMULTIY} is not further specified for our theoretical result, but one can think of it as the uniform distribution on some interval $[-B, B]$. The next group of assumptions are regularity assumptions on the distribution of $X$ and on $e^*, d^*$.

\begin{assumption} \label{assumption:supportMULTIY}
The support $\mathcal{X} \times \mathcal{Z}$ of $(X,\eta)$ is contained in a bounded set $[-B_x, B_x]^{p+s}$.
\end{assumption}

\begin{assumption} \label{assumption:encoderMULTIY}
The function $e^*$ has Lipschitz continuous components,
\[
    |e_j^*(x) - e_j^*(x')| \leq L_e\|x-x'\|, \ x, x' \in [-B_x, B_x]^p, \ j = 1, \dots, q.
\]
\end{assumption}

\begin{assumption} \label{assumption:decoderMULTIY}
The function $d^*$ has Lipschitz continuous components,
\[
    |d_j^*(z) - d_j^*(z')| \leq L_d\|z-z'\|, \ \ z, z' \in \mathbb{R}^{q+s}, \ j = 1, \dots, m.
\]
\end{assumption}

Assumptions \ref{assumption:supportMULTIY}, \ref{assumption:encoderMULTIY}, and \ref{assumption:decoderMULTIY} together imply that $Y$ is bounded almost surely, and we let $B_y$ be a constant such that
\[
    B_y \geq \max\{|d_j^*(e^*(x),\eta)|\colon (x,\eta) \in [-B_x, B_x]^{p+s}, j = 1, \dots, m\}.
\]
Such a boundedness assumption is comparable to existing work, see, e.g., \citet[Assumption (A1)]{Zhou2023}, \citet[Condition 1]{Song2025}, \citet{chen2026error}, and \citet{huang2026theoretical}. It also appears implicitly for certain test functions in \citet{TangLi2025}, e.g., for $g(y,t) = 1\{y \leq t\}$, which for bounded $t$ only characterize the conditional distributions of $Y$ in the case of bounded support.

Finally, we impose assumptions on the model class and the estimator. We define the class of ReLU networks as functions $f$ such that
\[
    f = L_{\ell + 1} \circ \sigma \circ L_{\ell} \circ \dots \circ L_2 \circ \sigma \circ L_1,
\]
where $\ell$ is the number of hidden layers,
\[
    L_j \colon \mathbb{R}^{r_{j-1}} \rightarrow \mathbb{R}^{r_j}, \ L_j(x) = A_jx + b_j, \ A_j \in \mathbb{R}^{r_j \times r_{j-1}}, \ b_j \in \mathbb{R}^{r_j}, \ j = 1, \dots, \ell + 1,
\]
with input dimension $p = r_0$, output dimension $q = r_{\ell + 1}$, and $\sigma(x) = \max(0, x)$, to be interpreted componentwise for a vector $x$. Similarly to \citet{TangLi2025}, denote by $\mathcal{F}(p, \ell, (r_1, \dots, r_{\ell}), q, B)$ the class of neural networks with input dimension $p$, output dimension $q$, $\ell$ hidden layers each with $r_j$ neurons, and weight and bias entries bounded to $[-B,B]$ for some $B \in [0,\infty]$. If $r_1 = \dots = r_{\ell} = r$, we abbreviate the definition as $\mathcal{F}(p, \ell, r, q, B)$, and we omit the bound $B$ if it is not relevant.

For consistency, we need that $\hat{e}_n, \hat{d}_n$ are a minimizer of the energy score on the training data. While this is admittedly a strong assumption, it is common in similar approaches in the literature, such as in \citet{Zhou2023, TangLi2025, Song2025}. We enforce the neural network output to be bounded, via the operator $T_B x = (\max(-B, \min(B,x_j)))_{j=1}^r$ for vectors $x \in \mathbb{R}^r$ and $B > 0$. We simplify the proof by assuming that within each of the two networks for $e$ and $d$, all hidden layers have the same number of neurons, and denote by $P_{e,d,X,B_y}$ the conditional distribution of $T_{B_y}d(e(X),\eta)$ given $X$.

\begin{assumption} \label{assumption:estimatorMULTIY}
The estimator $(\hat{e}, \hat{d})$ satisfies
\begin{align}
    & (\hat{e}_n, \hat{d}_n) \in \argmin_{(e,d) \in \mathcal{M}} \sum_{i=1}^n \mathrm{ES}(P_{e,d,X_i,B_y}, Y_i),
\end{align}
for the model class
\[
    \mathcal{M} = \{(e,d)\colon e \in \mathcal{F}(p, \ell_1, r_1, q, B_w), \ d \in \mathcal{F}(q+s, \ell_2, r_2, m, B_w)\},
\]
and for some bound $B_w > 0$ on the neural network parameters.
\end{assumption}

Our consistency result requires that the width and/or depth of the neural networks diverge at a not too fast rate, as given in the assumption below.

\begin{assumption} \label{assumption:parametersMULTIY}
The bound $B_w$ is non-decreasing in $n$ with $B_w = \mathcal{O}(n^{\gamma_B})$ for some $\gamma_B > 0$, and the neural network hyperparameters satisfy
\begin{align*}
    & \ell_1 = 12L_1 + 14, & \ r_1 = \max\{4p\lfloor N_1^{1/p}\rfloor + 3p, 12qN_1+8q\}, \\
    & \ell_2 = 12L_2 + 14, & \ r_2 = \max\{4(q+s)\lfloor N_2^{1/(q+s)} \rfloor + 3(q + s), 12N_2 + 8m\},
\end{align*}
where $L_1, N_1, L_2, N_2$ are non-decreasing in $n$ with limits
\begin{align*}
    \lim_{n\rightarrow \infty} \min(L_1N_1, L_2N_2) = \infty, \quad 
        \lim_{n\rightarrow \infty} \frac{\log(n)(L_1N_1^2 + L_2N_2^2)(L_1\log(N_1) + L_2\log(N_2))}{n} = 0.
\end{align*}
\end{assumption}

The expressions for $\ell_i$ and $r_i$ are the ones in the neural network approximation result of Lemma \ref{lem:nn_approximation}, which we apply in the proof. The assumption therefore amounts to letting the depth and the width of the two networks grow with $L_i$ and $N_i$.

\subsection{Theorem and Discussion} \label{sec:thmdiscuss}

Under the assumptions from the previous section, our estimator is consistent in the energy distance, also implying asymptotic recovery of a sufficient dimension reduction.

\begin{theorem} \label{thm:consistencyMULTIY}
If the assumptions from Section \ref{sec:assumptions} hold, then
\[
    \lim_{n \rightarrow \infty} \mathbb{E}\left[\int_{\mathcal{X}}\mathcal{D}^2(P_{\hat{e}_n, \hat{d}_n, x, B_y}, \cond{Y}{X = x}) \, dP_{X}(x) \right] = 0.
\]
\end{theorem}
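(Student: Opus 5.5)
The plan is a standard excess-risk decomposition for empirical risk minimization, using the energy score as the loss. Write $R(e,d) = \mathbb{E}[\mathrm{ES}(P_{e,d,X,B_y}, Y)]$ and $R_n(e,d) = \frac{1}{n}\sum_{i=1}^n \mathrm{ES}(P_{e,d,X_i,B_y}, Y_i)$. By the conditioning argument in the proof of Theorem \ref{thm:minimizer_sufficient} and the identity from Section \ref{sec:test_functions}, for any fixed $(e,d)$ we have
\[
R(e,d) - R(e^*,d^*) = \int_{\mathcal{X}} \mathcal{D}^2(P_{e,d,x,B_y}, \cond{Y}{X=x})\, dP_X(x).
\]
Here we use that $P_{e^*,d^*,x,B_y} = \cond{Y}{X=x}$, because $|Y_j| \le B_y$ makes the truncation harmless. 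Applying this to the random $(\hat{e}_n, \hat{d}_n)$, conditionally on the sample, the quantity in the theorem equals $\mathbb{E}[R(\hat e_n,\hat d_n)] - R(e^*,d^*)$. For any $(\bar e, \bar d) \in \mathcal{M}$ we then bound
\[
R(\hat e_n,\hat d_n) - R(e^*,d^*) \le 2\sup_{(e,d)\in\mathcal{M}}|R_n(e,d) - R(e,d)| + \bigl(R(\bar e,\bar d) - R(e^*,d^*)\bigr),
\]
using $R_n(\hat e_n,\hat d_n) \le R_n(\bar e,\bar d)$. This splits the proof into an approximation term and a stochastic term.

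\emph{Approximation.} Using Lemma \ref{lem:nn_approximation}, choose ReLU networks $\bar e$ and $\bar d$ with the architectures of Assumption \ref{assumption:parametersMULTIY}. We want $\sup_{[-B_x,B_x]^p}\|\bar e - e^*\|$ to be of order $L_e\sqrt{p}\,(N_1^{2}L_1^{2})^{-1/p}$, and $\bar d$ to approximate $d^*$ on a slightly enlarged cube $[-B',B']^{q+s}$ containing the range of $(e^*,\eta)$ and of $(\bar e,\eta)$. Both errors go to zero because $L_iN_i \to \infty$. We also need the weights to stay within $B_w$. I would check that the construction in the lemma has weights bounded polynomially in $N_i, L_i$, hence within $B_w = O(n^{\gamma_B})$ eventually; otherwise I would add an assumption that $B_w$ grows fast enough.

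Next, the energy distance is dominated by a Wasserstein-type bound. Coupling both generators through the same $\eta$, and noting that $T_{B_y}$ is 1-Lipschitz, we get $\mathcal{D}^2(P,Q) \le \mathbb{E}\|Z - Z'\|$ for any coupling $(Z,Z')$ of $(P,Q)$: each term of $\mathcal{D}^2$ changes by at most that amount. The Lipschitz continuity of $d^*$ then gives
\[
\mathcal{D}^2 \le \sqrt{m}\bigl(L_d\|\bar e - e^*\|_\infty + \|\bar d - d^*\|_\infty\bigr) \to 0 .
\]

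\emph{Stochastic term.} Truncation bounds the loss $\ell_{e,d}(x,y) = \mathrm{ES}(P_{e,d,x,B_y},y)$ by a constant $C(m,B_y)$. I would control $\mathbb{E}\sup_{\mathcal{M}}|R_n - R|$ by symmetrization and a covering-number bound in sup norm. The map $(e,d)\mapsto \ell_{e,d}$ is Lipschitz in $\sup_{x,\eta}\|T_{B_y}d(e(x),\eta) - T_{B_y}d'(e'(x),\eta)\|$, since both expectations in $\mathrm{ES}$ move by at most twice that amount. It therefore suffices to cover the composed class $\{T_{B_y}\circ d(e(\cdot),\cdot)\}$, which is itself a bounded ReLU network of depth $\ell_1+\ell_2$. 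Its pseudo-dimension is of order $W\cdot\text{depth}\cdot\log W$ (Bartlett et al.), with parameter count $W \asymp L_1N_1^2 + L_2N_2^2$ and depth $\asymp L_1+L_2$, giving logarithmic factors as in the assumption. A Dudley or Pollard bound then yields
\[
\mathbb{E}\sup|R_n - R| \lesssim \sqrt{\mathrm{Pdim}\,\log n / n}.
\]
This vanishes under the rate condition in Assumption \ref{assumption:parametersMULTIY}, possibly after squaring, so the condition is stronger than needed.

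The main obstacle is this stochastic step. The loss is not a function of a single network output: it is an expectation over $\eta$ of pairwise distances. To pass to a uniform deviation bound I would use the Lipschitz reduction above and then rely on the known pseudo-dimension of ReLU networks. If the sup-norm covering is awkward, I would instead write $\ell_{e,d}(x,y) = \mathbb{E}_{\eta,\eta'}[h(x,y,\eta,\eta')]$ and bound the Rademacher complexity of $h$ directly, since taking expectations over $\eta$ does not increase it. Combining the two terms and letting $n\to\infty$ gives the claim.
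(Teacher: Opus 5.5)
Your proof is correct, and it follows a more direct route than the paper's.

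\textbf{The paper's route.} The paper never uses the energy score directly as a loss. It rewrites score and distance through the test functions $\cos(t^\top y)/\|t\|^{(m+1)/2}$ and their sine analogues. It then restricts frequencies to the compact set $R(B)$ at a cost $\varepsilon$ (Lemma~\ref{lem:energy_approximation}), discretizes $R(B)$ into $M$ cubes, and follows the decomposition of \citet{TangLi2025}:
\begin{itemize}
\item The estimation part $S_{21}$ is bounded one grid point at a time, using the ratio-type inequality of Theorem 11.4 in \citet{GyoerfiEtAl2002} and a union bound. Sup-norm covering numbers come from Lemma~\ref{lem:auxiliary_network_class} and the bounded-weight bound of \citet{Shen2024}; this is where $B_w=\mathcal{O}(n^{\gamma_B})$ enters.
\item The approximation part $S_{22}$ uses the ERM property and the Lipschitz continuity of the test functions in $y$ on $R(B)$, which reduces it to $L^2$ approximation of $e^*$ and $d^*$.
\end{itemize}

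\textbf{Your route.} You replace all of this with three ingredients:
\begin{itemize}
\item the identity ``excess energy-score risk $=$ integrated $\mathcal{D}^2$'';
\item the coupling bound $\mathcal{D}^2(P,Q)\le\mathbb{E}\|Z-Z'\|$, which is true, although your term-by-term justification only gives constant $2$ (harmless);
\item a uniform law of large numbers for a bounded loss that is $2$-Lipschitz in the generator.
\end{itemize}
As a result you need no frequency truncation, no discretization, and no $\varepsilon$-then-$n$ bookkeeping.

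\textbf{What each buys.} The paper's route gives a fast, $1/n$-type estimation bound (the factor $2$ in $S_{21}$), which its rate remark relies on. Yours gives $\sqrt{\cdot/n}$, which is enough for consistency. It could also be improved by localization, since the characteristic-function representation gives $|\ell_{e,d}-\ell_{e^*,d^*}|(x,y)\lesssim \sqrt{\mathcal{D}^2(P_{e,d,x,B_y},\cond{Y}{X=x})}$, i.e.\ a Bernstein condition.

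\textbf{Point 1: covering numbers.} A pseudo-dimension bound controls empirical covering numbers. It does not control the sup-norm covers over the continuum of $\eta$ that your Lipschitz reduction requires. You have two ways out:
\begin{itemize}
\item Cover in sup norm by parameter counting with bounded weights, as the paper does.
\item Use your fallback, which is sound. Attach auxiliary draws $(\eta_i,\eta_i')$ to each $(X_i,Y_i)$ when symmetrizing; by Jensen, the Rademacher complexity of the averaged loss is then bounded by that of $h$. Vector contraction reduces this to the coordinate classes $(x,\eta)\mapsto (T_{B_y}d(e(x),\eta))_j$, to which pseudo-dimension applies.
\end{itemize}
The fallback has an extra advantage: the estimation term does not involve $B_w$ at all. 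The weight-size issue you flag then matters only for the approximation step, where the paper also simply takes $B_w$ large enough.

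\textbf{Point 2: log factors.} With $\mathrm{Pdim}\asymp W(L_1+L_2)\log W$, your bound $\sqrt{\mathrm{Pdim}\log n/n}$ is not implied by Assumption~\ref{assumption:parametersMULTIY} in every regime. For example, take bounded $N_i\ge 2$ and $L_i\approx\sqrt{n}/\log n$. So the assumption is not simply ``stronger than needed''. Haussler's $n$-free packing bound removes the $\log n$ factor. After that, $\mathrm{Pdim}/n\to 0$ does follow, because for $N_i\ge 2$ the assumption forces $W\le n$ eventually and dominates $W(L_1+L_2)\log n/n$.
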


Our assumptions are similar to those of \citet{TangLi2025}, and our proof follows theirs, with two differences discussed in Section \ref{sec:test_functions}: the test functions $g(y, t)$ have $t \in \mathbb{R}^m$, and their expectations under our model are of the form $\mathbb{E}_{\eta}[\cos(t^{\top}T_{B_y}d(e(x),\eta))]$, and analogously for the sine, so that their approximation results for neural networks do not apply directly. \citet[Theorem 3, Corollary 3]{XuEtAl2025} prove convergence in the 2-Wasserstein distance under continuity, log-concavity and log-convexity assumptions on the conditional density of $Y$. We show convergence in the energy distance and do not require the existence of a density; for instance, the distribution of $Y$ can have atoms through constant regions of $d^*$. Since our assumptions ensure that both $\cond{Y}{X=x}$ and the estimated conditional distributions have compact support, Proposition 16 of \citet{ModesteDombry2024} implies that our estimator also converges in the $1$-Wasserstein distance, meaning that
\[
   \mathbb{E}\left[ \int_{0}^1|\hat{F}_{X,n}^{-1}(\alpha) - F_{X}^{-1}(\alpha)| \, d\alpha\right] \rightarrow 0, \ n \rightarrow \infty,
\]
for univariate outcomes, where $\hat{F}_{X,n}^{-1}$ denotes the quantile function of $P_{\hat{e}_n, \hat{d}_n, X, B_y}$. \citet{chen2026error,huang2026theoretical} derive error bounds and convergence rates for engression, that is, for a full generator $d(X,\eta)$ in which the noise enters together with the covariate; the main result of \citet{chen2026error} also implies consistency of engression when the data are generated as $d^*(e^*(X),\eta)$. Our theorem concerns the reduced model, where $\eta$ enters only after $e(X)$, so that consistency of the conditional distributions is tied to sufficiency of $\hat{e}_n(X)$. It covers multivariate outcomes and the multi-environment setting of Section \ref{sec:multi_env_data}, but it gives consistency without a rate.

We close this section with remarks on extensions and refinements of our consistency result.

\begin{remark} \label{rem:larger_dimension}
To ensure convergence, it is not necessary to know the exact dimensions $q$ for $e^*$ and $s$ for the noise $\eta$. If our method is applied with $q' > q$ and $s' > s$, then our convergence result still applies, because $e^*$ and $\eta$ can always be extended to length $q'$ or $s'$, while all assumptions for the theorem continue to hold.
\end{remark}

\begin{remark}
Our consistency result can easily be adapted to estimators trained with other kernel scores, such as with the Gaussian or Laplace kernel, for which the score and the divergence function have a similar mixture representation as the energy score \citep[see][Section 3.1.2]{WaghmareZiegel2025}. An attractive property of the energy score is that the estimator is preserved under translation, scaling, and multiplication of $Y$ by a rotation matrix, making it a neutral choice if there is no reason to put weight on particular regions on the domain of $Y$.
\end{remark}

\begin{remark}
For univariate $Y$, one can show that with a similar choice of optimal neural network hyperparameters as in Theorems 2 and 3 of \citet{TangLi2025}, our estimator achieves the convergence rate $\mathcal{O}(\log(n)n^{-2/(2+p)})$ in a slightly modified distance. Namely, the energy distance $\mathcal{D}^2$ needs to be replaced by a suitable frequency-truncated version of it, defined as
\[
    \mathcal{D}^2_{\tau}(P,Q) = \frac{1}{2\pi}\int_{-\tau}^{\tau}\frac{|\phi_P(t) - \phi_Q(t)|^2}{t^2} \, dt,
\]
with $0 < \tau < \infty$, and analogously, the energy score for training is modified as
\[
    \mathrm{ES}_{\tau}(P,y) = \frac{1}{2\pi}\int_{-\tau}^{\tau}\frac{|\phi_P(t) - \exp(ity)|^2}{t^2} \, dt.
\]
The frequency truncated energy score is still a strictly proper scoring rule, for the restricted family of distributions for which the moment-generating function exists in a neighborhood of zero. Indeed, such distributions are characterized by the moment generating function, which is in turn characterized by the derivatives of the characteristic function at zero. The more detailed asymptotic analysis in the concurrent work by \citet{TanLiXue2026} reveals that convergence rates also hold in the usual energy distance, and that the rates can scale with the intrinsic dimension of $X$ rather than with the ambient dimension $p$.
\end{remark}

\begin{remark}
In a multi-environment setting (Section \ref{sec:multi_env_data}), our method remains consistent under similar assumptions as for Theorem \ref{thm:consistencyMULTIY}. See the detailed result and proof in Appendix \ref{sec:proof_multi}.
\end{remark}

\section{Simulations} \label{sec:simulations}

In this section, we evaluate SRDR in a series of synthetic experiments designed to assess both distributional prediction and the recovery of low-dimensional sufficient representations. Results are averaged over Monte Carlo replications, with error bars representing the standard deviation across replications.

\subsection{Comparison with engression} \label{sec:Engression}

We conduct several synthetic experiments to evaluate the ability of SRDR to identify a low-dimensional sufficient representation in high-dimensional nonlinear regression. More simulations are presented in Appendix \ref{app:Engression}.

The experiments in this section illustrate the benefits of dimension reduction for improving the performance of distributional regression. In the following simulation, we consider a high-dimensional sparse dependence structure, where covariates $X \in \mathbb{R}^{80}$ are generated from a standard multivariate Gaussian distribution. For the first simulation, a four-dimensional $Y$ is generated as
\[
Y = \mu(X_{1:3}) + \sigma(X_{1:3})\,\varepsilon, \qquad \varepsilon \sim \mathcal{N}(0,I_4),
\]
with nonlinear functions
\[ 
\mu(X_{1:3})= 
\begin{pmatrix} 
    \sin(X_1+0.5X_2)\\ 
    X_1X_3\\ 
    \cos(X_2-X_3)+0.25X_1\\ 
    \tanh(X_1X_2)+0.5X_3 
\end{pmatrix}, 
\qquad
\sigma(X_{1:3})= 
\begin{pmatrix} 
    0.10+0.45\,\mathrm{sigmoid}(X_1)\\ 
    0.10+0.35\,\mathrm{sigmoid}(X_2-X_3)\\ 
    0.12+0.40\,X_3^2/(1+X_3^2)\\ 
    0.10+0.30\,\mathrm{sigmoid}(X_1+X_2+X_3) 
\end{pmatrix}. 
\]
For the second simulation, we replace $X_{1:3}$ by $Z_{1:3}$, where
\[ 
Z_1=X_1^2+0.5\sin(X_2), \qquad Z_2=X_2X_4, \qquad Z_3=\cos(X_3-X_5)+0.25X_4^2. 
\]
In both simulations, the sufficient dimension is $q^* = 3$. We call the first setting ``sparse-linear'', since the sufficient reduction $X_{1:3}$ is linear in $X$, and the second setting ``sparse-nonlinear'', since the sufficient reduction $Z_{1:3}$ is a nonlinear function of $X_{1:5}$. Correspondingly, we train SRDR with both linear and nonlinear dimension reduction maps $e$, i.e., $e(x) = \Theta x$ with $\Theta \in \mathbb{R}^{q \times p}$, and a nonlinear $e$ parametrized by neural networks; the dimension of $e$ ranges over $q\in\{1,2, \ldots,6\}$. Results are averaged over 10 random replicates. Performance is evaluated using the energy score computed on an independent test set. As a baseline we train an engression model without any dimension reduction, using the same training budgets and hyperparameters.

\begin{figure}[!ht]
\centering
\begin{subfigure}[t]{\textwidth}
    \centering
    \includegraphics[width=\textwidth]{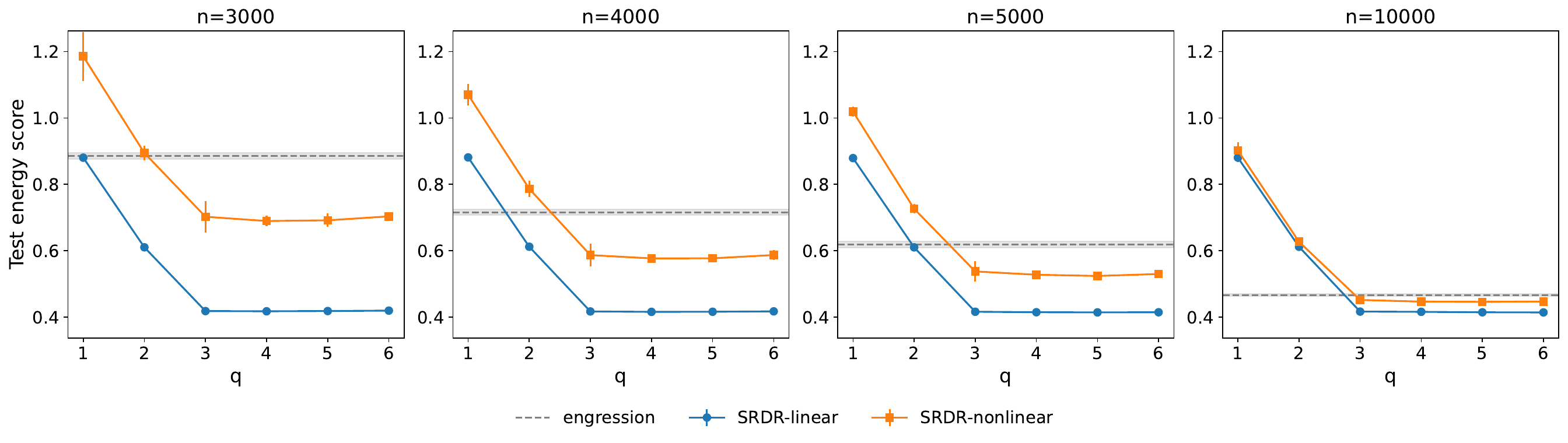}
    \caption{Sparse-linear.}
    \label{fig:sparse_linear}
\end{subfigure}

\vspace{0.5em}

\begin{subfigure}[t]{\textwidth}
    \centering
    \includegraphics[width=\textwidth]{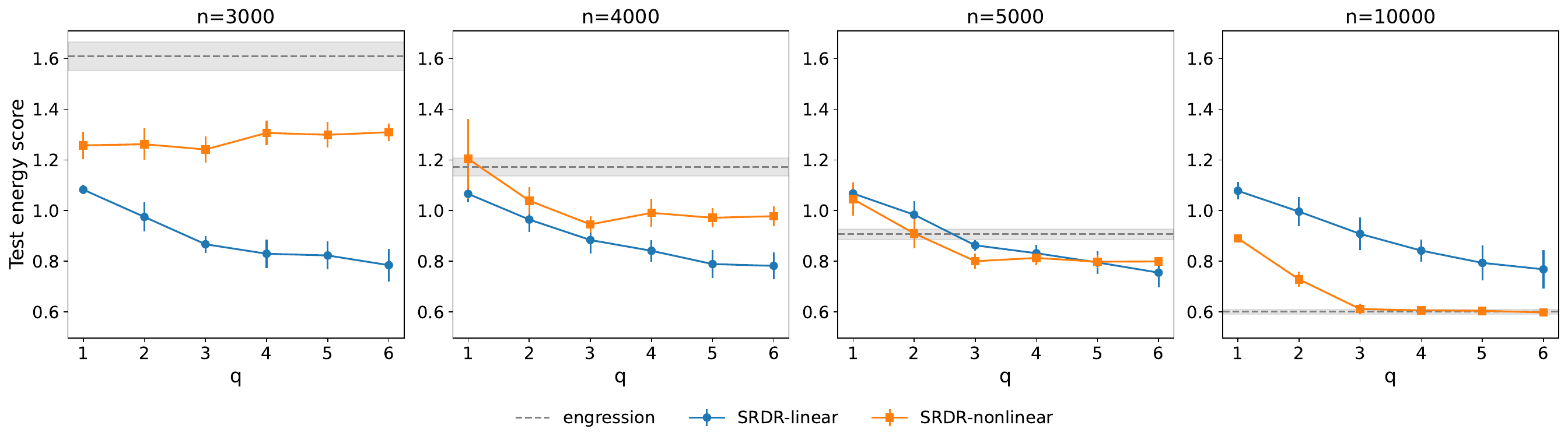}
    \caption{Sparse-nonlinear.}
    \label{fig:sparse_nonlinear}
\end{subfigure}

\caption{Test energy score in the sparse-linear and sparse-nonlinear settings of Section \ref{sec:Engression}. SRDR-linear and SRDR-nonlinear refer to SRDR with a linear and a nonlinear dimension reduction map, respectively. Engression does not incorporate dimension reduction, whence its test scores do not depend on $q$.}
\label{fig:synthetic_sparse}
\end{figure}

We vary the training sample size over \(n \in \{3000,4000,5000,10000\}\) and report the results in Figure~\ref{fig:synthetic_sparse}. Overall, both SRDR variants improve upon engression at smaller sample sizes and remain competitive at larger sample sizes when \(q \ge q^*\). Both variants of SRDR, with linear and nonlinear dimension reduction, also display a clear elbow pattern: the test energy score decreases as \(q\) increases up to \(q^*=3\), and then stabilizes for larger \(q\). This behavior is consistent with the discussion in Section~\ref{sec:thmdiscuss} and suggests that SRDR learns the response-relevant sufficient reduction.

In Figure~\ref{fig:sparse_linear}, where the true sufficient reduction is linear in \(X\), both SRDR variants identify \(q^*\), while SRDR-linear performs better due to correct specification. In contrast, Figure~\ref{fig:sparse_nonlinear} shows that when the true reduction is nonlinear, SRDR with a linear dimension reduction map is misspecified and struggles to recover \(q^*\).

\subsection{Comparison with Competing Methods for Sufficient Dimension Reduction} \label{sec:comp}

We next compare SRDR with several existing nonlinear SDR methods using a benchmark model proposed by \citet{Kapla2022}. The competing methods include GSIR \citep{Lee2013}, GMDDNet \citep{Chen2024}, StoNet \citep{Liang2022}, BENN \citep{TangLi2025}, and GenSDR \citep{XuEtAl2025}. 

The data-generating mechanism is a one-dimensional mean model of the form
\begin{equation}
    Y = \cos(b_1^\top X) + \varepsilon, \qquad \text{where }
X \sim \mathcal{N}_{20}(0,\Sigma), \quad \Sigma_{ij} = 0.5^{|i-j|}.
\label{eq:m1}
\end{equation}
A true sufficient reduction is \(U = b_1^\top X\) with \(b_1 = \frac{1}{\sqrt{6}}(1,1,1,1,1,1,0,\ldots,0)^\top.\) The noise term \(\varepsilon\) follows a generalized normal distribution with shape parameter \(c=0.5\) and variance \(0.25\). We call this setting heavy-tailed. It is challenging because the predictors are strongly correlated and the noise distribution has heavy tails.

We evaluate performance across sample sizes $n \in \{1000, 2000, \ldots, 8000\}$ with 10 replicates for each configuration, using a fixed test set of 2000 observations. Performance is measured by $\mathrm{dCor}(U, \hat e(X))$, the distance correlation \citep{Szekely2007} between the true sufficient reduction and the low-dimensional representation computed on the test set. Figure~\ref{fig:heavy_tailed_dcor} reports the average results over the 10 replicates.

\begin{figure}[!ht]
    \centering
    \begin{subfigure}{0.48\textwidth}
        \centering
        \includegraphics[width=\textwidth]{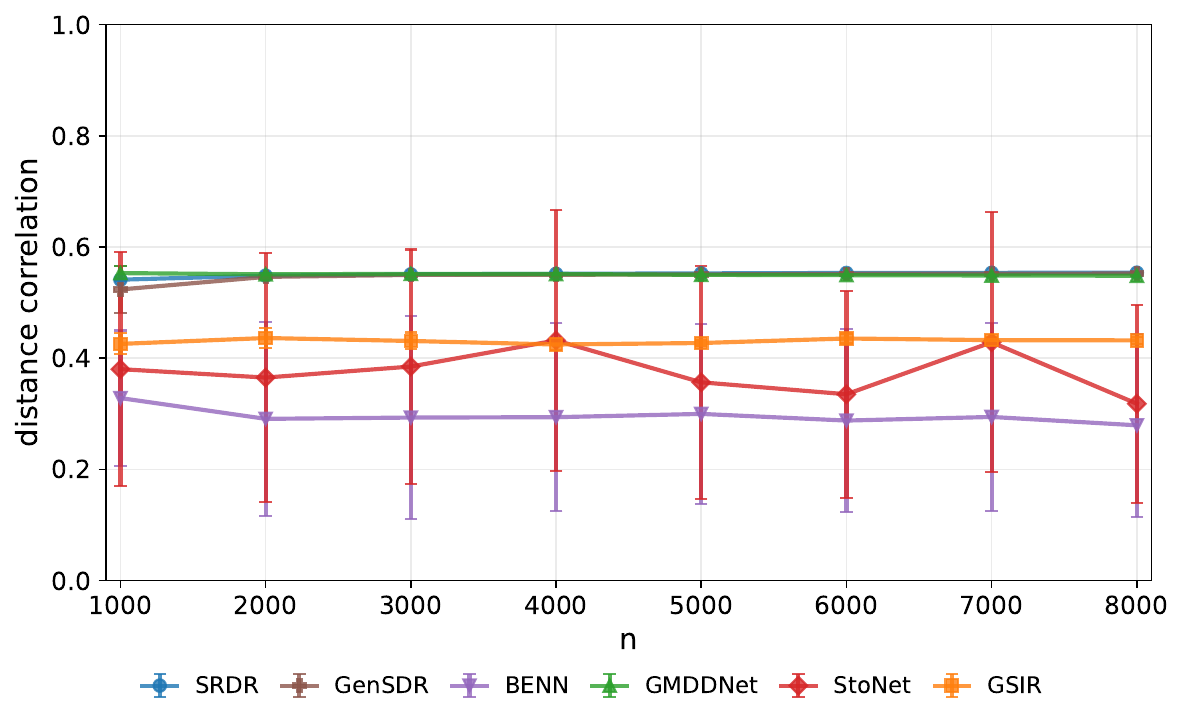}
        \caption{Distance correlation with \(U\)}
        \label{fig:heavy_tailed_dcor_u}
    \end{subfigure}
    \hfill
    \begin{subfigure}{0.48\textwidth}
        \centering
        \includegraphics[width=\textwidth]{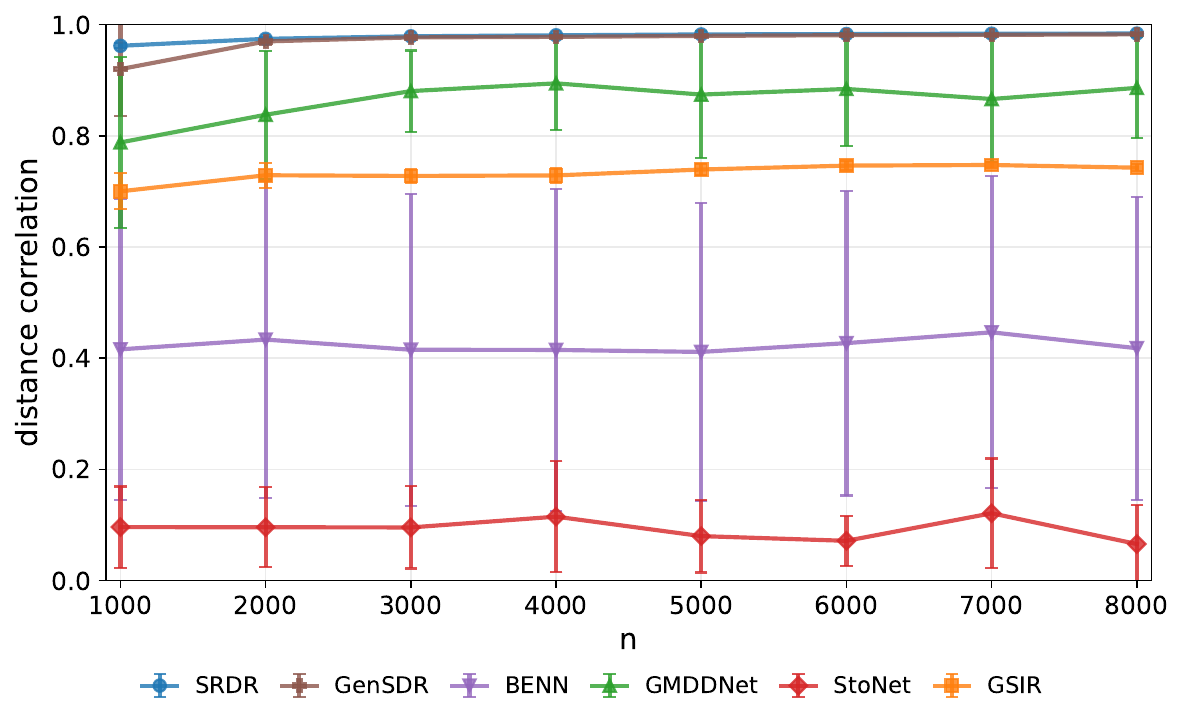}
        \caption{Distance correlation with \(\cos(U)\)}
        \label{fig:heavy_tailed_dcor_cos}
    \end{subfigure}
    
    \caption{Distance correlation of the learned representation with two reference sufficient reductions, $U = b_1^{\top}X$ and $\cos(U)$, in the heavy-tailed setting \eqref{eq:m1} of Section \ref{sec:comp}.}
    \label{fig:heavy_tailed_dcor}
\end{figure}

In Figure~\ref{fig:heavy_tailed_dcor_u}, GMDDNet, GenSDR and SRDR reach similar distance correlation with \(U\). This measure cannot separate the methods well, because the response depends on \(X\) only through \(\cos(b_1^{\top}X)\). A nonlinear SDR method that learns a reduction close to \(\cos(U)\) recovers a valid sufficient reduction, but its distance correlation with \(U\) is below $1$, so the attainable maximum of this measure is not $1$. Figure~\ref{fig:heavy_tailed_dcor_cos} therefore reports the distance correlation with \(\cos(U)\), the actual response signal. Here GenSDR and SRDR achieve by far the highest values and the smallest variability across replicates, so the two generative methods consistently capture the sufficient structure.

This simulation shows that for nonlinear SDR, distance correlation with a fixed reference representation should be interpreted with caution. It measures geometric alignment with the chosen reference reduction rather than sufficiency itself, and its actual range in specific settings does not necessarily reach up to $1$, hiding differences for methods close to the actual upper bound.

\section{Applications} \label{sec:app}
We next evaluate our method on sufficient reduction tasks using three real datasets: two nonlinear regression problems and one classification problem. For all applications, we use a fixed training-test split and evaluate predictive performance on the held-out test set. Results are averaged over 10 repeated model fits on the same split, with error bars representing the standard deviation across replicates. Network architectures and training details for all methods are given in Appendix \ref{app:implementation}.

\subsection{Application to CT Slice Localization} \label{sec:ct}
We evaluate SRDR on the CT slice localization dataset \citep{GrafCavallaro2011}. The dataset consists of $53,500$ computed tomography (CT) images collected from 74 patients. Each observation is represented by a 384-dimensional feature vector constructed from two histograms in polar space: the spatial distribution of bone structures and the location of air inclusions inside the body. The response is the relative axial location of the slice within the CT volume.

We compare SRDR with the competing network-based SDR methods, using full-feature engression as an additional benchmark. For GMDDNet, StoNet, and BENN, we fit a downstream prediction network on the learned reduction, trained with $L_2$ loss when reporting MSE and with $L_1$ loss when reporting MAE. SRDR, GenSDR, and engression directly produce distributional predictions. For MSE evaluation, their point predictions are given by the predictive mean. Distributional prediction is evaluated using the continuous ranked probability score (CRPS), which reduces to the MAE for point forecasts as issued by GMDDNet, StoNet, and BENN. Although StoNet is a stochastic network, its learned reduction is deterministic given the fitted parameters, since randomness enters only through the training procedure, so it is evaluated as a point-prediction method.

\begin{figure}[!t]
    \centering
    \begin{subfigure}{0.48\textwidth}
        \centering
        \includegraphics[width=\textwidth]{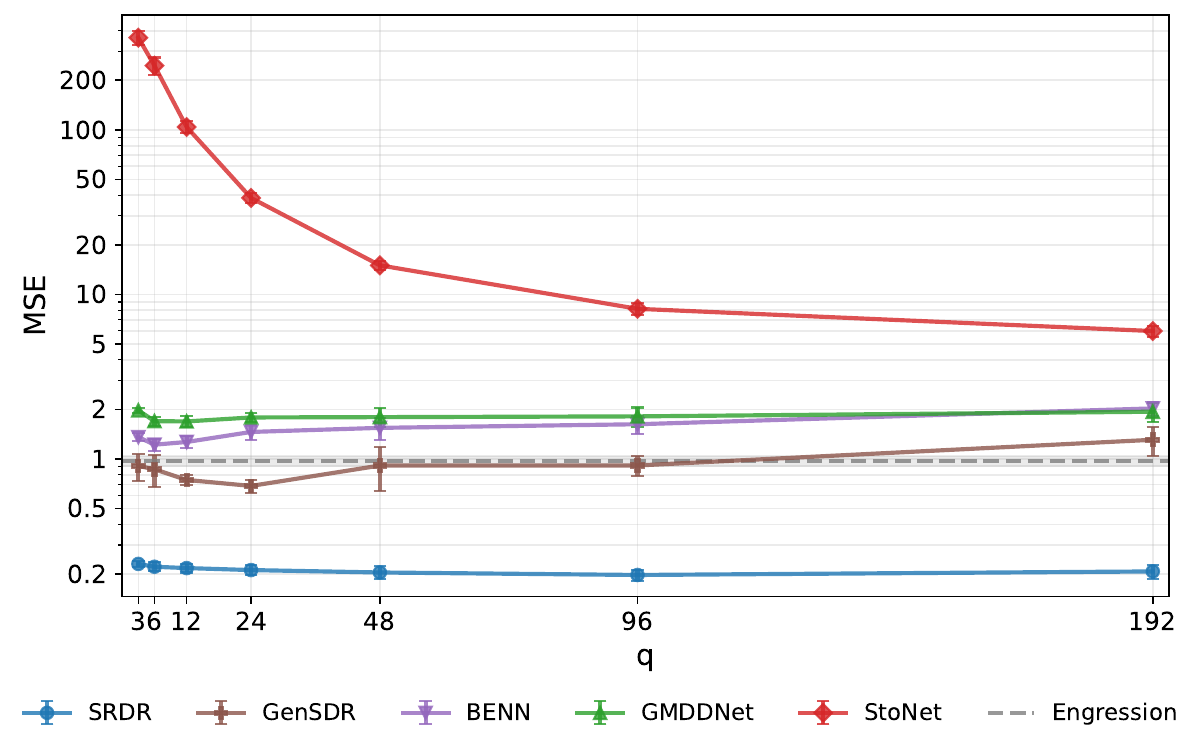}
        \caption{MSE}
        \label{fig:ct_mse}
    \end{subfigure}
    \hfill
    \begin{subfigure}{0.48\textwidth}
        \centering
        \includegraphics[width=\textwidth]{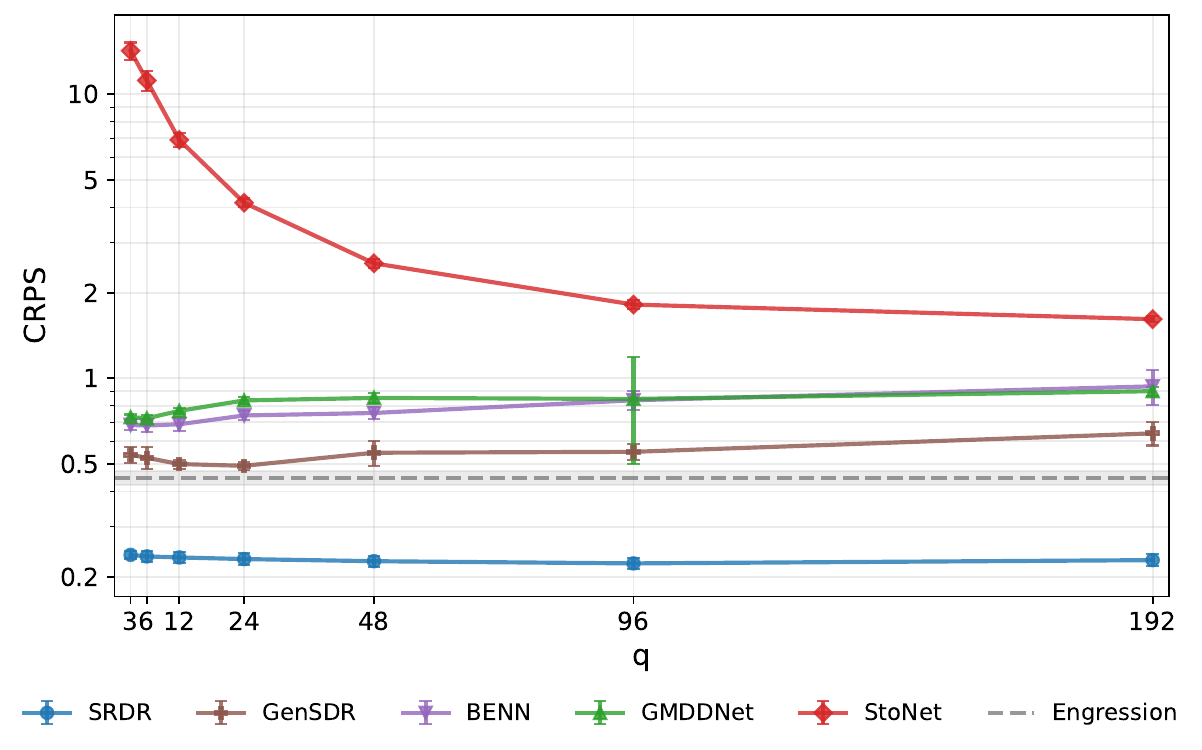}
        \caption{CRPS}
        \label{fig:ct_dist}
    \end{subfigure}
    
    \caption{CT localization results by representation dimension $q$. Panel \ref{fig:ct_mse} reports MSE, and panel \ref{fig:ct_dist} reports CRPS for generative methods and MAE for deterministic methods.}
    \label{fig:ct}
\end{figure}

The results for \(q \in \{3,6,12,24,48,96,192\}\) are reported in Figure~\ref{fig:ct}, with MSE in Figure~\ref{fig:ct_mse} and CRPS, or MAE for the deterministic methods, in Figure~\ref{fig:ct_dist}; the latter two are directly comparable because CRPS reduces to the absolute error for a degenerate predictive distribution. SRDR is the best-performing SDR method across \(q\), with stable performance even at small representation dimensions, while additionally providing a full predictive distribution. The advantage of SRDR over the other dimension reduction approaches highlights that if a dimension reduction is used for the downstream task of prediction, it is beneficial to estimate $e$ and $d$ jointly, like in the SRDR framework, rather than one after another. Moreover, the lower CRPS of SRDR and GenSDR highlights the advantage of distributional predictions, and, hence, generative approaches for dimension reduction, over deterministic point forecasts.

\begin{figure}[!t]
    \centering
    \includegraphics[width=\textwidth]{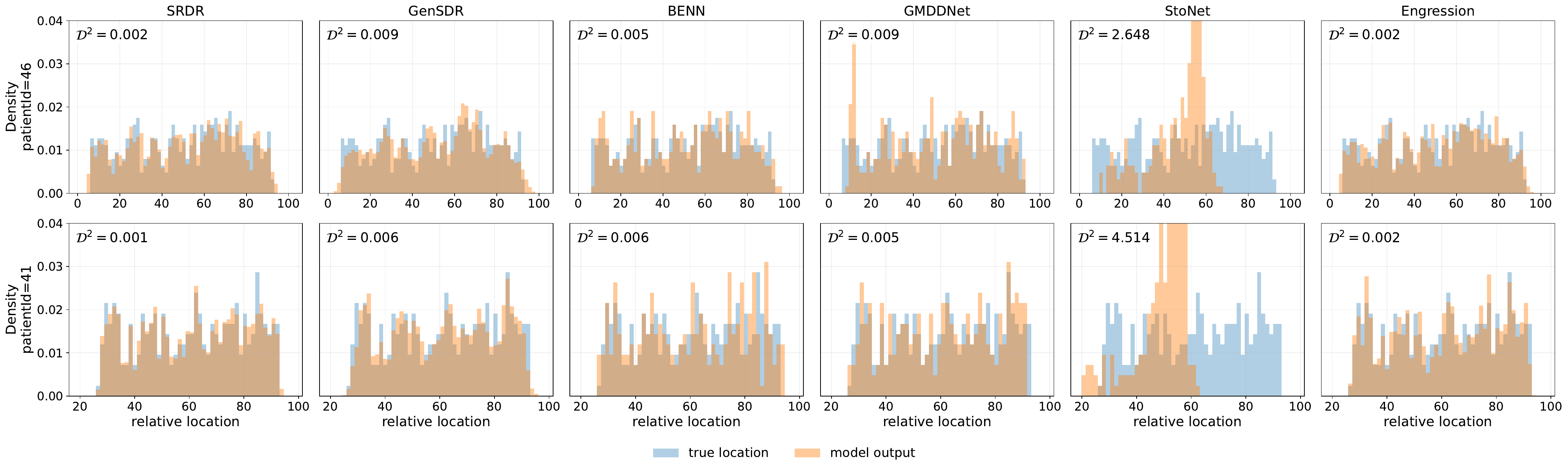}
    \caption{Patient-level distributional comparison for the CT localization dataset at $q=3$. Each panel shows the histograms of the observed slice location and model output (generated samples or point predictions) distributions, together with their squared energy distance $\mathcal{D}^2$.}
    \label{fig:ct_hist_main}
\end{figure}

To further assess how well the methods learn the distribution, Figure~\ref{fig:ct_hist_main} compares patient-level histograms of the observed slice localizations and model outputs for two randomly selected patients at $q=3$. For point-prediction methods, we plot one prediction per observation; for SRDR, GenSDR, and engression, we pool 100 predictive samples per observation. For observed values $y_1, \dots, y_{n_1}$ and model outputs $\tilde{y}_1, \dots, \tilde{y}_{n_2}$, the squared energy distance is estimated by the empirical counterpart of \eqref{eq:energy_divergence}, $\widehat{\mathcal{D}}^2 = \frac{1}{n_1 n_2}\sum_{i=1}^{n_1}\sum_{j=1}^{n_2}|y_i-\tilde{y}_j| - \frac{1}{2n_1^2}\sum_{i,i'=1}^{n_1}|y_i-y_{i'}| - \frac{1}{2n_2^2}\sum_{j,j'=1}^{n_2}|\tilde{y}_j-\tilde{y}_{j'}|$, on the original scale of the response. At \(q=3\), all methods except StoNet broadly overlap with the empirical distributions, and SRDR has the smallest $\mathcal{D}^2$ for both patients. At larger $q$ (in Appendix~\ref{app:ct}), the methods are more visually similar, although GenSDR and GMDDNet show somewhat larger dispersion and occasionally place mass beyond the main empirical support. Overall, SRDR continues to provide reliable patient-level distributional fits, which generally cannot be expected for non-distributional methods.

\subsection{Application to Superconductivity} \label{sec:supercond}

We evaluate SRDR on the superconductivity dataset \citep{Hamidieh2018}, which provides another regression setting with a continuous response. The dataset consists of $n=21{,}263$ observations of superconducting materials, each described by $p=82$ real-valued features derived from elemental properties of their chemical composition. The response variable is the critical temperature $T_c$, a continuous measure indicating the temperature below which a material becomes superconducting. The features are engineered summaries (e.g., means, weighted means, variances, entropies, and ranges) of atomic-level attributes such as atomic mass, valence, electron affinity, and thermal conductivity, computed over the constituent elements of each compound. 

\begin{figure}[!t]
    \centering
    \begin{subfigure}{0.48\textwidth}
        \centering
        \includegraphics[width=\textwidth]{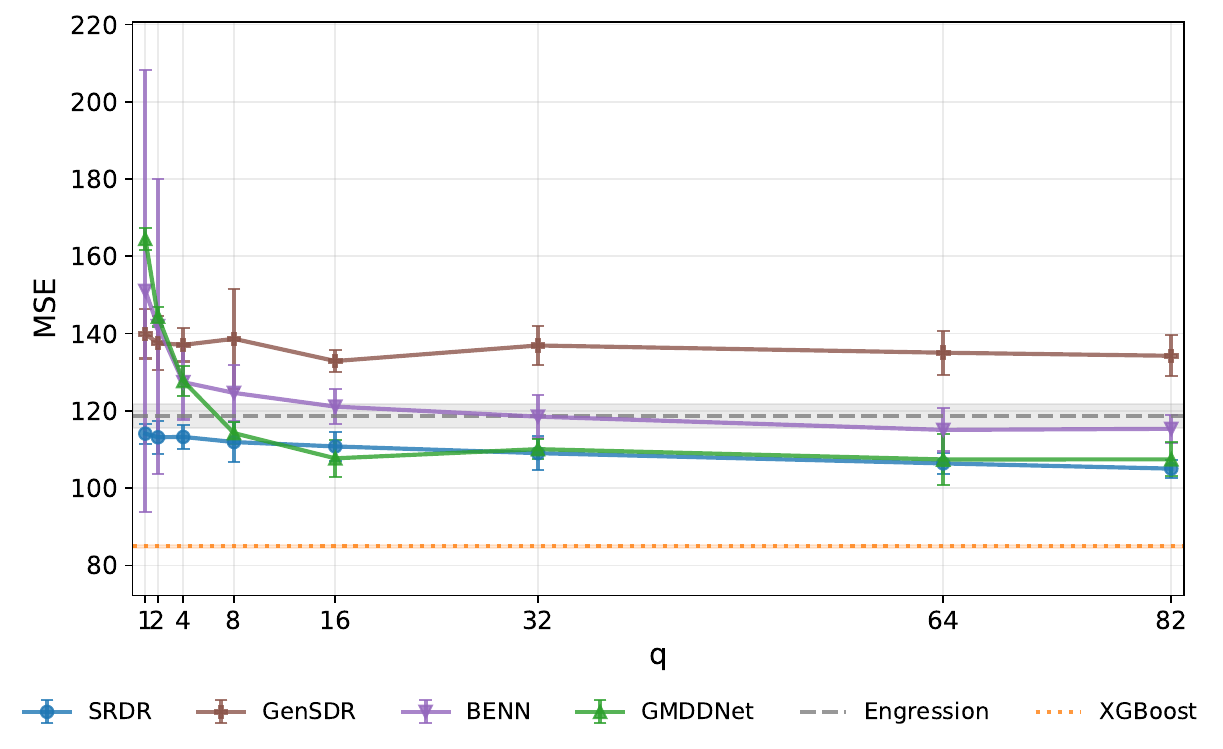}
        \caption{MSE}
        \label{fig:supercond_mse}
    \end{subfigure}
    \hfill
    \begin{subfigure}{0.48\textwidth}
        \centering
        \includegraphics[width=\textwidth]{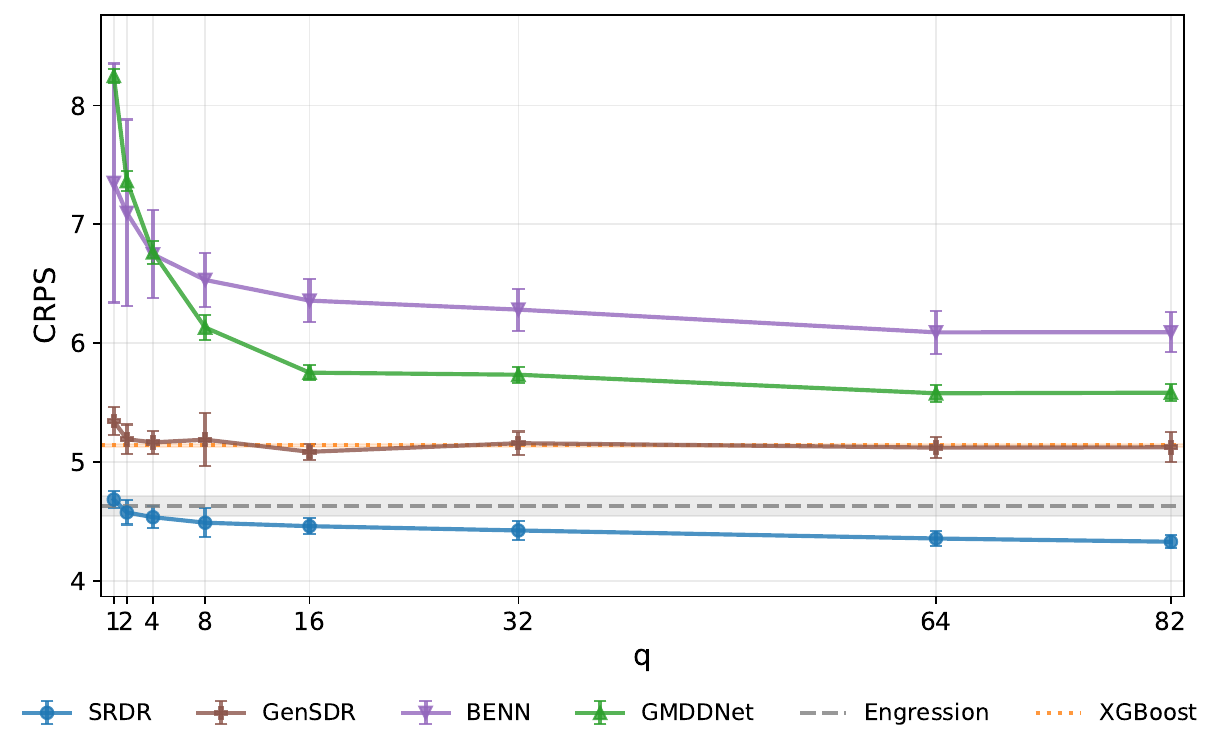}
        \caption{CRPS}
        \label{fig:supercond_dist}
    \end{subfigure}
    
    \caption{Superconductivity results by representation dimension $q$. Panel \ref{fig:supercond_mse} reports MSE, and panel \ref{fig:supercond_dist} reports CRPS for generative methods and MAE for deterministic methods.}
    \label{fig:supercond}
\end{figure}

As benchmarks, we use XGBoost and engression, both fitted on the full set of covariates; the former follows the recommendation of \citet{Hamidieh2018}. We do not report StoNet because, despite repeated tuning and substantial computational cost, its prediction errors remain outside the scale of the other methods. Figure~\ref{fig:supercond} reports the test MSE and CRPS, analogous to Section \ref{sec:ct}. SRDR is among the best performing SDR methods in terms of MSE across \(q\), and its performance stabilizes as \(q\) increases, suggesting that the learned reduction captures most of the predictive information relevant to \(T_c\). The full-feature XGBoost benchmark achieves the lowest MSE overall. A plausible explanation is that the original covariates contain well-separated clusters, which the tree-based XGBoost method exploits better than the neural network methods. This advantage vanishes under distributional evaluation: the CRPS of engression and SRDR is lower than the MAE of the XGBoost point prediction.

\begin{figure}[!ht]
    \centering
    \includegraphics[width=\linewidth]{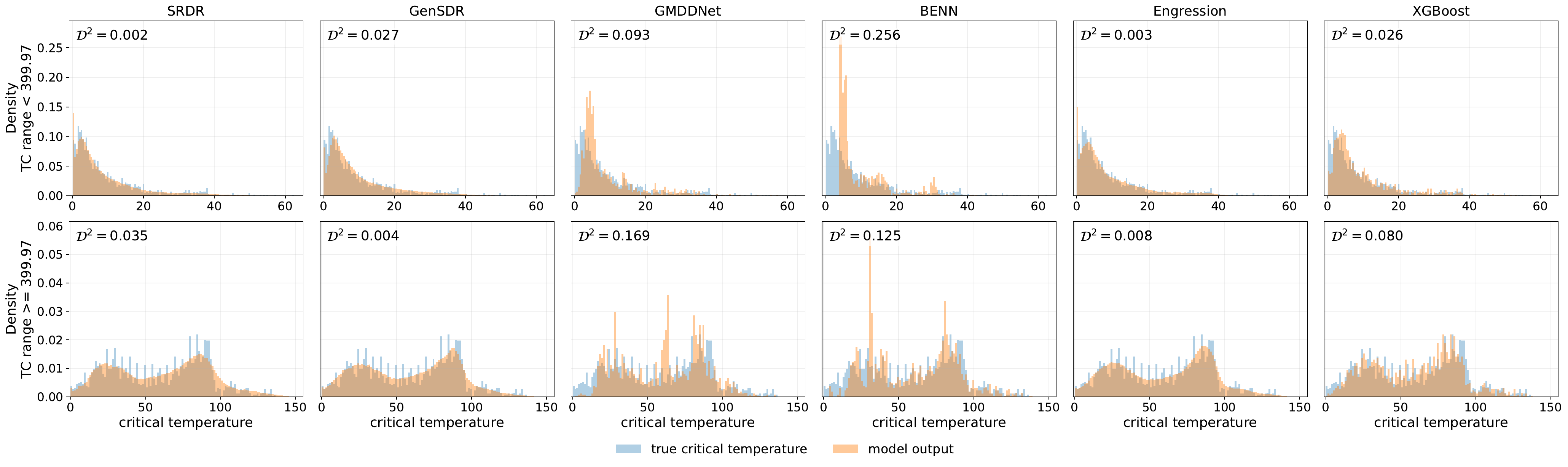}
    \caption{Distributional comparison for the superconductivity dataset at $q=1$. Each panel shows the histograms of the observed critical temperature and model output (generated samples or point predictions) distributions for the two subgroups defined by the range of thermal conductivity, together with their squared energy distance $\mathcal{D}^2$.}
\label{fig:supercond_hist}
\end{figure}

We further examine distributional learning through histograms of model outputs. Following \citet{Hamidieh2018}, we split the test set by the range of thermal conductivity, the most important feature identified by information gain. Figure~\ref{fig:supercond_hist} reports results for \(q=1\), with additional choices of \(q\) in Appendix~\ref{app:supercond}. Both SRDR and GenSDR match the observed distributional shape well in both subpopulations despite the one-dimensional reduction, and their squared energy distances $\mathcal{D}^2$ are among the smallest across the SDR methods. Engression and XGBoost also provide reasonable full-feature benchmarks, with XGBoost slightly underdispersed in the tails. In contrast, GMDDNet and BENN exhibit noticeably larger distributional discrepancies, with more concentrated outputs that fail to capture the full range of $T_c$. Compared with the CT dataset, the larger discrepancy between the observed distributions and the point-prediction outputs points to a lower signal-to-noise ratio in the superconductivity data, which explains why the generative methods gain more here by modeling the full conditional distribution.

\subsection{Application to Handwritten Digit Classification (MNIST)} \label{sec:mnist}

\begin{figure}[!ht]
\centering
\includegraphics[width=\textwidth]{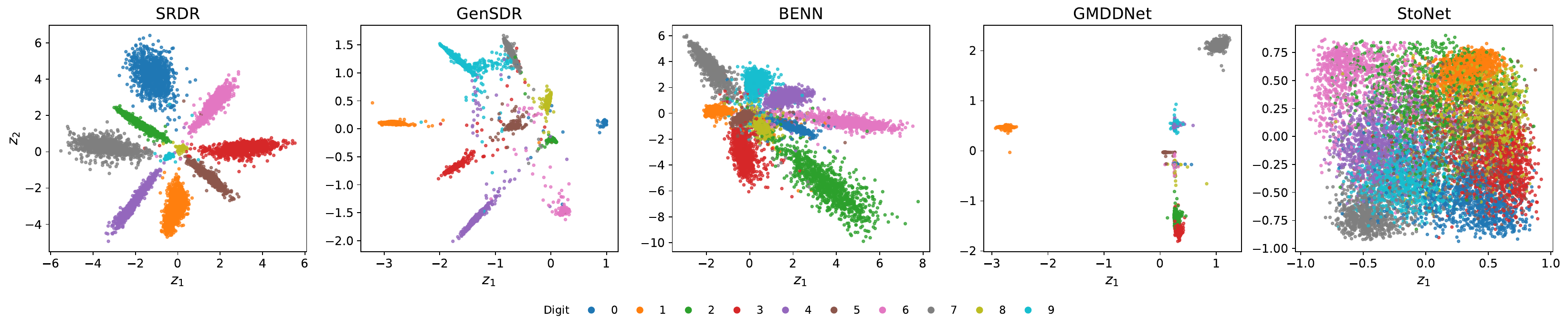}
\caption{Two-dimensional embeddings of MNIST images obtained from different SDR methods.}
\label{fig:mnist_embedding}
\end{figure}

We evaluate SRDR on a handwritten digit classification task using the well-known MNIST dataset, which consists of grayscale images of handwritten digits from 0 to 9. Each image is represented as a $28 \times 28$ pixel grid and is flattened into a 784-dimensional vector. The response variable $Y$ is categorical, with 10 classes corresponding to the digit labels.

We compare SRDR with network-based nonlinear SDR methods, including GenSDR, BENN, GMDDNet, and StoNet. To visualize the learned representations, we first learn a two-dimensional reduction $Z \in \mathbb{R}^2$ and plot the resulting test-data embeddings in Figure~\ref{fig:mnist_embedding}. Each point corresponds to an image and is colored by its digit label.

\begin{figure}[!t]
\centering
\includegraphics[width=0.5\textwidth]{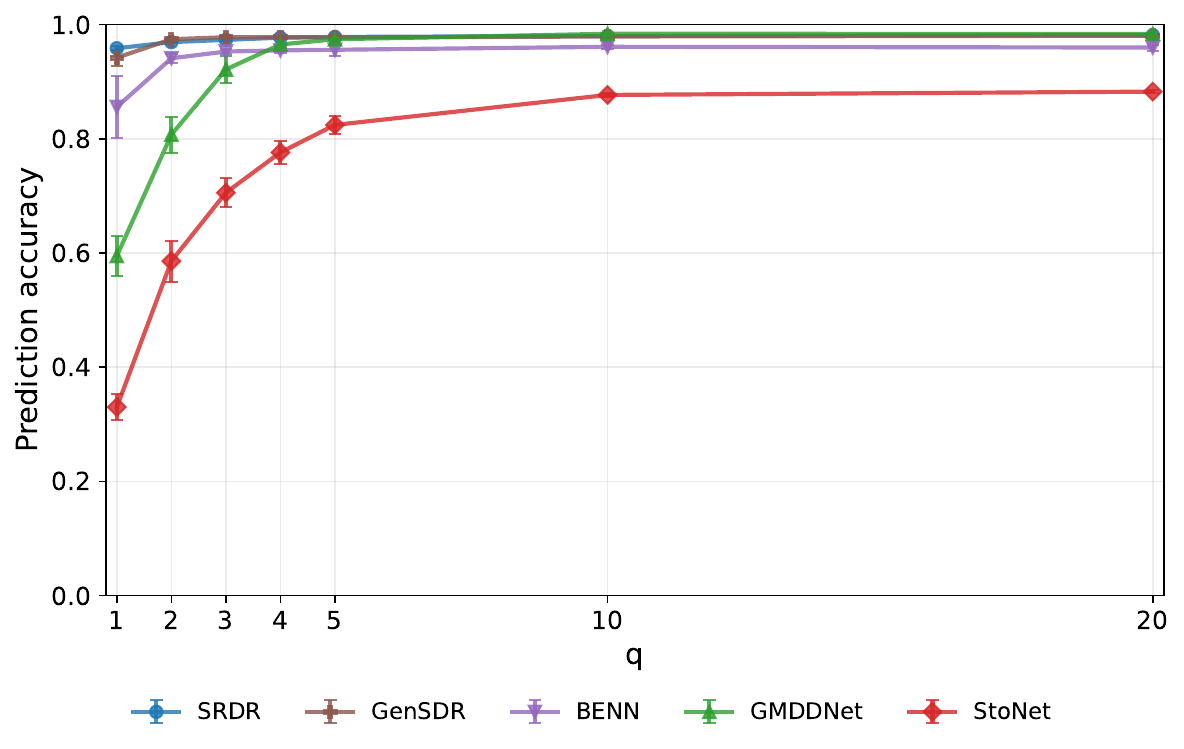}
\caption{Classification accuracy on the MNIST dataset by representation dimension $q$.}
\label{fig:mnist_accuracy}
\end{figure}

The learned embeddings reveal clear differences among the competing methods. StoNet produces embeddings with substantial overlap across digit classes, indicating that it does not fully separate the class-relevant structure in the images when $q$ is relatively small. GMDDNet and BENN yield more visually distinct clusters, although partial overlap remains among some digit classes. Both GenSDR and SRDR produce well-separated clusters for most digits, suggesting that the learned representation preserves discriminative information about the digit labels in a compact low-dimensional representation, with a few observations noticeably away from their primary cluster for GenSDR. We further compare the classification performance of the competing methods across several reduced dimensions. For SRDR and GenSDR, we use their native generative predictions: generated response vectors are averaged for each test observation, and the class corresponding to the largest coordinate is selected. For GMDDNet, StoNet, and BENN, which learn a dimension reduction but do not directly produce class predictions, we fit a logistic regression classifier on the learned representations. The results are summarized in Figure~\ref{fig:mnist_accuracy}. Overall, SRDR achieves the best or tied-best classification accuracy across all $q$. The improvement is particularly pronounced when $q$ is small, which indicates that SRDR is able to retain predictive information in a highly compact representation.

\section{Discussion} \label{sec:discussion}

We proved that sufficient dimension reduction can be performed via distributional regression and minimization of strictly proper scoring rules, thereby simultaneously achieving dimension reduction and targeting downstream predictive accuracy of probabilistic forecasts for the outcome $Y$. Based on this result, we introduced Sufficiently Reduced Distributional Regression (SRDR), a generative method with the energy score as target score. SRDR adapts the engression method of \citet{ShenMeinshausen2025} to sufficient dimension reduction, is computationally efficient, and has statistical consistency guarantees. In empirical applications, it outperforms non-generative methods for sufficient dimension reduction and is generally on par with, and in terms of predictive accuracy often superior to, the state-of-the-art generative method GenSDR by \citet{XuEtAl2025}.

There are several directions for future research, and some limitations of the present work. Our consistency result assumes an exact minimizer of the empirical energy score, whereas training is a nonconvex optimization problem. The reduced dimension $q$ is selected by validation score, which the theory does not cover. Our empirical results demonstrate strong performance of the extension of SRDR to classification problems. However, the general theory only covers regression problems with Lipschitz continuous reduction map and generator, and is not directly applicable to classification settings. Further investigations of the consistency of the classification SRDR are desirable. The multi-environment extension, with a shared reduction map and environment-specific generators, has consistency guarantees, but it is so far illustrated only through the transfer learning experiment in Appendix \ref{app:transfer}, which shows that a representation learned on the training environments is not necessarily sufficient for a new test environment. We investigated refining the reduction map with additional test environment data, in the spirit of the method by \citet{GeZhouHuang2025}. This approach shows strong performance, but the theoretical understanding is yet limited.

\section*{Acknowledgments, disclaimer, and reproducibility}
Alexander Henzi is supported by Shenzhen Science and Technology Program Grant AI2026019. Xinwei Shen is supported by Royalty Research Fund grant GR067344.

Large language models were used to assist with code implementation and to proofread the manuscript for typos, inconsistencies in mathematical notation, and language.

Code and replication material are available on \url{https://github.com/liutiang/SRDR}.

\printbibliography

\clearpage

\appendix

\section{Proof of Lemma \ref{lem:generative_sufficiency}}

\begin{proof}
By the Borel isomorphism Theorem \citep{RaoSrivastava1994}, there exists a Borel isomorphism $e\colon \mathbb{R}^p \rightarrow \mathbb{R}^q$, that is, a bijection that is measurable with measurable inverse. Since $X = e^{-1}(e(X))$, we have $\sigma(e(X)) = \sigma(X)$. Equation (1) of \citet{Song2025} applied to $(e(X),Y)$ shows that there exists a measurable function $\tilde{d}\colon \mathbb{R}^{q+1} \rightarrow \mathbb{R}^m$ and $\tilde{\eta} \sim \mathrm{Unif}(0,1)$ independent of $X$ such that
\[
    Y = \tilde{d}(e(X),\tilde{\eta})
\]
almost surely. Again by the Borel isomorphism Theorem, there exists a Borel isomorphism $h\colon \mathbb{R}^s \rightarrow \mathbb{R}$. Then the distribution of $h(\eta)$ for $\eta \sim \nu$ is atomless, because if $P(h(\eta) = z) > 0$ for some $z \in \mathbb{R}$, this would mean that $P(\eta = h^{-1}(z)) > 0$, implying that $\eta$ has an atom. Hence, the CDF $F$ of $h(\eta)$ is continuous, implying that $F(F^{-1}(u)) = u$ for all $u \in (0,1)$, and $F(h(\eta)) \sim \mathrm{Unif}(0,1)$. Consequently, $h^{-1}(F^{-1}(\tilde{\eta})) =: \eta$ is independent of $X$ and has distribution $\nu$, and we can define 
\[
    d\colon \mathbb{R}^{q+s} \rightarrow \mathbb{R}^m, \ (e,z) \mapsto \tilde{d}(e, F(h(z))),
\]
which is a composition of measurable functions and satisfies
\[
    d(e(X),\eta) = \tilde{d}(e(X),\tilde{\eta}) = Y
\]
almost surely.
\end{proof}

\section{Proof of Theorem \ref{thm:consistencyMULTIY}}

The proof of Theorem \ref{thm:consistencyMULTIY} is structured as follows:
\begin{itemize}
    \item In Section \ref{sec:test_approximation}, we reformulate the energy distance in the form of test functions, as already outlined in Section \ref{sec:test_functions}, and we show that one can restrict the integration domain to a suitable compact set by only adding a small approximation error. 
    \item Notation for the proofs is introduced in Section \ref{sec:proof_notation}.
    \item Like in \citet{TangLi2025}, we decompose the error into estimation and approximation error in Section \ref{sec:decomposition}, which are then bounded in Sections \ref{sec:s21} and \ref{sec:s22}.
    \item The bounds are combined in Section \ref{sec:completing} to complete the convergence result.
\end{itemize} 
While the overall proof closely follows that of \citet{TangLi2025}, the main challenges are the extension to multivariate outcomes and incorporating the noise variable $\eta$ into the calculations.

\subsection{Test functions and Approximation of Energy Distance} \label{sec:test_approximation}
For $y, t \in \mathbb{R}^m$, define the test functions
\[
    g_1(y,t) = \frac{\cos(t^{\top}y)}{\|t\|^{(m+1)/2}}, \ g_2(y,t) = \frac{\sin(t^{\top}y) }{\|t\|^{(m+1)/2}}.
\]
The energy distance can be written as
\[
    \mathcal{D}^2(P,Q) = \frac{\Gamma((m+1)/2)}{2\pi^{(m+1)/2}} \int_{\mathbb{R}^m} \left(\mathbb{E}_P[g_1(Y, t)] - \mathbb{E}_Q[g_1(Y,t)]\right)^2 + \left(\mathbb{E}_P[g_2(Y, t)] - \mathbb{E}_Q[g_2(Y,t)]\right)^2 \, dt,
\]
and the energy score is analogously equal to
\[
    \mathrm{ES}(P,y) = \frac{\Gamma((m+1)/2)}{2\pi^{(m+1)/2}} \int_{\mathbb{R}^m} \left(\mathbb{E}_P[g_1(Y, t)] - g_1(y,t)\right)^2 + \left(\mathbb{E}_P[g_2(Y, t)] - g_2(y,t)\right)^2 \, dt.
\]
For $B > 0$, define the set
\[
    R(B) = [-B, B]^m \setminus (-1/B, 1/B)^m.
\]
\begin{lemma} \label{lem:g_lipschitz_t_MULTIY}
For any $B_y, B > 0$, the functions $g_1$, $g_2$ are bounded and Lipschitz continuous with Lipschitz constant $L_g = L_g(B, B_y, m)$ on the domain $[-B_y, B_y]^m \times R(B)$.
\end{lemma}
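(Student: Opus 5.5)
The plan is to use that on $R(B)$ the norm $\|t\|$ is bounded away from zero and from above, so $g_1,g_2$ are smooth functions on a compact set.

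First, I will record the range of $\|t\|$. For $t \in R(B)$, some coordinate satisfies $|t_j| \geq 1/B$, so $\|t\| \geq 1/B$. Since $t\in[-B,B]^m$, also $\|t\| \leq \sqrt{m}B$. Boundedness then follows from
\[
|g_k(y,t)| \leq \|t\|^{-(m+1)/2} \leq B^{(m+1)/2}, \quad k=1,2,
\]
uniformly in $y$.

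For the Lipschitz property, I will split the increment as
\begin{align*}
g_1(y,t)-g_1(y',t') &= \frac{\cos(t^\top y)-\cos(t'^\top y')}{\|t\|^{(m+1)/2}} + \cos(t'^\top y')\left(\|t\|^{-(m+1)/2}-\|t'\|^{-(m+1)/2}\right).
\end{align*}
The first term is bounded by $B^{(m+1)/2}|t^\top y - t'^\top y'|$, because cosine is $1$-Lipschitz. I then write $t^\top y - t'^\top y' = t^\top(y-y') + (t-t')^\top y'$. Cauchy--Schwarz with $\|t\|\le \sqrt m B$ and $\|y'\|\le \sqrt m B_y$ gives the bound
\[
\sqrt{m}B\|y-y'\| + \sqrt{m}B_y\|t-t'\|.
\]

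The second term needs the map $u\mapsto u^{-(m+1)/2}$ to be Lipschitz on $[1/B,\sqrt m B]$. Its derivative is bounded there by $\frac{m+1}{2}B^{(m+3)/2}$. Combined with $|\|t\|-\|t'\|| \leq \|t-t'\|$, this bounds the second term by a constant times $\|t-t'\|$.

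This step has one subtlety: $R(B)$ is not convex, so the segment between $t$ and $t'$ may pass through $(-1/B,1/B)^m$. I avoid this by applying the mean value theorem to the scalar function $u\mapsto u^{-(m+1)/2}$ on the interval $[1/B,\sqrt m B]$, which is convex. Both $\|t\|$ and $\|t'\|$ lie in this interval, so only the norms need to stay in range, not the segment itself.

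Adding the bounds gives $|g_1(y,t)-g_1(y',t')| \leq L_g(\|y-y'\|+\|t-t'\|)$. The same argument applies to $g_2$, since sine is also $1$-Lipschitz. The resulting constant depends only on $B, B_y, m$, for example
\[
L_g = B^{(m+1)/2}\sqrt{m}\max(B,B_y) + \tfrac{m+1}{2}B^{(m+3)/2}.
\]
This yields Lipschitz continuity with respect to the Euclidean norm on $\mathbb{R}^{2m}$, up to a factor $\sqrt2$. No step is a genuine obstacle beyond the non-convexity issue handled above.
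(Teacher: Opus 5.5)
Your proof is correct, and it reaches the result by a more explicit route than the paper. The paper's entire proof is one sentence: on the compact set $[-B_y,B_y]^m \times R(B)$ the functions $g_1,g_2$ are continuously differentiable, hence bounded and Lipschitz.

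That argument tacitly relies on the standard fact that a function which is $C^1$ on an open neighbourhood of a compact set (here $\{t \neq 0\}$) is Lipschitz on that set. Because $R(B)$ is not convex, this fact needs a short compactness argument. The mean value theorem cannot be applied naively along segments: the segment joining $t$ and $t'=-t$ in $R(B)$ passes through the singularity at $t=0$.

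You avoid this issue without any compactness argument. You use the additive splitting, bound the phase difference by Cauchy--Schwarz, and apply the mean value theorem only to the scalar map $u \mapsto u^{-(m+1)/2}$ on the convex interval $[1/B,\sqrt{m}B]$. This also gives an explicit $L_g(B,B_y,m)$.

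The paper's phrasing, in turn, records smoothness rather than just Lipschitz continuity. It reuses that smoothness later, for instance the bounded Hessian in $t$ behind the midpoint/Taylor step in the discretization of $S_{21}$ and $S_{22}$. Your explicit constant would instead let one track the dependence on $B$ through the final bound, if desired.
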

The above lemma holds simply because restricted to the compact set $[-B_y, B_y]^m \times R(B)$, the functions $g_1, g_2$ are continuously differentiable. Let now $\mathcal{P}(B_y)$ be the family of distributions $P$ for which $\mathbb{E}_P[\|Y\|] \leq B_y$. In the result below, we show that restricting the integral in the characteristic function representation \eqref{eq:energy_divergence} to $R(B)$ approximates $\mathcal{D}^2(P,Q)$ uniformly over $\mathcal{P}(B_y)$, with an error that depends only on $B$, $B_y$ and $m$.

\begin{lemma} \label{lem:energy_approximation}
For any $\varepsilon > 0$, there exists $B = B(\varepsilon, B_y, m) > 0$ such that
\[
    \sup_{P,Q \in \mathcal{P}(B_y)}\Big\vert\mathcal{D}^2(P,Q) - \frac{\Gamma((m+1)/2)}{2\pi^{(m+1)/2}} \int_{R(B)} \frac{|\phi_P(t) - \phi_Q(t)|^2}{\|t\|^{m+1}} \, dt \Big\vert \leq \varepsilon.
\]
\end{lemma}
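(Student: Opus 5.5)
The difference between $\mathcal{D}^2(P,Q)$ and the truncated integral is the integral of the nonnegative integrand $|\phi_P(t)-\phi_Q(t)|^2/\|t\|^{m+1}$ over $\mathbb{R}^m \setminus R(B)$. This complement is contained in the union of two regions: the small cube $(-1/B,1/B)^m$, which lies inside the ball $\{\|t\| < \sqrt{m}/B\}$, and the exterior $\{t \notin [-B,B]^m\}$, which lies inside $\{\|t\| > B\}$. So it suffices to bound the integrand on these two regions by quantities that depend only on $B_y$ and $m$, and not on $P,Q$. Each bound should then be integrable and tend to zero as $B\to\infty$. The constant $\Gamma((m+1)/2)/(2\pi^{(m+1)/2})$ is harmless throughout.

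\textbf{Large frequencies.} Since $|\phi_P|,|\phi_Q|\le 1$, we have $|\phi_P(t)-\phi_Q(t)|^2\le 4$. In polar coordinates,
\[
\int_{\|t\|>B} \frac{4}{\|t\|^{m+1}}\,dt = 4\,\omega_{m-1}\int_B^\infty r^{-2}\,dr = \frac{4\omega_{m-1}}{B},
\]
where $\omega_{m-1}$ is the surface area of the unit sphere. This bound holds uniformly in $P,Q$ and tends to zero.

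\textbf{Small frequencies.} Here the first moment bound is used. For any $P\in\mathcal{P}(B_y)$, the inequality $|e^{ix}-1|\le |x|$ gives
\[
|\phi_P(t)-1| \le \mathbb{E}_P|t^\top Y| \le \|t\|\,\mathbb{E}_P\|Y\| \le \|t\| B_y .
\]
Hence $|\phi_P(t)-\phi_Q(t)|\le 2B_y\|t\|$, and the integrand is at most $4B_y^2\|t\|^{1-m}$. Integrating over $\|t\|<\sqrt m/B$ in polar coordinates gives
\[
4B_y^2\,\omega_{m-1}\int_0^{\sqrt m/B} r^{1-m}r^{m-1}\,dr = \frac{4B_y^2\omega_{m-1}\sqrt m}{B},
\]
which also tends to zero uniformly.

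\textbf{Conclusion.} Adding the two bounds and multiplying by the constant, we choose $B$ large enough that the total is at most $\varepsilon$. This gives $B=B(\varepsilon,B_y,m)$. The supremum is finite for every $P,Q$, and both $\mathcal{D}^2$ and the truncated integral are finite under first moments, so the difference is well defined.

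The only genuine obstacle is the singularity of the weight at the origin. The Lipschitz-type estimate $|\phi_P(t)-\phi_Q(t)|=O(\|t\|)$, which needs only the first moment, is exactly enough to reduce $\|t\|^{-(m+1)}$ to the integrable $\|t\|^{1-m}$.
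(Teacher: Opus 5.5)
Your proposal is correct and follows the paper's proof essentially step for step. Both arguments split $\mathbb{R}^m\setminus R(B)$ into the exterior of $[-B,B]^m$ and the small cube, bound $|\phi_P-\phi_Q|^2\le 4$ at large frequencies and $|\phi_P-\phi_Q|\le 2B_y\|t\|$ near the origin, and integrate in polar coordinates to get an $O(1/B)$ total; your $\omega_{m-1}$ is the paper's $mV(m)$, and your polar formula also covers the case $m=1$, which the paper treats separately.
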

\begin{proof}
The set $\mathbb{R}^m \setminus [-B,B]^m$ is contained in $\{t \in \mathbb{R}^m\colon \|t\| > B\}$ and $|\phi_P(t) - \phi_Q(t)| \leq 2$, so
\begin{align*}
    \int_{\mathbb{R}^m \setminus [-B,B]^m} \frac{|\phi_P(t) - \phi_Q(t)|^2}{\|t\|^{m+1}} \, dt \leq 4\int_{\{t\colon \|t\| > B\}} \frac{1}{\|t\|^{m+1}} \, dt.
\end{align*}
Lemma \ref{lem:rotation_symmetric} implies
\[
\int_{\{t\colon \|t\| > B\}} \frac{1}{\|t\|^{m+1}} \, dt = mV(m)\int_{B}^{\infty} u^{m-1-(m+1)} \, du = \frac{mV(m)}{B}.
\]
It is well known that the characteristic function $\phi_P$ of a distribution $P$ with finite first moment satisfies
\[
    |\phi_P(t) - 1| \leq \mathbb{E}_P[|t^{\top}Y|] \leq \|t\| \, \mathbb{E}_P[\|Y\|].
\]
For $P, Q \in \mathcal{P}(B_y)$, this gives $|\phi_P(t) - \phi_Q(t)| \leq 2\|t\|B_y$, and hence
\begin{align*}
    \int_{[-1/B, 1/B]^m} \frac{|\phi_P(t) - \phi_Q(t)|^2}{\|t\|^{m+1}} \, dt & \leq \int_{[-1/B, 1/B]^m} \frac{4\|t\|^2B_y^2}{\|t\|^{m+1}} \, dt \\
    & = 4B_y^2\int_{[-1/B, 1/B]^m} \frac{1}{\|t\|^{m-1}} \, dt.
\end{align*}
If $m = 1$, the integral in the upper bound above equals
\[
    \int_{[-1/B, 1/B]^m} \frac{1}{\|t\|^{m-1}} \, dt = \frac{2}{B}.
\]
For $m > 1$, the set $[-1/B, 1/B]^m$ is contained in a ball with radius $\sqrt{m}B^{-1}$, so
\[
\int_{[-1/B, 1/B]^m} \frac{1}{\|t\|^{m-1}} \, dt \leq \int_{\{t\colon \|t\| \leq \sqrt{m}B^{-1}\}} \frac{1}{ \|t\|^{m-1}} \, dt,
\]
and Lemma \ref{lem:rotation_symmetric} implies
\[
    \int_{\{t\colon \|t\| \leq \sqrt{m}B^{-1}\}} \frac{1}{ \|t\|^{m-1}} \, dt = mV(m)\int_0^{\sqrt{m}B^{-1}}r^{m-(m-1)-1} \, \, dr = \frac{m^{3/2}V(m)}{B}.
\]
Combining the above derivations, we see that
\[
\frac{\Gamma((m+1)/2)}{2\pi^{(m+1)/2}}\int_{\mathbb{R}^m \setminus R(B)} \frac{|\phi_P(t) - \phi_Q(t)|^2}{\|t\|^{m+1}} \, dt \leq \frac{C(B_y, m)}{B}
\]
for a constant $C(B_y, m)$ depending only on $B_y$ and $m$, because $\mathbb{R}^m \setminus R(B)$ is contained in $(\mathbb{R}^m \setminus [-B,B]^m) \cup [-1/B, 1/B]^m$. Choosing $B \geq C(B_y, m)/\varepsilon$ proves the result.
\end{proof}

The following lemma is a standard result about the integral of rotationally symmetric functions, which is used in Lemma \ref{lem:energy_approximation} and in the further steps of the proof below.

\begin{lemma} \label{lem:rotation_symmetric}
Let $f \colon \mathbb{R}^d \rightarrow [0,\infty)$ satisfy $f(x) = h(\|x\|)$ for some function $h$ and all $x \in \mathbb{R}^d$. Then,
\[
    \int_{\mathbb{R}^d} f(x) \, dx = dV(d)\int_{0}^{\infty} h(r) r^{d-1}\, dr,
\]
where $V(d) = \pi^{d/2}/\Gamma(d/2 + 1)$ is the volume of the $d$-dimensional unit ball.
\end{lemma}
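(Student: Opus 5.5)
The plan is to reduce the $d$-dimensional integral to a one-dimensional one in the radial variable by using polar coordinates, i.e.\ the polar decomposition of Lebesgue measure on $\mathbb{R}^d\setminus\{0\}$. Concretely, I would use the fact that the map $x \mapsto (\|x\|, x/\|x\|)$ is a measurable bijection from $\mathbb{R}^d\setminus\{0\}$ onto $(0,\infty)\times S^{d-1}$. Under this map Lebesgue measure becomes the product of $r^{d-1}\,dr$ and the surface measure $\sigma_{d-1}$ on the unit sphere; this is standard, e.g.\ Folland, Theorem 2.49. The origin has measure zero and can be ignored. Since $f \geq 0$ is measurable and $f(x) = h(\|x\|)$ does not depend on the angular component, Tonelli's theorem gives
\[
\int_{\mathbb{R}^d} f(x)\,dx = \int_{S^{d-1}}\int_0^\infty h(r) r^{d-1}\,dr\,d\sigma_{d-1} = \sigma_{d-1}(S^{d-1})\int_0^\infty h(r)r^{d-1}\,dr.
\]
Nonnegativity is what allows Tonelli to be applied without any integrability assumption, with both sides possibly equal to $+\infty$.

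It then remains to identify the total surface measure $\sigma_{d-1}(S^{d-1})$ with $dV(d)$. I would obtain this from the formula itself, applied to $h = 1_{[0,1]}$. The left side is then the volume $V(d)$ of the unit ball, and the right side is $\sigma_{d-1}(S^{d-1})\int_0^1 r^{d-1}\,dr = \sigma_{d-1}(S^{d-1})/d$. Hence $\sigma_{d-1}(S^{d-1}) = dV(d)$. For the explicit value $V(d)=\pi^{d/2}/\Gamma(d/2+1)$, I would apply the same formula to the Gaussian $f(x) = e^{-\|x\|^2}$. Its integral is $\pi^{d/2}$ by Fubini and the one-dimensional Gaussian integral. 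The radial side equals $dV(d)\int_0^\infty e^{-r^2}r^{d-1}\,dr = dV(d)\,\Gamma(d/2)/2$. Solving gives $V(d) = 2\pi^{d/2}/(d\,\Gamma(d/2)) = \pi^{d/2}/\Gamma(d/2+1)$.

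A measurability check is also needed, and I expect it to be the main technical point: $h$ should be Borel on $[0,\infty)$. This follows from measurability of $f$, because $h(r) = f(r e_1)$ and $r\mapsto re_1$ is continuous. Otherwise the argument is routine, since the polar decomposition of Lebesgue measure is quoted rather than re-proved. In the applications in Lemma \ref{lem:energy_approximation}, the integrands are indicator-restricted powers of $\|t\|$, so they are Borel and nonnegative, and the lemma applies directly.
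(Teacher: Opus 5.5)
Your argument is correct. It is the standard polar-coordinates proof: the polar decomposition of Lebesgue measure plus Tonelli, with $\sigma_{d-1}(S^{d-1}) = dV(d)$ fixed by taking $h = 1_{[0,1]}$ and the value of $V(d)$ by the Gaussian integral. The paper itself gives no proof and simply invokes the lemma as a standard fact, so this is the justification it relies on.

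The only fine print is that $h(r) = f(re_1)$ gives Borel measurability of $h$ only when $f$ is Borel. This is harmless, since Tonelli already gives measurability of the radial part and every integrand used in the paper is Borel.
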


\subsection{Notation} \label{sec:proof_notation}

Our target of interest is the expected energy distance,
\begin{align*}
    \mathcal{L} & = \frac{\Gamma((m+1)/2)}{2\pi^{(m+1)/2}}\mathbb{E}\Big[\mathbb{E}_{X\sim P_X}\Big[\int_{\mathbb{R}^m} \frac{(\mathbb{E}_{\eta}[\cos(t^{\top}d^*(e^*(X), \eta)) - \cos(t^{\top}T_{B_y}\hat{d}(\hat{e}(X), \eta))])^2}{\|t\|^{m+1}} \\
    & \qquad + \frac{(\mathbb{E}_{\eta}[\sin(t^{\top}d^*(e^*(X), \eta)) - \sin(t^{\top}T_{B_y}\hat{d}(\hat{e}(X), \eta))])^2}{\|t\|^{m+1}} \,dt\Big]\Big].
\end{align*}
The outer expectation is taken with respect to the training data $(X_i, Y_i)$, $i = 1, \dots, n$, that yields $(\hat{d}, \hat{e})$, where we drop the $n$ in the subscript. To simplify the proof, consider only the cosine part,
\[
    \tilde{\mathcal{L}} = \mathbb{E}\Big[\mathbb{E}_{X\sim P_X}\Big[\int_{\mathbb{R}^m} \frac{(\mathbb{E}_{\eta}[\cos(t^{\top}d^*(e^*(X), \eta)) - \cos(t^{\top}T_{B_y}\hat{d}(\hat{e}(X), \eta))])^2}{\|t\|^{m+1}} \, dt\Big]\Big],
\]
also omitting the multiplicative constant. Exactly the same proof, but with more cumbersome notation, applies to the overall error $\mathcal{L}$, because the test functions with sine and with cosine are equivalent in their relevant smoothness and boundedness properties. We introduce the following notation, which is analogous to that of \citet{TangLi2025},
\begin{align*}
    & g(y,t) = \frac{\cos(t^{\top}y)}{\|t\|^{(m+1)/2}}, \\
    & s^*(x,t) = \mathbb{E}[g(Y,t) \mid X = x] = \mathbb{E}_{\eta}[g(d^*(e^*(x),\eta), t)] =: h^*(e^*(x),t), \\
    & \hat{h}(\hat{e}(x),t) = \mathbb{E}_{\eta}[g(T_{B_y}\hat{d}(\hat{e}(x),\eta), t)].
\end{align*}
With these definitions, we can write $\tilde{\mathcal{L}}$ compactly as
\begin{equation} \label{eq:error_integrated}
    \tilde{\mathcal{L}} = \mathbb{E}\Big[\mathbb{E}_{X\sim P_X}\Big[\int_{\mathbb{R}^m} (s^*(X,t) - \hat{h}(\hat{e}(X),t))^2 \, dt \Big]\Big].
\end{equation}
Accordingly, we treat $(\hat{e}, \hat{d})$ as a minimizer of the empirical loss
\begin{align*}
\mathbb{E}_n\Big[\int_{\mathbb{R}^m} \left(g(Y,t) - \mathbb{E}_{\eta}[g(T_{B_y}d(e(X),\eta),t)]\right)^2 \, dt\Big],
\end{align*}
where $\mathbb{E}_n$ is defined as the expected value over the empirical distribution $\nu_n = n^{-1}\sum_{i=1}^n \delta_{(X_i, Y_i)}$.

\subsection{Decomposition} \label{sec:decomposition}
With Lemma S.1 from Section S.2.3 of \citet{TangLi2025}, we can decompose the error \eqref{eq:error_integrated} as
\[
    \tilde{\mathcal{L}} = \mathbb{E}[S_{21}] + \mathbb{E}[S_{22}],
\]
where we define
\begin{align*}
    & D_n = (X_i, Y_i)_{i=1}^n, \\
    & S_{21} = \Big\{\mathbb{E}\Big[\int_{\mathbb{R}^m} (\hat{h}(\hat{e}(X),t) - g(Y,t))^2 \, dt \mid D_n\Big]  - \mathbb{E}\Big[\int_{\mathbb{R}^m} (s^*(X,t) - g(Y,t))^2 \, dt\Big] \\
    & \qquad - 2\Big[\mathbb{E}_n\Big[\int_{\mathbb{R}^m} (\hat{h}(\hat{e}(X),t) - g(Y,t))^2 \, dt \Big]  - \mathbb{E}_n\Big[\int_{\mathbb{R}^m} (s^*(X,t) - g(Y,t))^2 \, dt\Big]\Big]\Big\}, \\
    & S_{22} = 2\Big[\mathbb{E}_n\Big[\int_{\mathbb{R}^m} (\hat{h}(\hat{e}(X),t) - g(Y,t))^2 \, dt \Big]  - \mathbb{E}_n\Big[\int_{\mathbb{R}^m} (s^*(X,t) - g(Y,t))^2 \, dt\Big]\Big].
\end{align*}
Here $(X,Y)$ is a copy of $(X_1, Y_1)$ independent of the training data $D_n$. The notation $S_{21}, S_{22}$ is chosen to facilitate comparison to the proof of \citet{TangLi2025}, where the same decomposition is made. Let $\varepsilon > 0$. Since $Y$ and $T_{B_y}\hat{d}(\hat{e}(X),\eta)$ are contained in $[-B_y, B_y]^m$ almost surely, all distributions involved belong to $\mathcal{P}(\sqrt{m}B_y)$, and Lemma \ref{lem:energy_approximation} implies that we can choose a constant $B$ sufficiently large such that
\begin{align}
     & S_{21} \leq \Big\{\mathbb{E}\Big[\int_{R(B)} (\hat{h}(\hat{e}(X),t) - g(Y,t))^2 \, dt \mid D_n\Big]  - \mathbb{E}\Big[\int_{R(B)} (s^*(X,t) - g(Y,t))^2 \, dt\Big] \nonumber \\
    & \qquad - 2\Big[\mathbb{E}_n\Big[\int_{R(B)} (\hat{h}(\hat{e}(X),t) - g(Y,t))^2 \, dt \Big]  - \mathbb{E}_n\Big[\int_{R(B)} (s^*(X,t) - g(Y,t))^2 \, dt\Big]\Big]\Big\} + \varepsilon/4 \label{eq:s21_eps_bound},
\end{align}
replacing the integration domain $\mathbb{R}^m$ by $R(B)$ and adding an additional error of $\varepsilon/4$.

\subsection{Discretizing and bounding $\mathbb{E}[S_{21}]$} \label{sec:s21}
Assume that $B$ is a positive integer, which is no restriction since $B$ can be taken arbitrarily large, and define $\kappa = 2B$. For an integer $\tilde{M}$ that is a multiple of $2B^2$, partition $[-B, B]^m$ into $M := \tilde{M}^m$ cubes of side length $\kappa/\tilde{M}$ with centers
\[
    t(i_1, \dots, i_m) = \left(-B + \frac{\kappa(i_1 - 1/2)}{\tilde{M}}, \dots, -B + \frac{\kappa(i_m - 1/2)}{\tilde{M}}\right), \ i_1, \dots, i_m \in \{1, \dots, \tilde{M}\}.
\]
Let $t_1, \dots, t_M$ be an enumeration of these centers, let $R_j$ be the cube with center $t_j$, and let $J \subseteq \{1, \dots, M\}$ be the indices of the cubes that are not contained in $[-B^{-1}, B^{-1}]^m$. Since $\pm B^{-1}$ are grid points, the cubes $R_j$, $j \in J$, are contained in $R(B)$ and their union is $R(B)$. Define
\begin{align*}
    & \mathbf{s}^*(x) = (s^*(x, t_j))_{j \in J}, \\
    & \mathbf{g}(Y) = (g(Y, t_j))_{j \in J}, \\
    & \mathbf{\hat{h}}(\hat{e}(X)) = (\hat{h}(\hat{e}(X), t_j))_{j \in J}.
\end{align*}
Write the term in braces in \eqref{eq:s21_eps_bound} as $\int_{R(B)} \Psi(t) \, dt = \sum_{j \in J} \int_{R_j} \Psi(t) \, dt$, where
\begin{align*}
    & \psi(x,y,t) = (\hat{h}(\hat{e}(x),t) - g(y,t))^2 - (s^*(x,t) - g(y,t))^2, \\
    & \Psi(t) = \mathbb{E}[\psi(X,Y,t) \mid D_n] - 2\mathbb{E}_n[\psi(X,Y,t)].
\end{align*}
The functions $t \mapsto g(y,t)$, $s^*(x,t)$ and $\hat{h}(\hat{e}(x),t)$ are averages of $g(y',t)$ over $y' \in [-B_y, B_y]^m$, and $g$ is twice continuously differentiable on the compact set $[-B_y, B_y]^m \times R(B)$. Hence there is a constant $L_{\Psi} = L_{\Psi}(B, B_y, m)$ such that the Hessian of $\Psi$ satisfies $\|\nabla^2 \Psi(t)\| \leq L_{\Psi}$ for all $t \in R(B)$. Since $t_j$ is the center of $R_j$, the linear term of the Taylor expansion of $\Psi$ around $t_j$ integrates to zero over $R_j$, and we obtain
\begin{align*}
    \int_{R_j} \Psi(t) \, dt & \leq (\kappa/\tilde{M})^m \Psi(t_j) + \frac{L_{\Psi}}{2}\int_{R_j} \|t-t_j\|^2 \, dt \\
    & \leq \kappa^m M^{-1} \Psi(t_j) + \frac{L_{\Psi}}{2}\int_{\{t\colon \|t\| \leq \sqrt{m}\kappa/(2\tilde{M})\}} \|t\|^2\, dt \\
    & = \kappa^m M^{-1} \Psi(t_j) + \frac{L_{\Psi}}{2} \frac{mV(m)}{m+2} (\sqrt{m}\kappa/(2\tilde{M}))^{m+2} \\
    & = \kappa^m M^{-1} \Psi(t_j) + cM^{-(1+2/m)},
\end{align*}
where we apply Lemma \ref{lem:rotation_symmetric} and define $c = L_{\Psi}(\sqrt{m}\kappa/2)^{m+2}mV(m)/(2(m+2))$. Summing over $j \in J$ and using $|J| \leq M$ gives the upper bound
\begin{align} 
    S_{21} & \leq \kappa^m M^{-1}\big\{\mathbb{E}[\|\mathbf{\hat{h}}(\hat{e}(X)) - \mathbf{g}(Y)\|^2 \mid D_n]  - \mathbb{E}[\|\mathbf{s}^*(X)  - \mathbf{g}(Y)\|^2] \nonumber \\
    & \qquad - 2\big[\mathbb{E}_n[\|\mathbf{\hat{h}}(\hat{e}(X)) - \mathbf{g}(Y)\|^2]  - \mathbb{E}_n[\|\mathbf{s}^*(X)  - \mathbf{g}(Y)\|^2]\big]\big\} + cM^{-2/m} + \varepsilon/4 \nonumber \\
    & =: \kappa^m M^{-1}\tilde{S}_{21} + cM^{-2/m} + \varepsilon/4. \label{eq:s21_upper_bound}
\end{align}
To bound $\mathbb{E}[\tilde{S}_{21}]$, we need to bound the complexity of our function class, similarly to Lemma S.2 and Lemma S.3 of \citet{TangLi2025}. The function class relevant for our problem is
\begin{align*}
    & \mathcal{F}^{(t)}(B_y) = \Big\{f\colon \mathbb{R}^{p} \rightarrow \mathbb{R}, x \mapsto \mathbb{E}_{\eta}[g(T_{B_y}d(e(x), \eta), t)], \\
    & \quad \text{ with } e \in \mathcal{F}(p,\ell_1,r_1,q,B_w), \ d \in \mathcal{F}(q+s,\ell_2,r_2,m,B_w)\Big\}.
\end{align*}
We express this function class as a subset of a larger neural network class for which a bound on the covering number is available.
To this end, define the auxiliary class
\begin{align*}
    & \mathcal{F}(B_y) = \big\{ f\colon \mathbb{R}^{p+s} \rightarrow \mathbb{R}, x \mapsto t^{\top}T_{B_y}d(e(x_{1:p}), x_{(p+1):(p+s)}), \text{ with }\\
    & \quad e \in \mathcal{F}(p,\ell_1,r_1,q,B_w), \ d \in \mathcal{F}(q+s,\ell_2,r_2,m,B_w), t \in [-B,B]^m \big\}.
\end{align*}

\begin{lemma} \label{lem:auxiliary_network_class}
For $B_w \geq \max(1,B_y, B)$, we have
\[
    \mathcal{F}(B_y) \subseteq \check{\mathcal{F}} := \mathcal{F}\big(p+s,\ell_1 + \ell_2 + 3, ((r_1 + 2s)_{i=1,\dots,\ell_1}, q + s, (r_2)_{i=1,\dots, \ell_2}, m, 4m), 1, B_w\big).
\]
\end{lemma}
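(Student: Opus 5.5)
The plan is to build, for any fixed $e \in \mathcal{F}(p,\ell_1,r_1,q,B_w)$, $d \in \mathcal{F}(q+s,\ell_2,r_2,m,B_w)$ and $t \in [-B,B]^m$, one ReLU network with input $x=(x_{1:p},x_{(p+1):(p+s)})$ that computes $t^{\top}T_{B_y}d(e(x_{1:p}),x_{(p+1):(p+s)})$. Its hidden widths should be $(r_1+2s)$ repeated $\ell_1$ times, then $q+s$, then $r_2$ repeated $\ell_2$ times, then $m$, then $4m$, and all weights and biases should lie in $[-B_w,B_w]$. This is pure bookkeeping: I stack the networks of $e$ and $d$ and add a few gadget layers for passing the noise through, for clipping, and for the final inner product. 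The tool throughout is the identity $u=\sigma(u)-\sigma(-u)$, which lets a ReLU layer transmit a signed quantity with two neurons per coordinate and weights in $\{0,\pm1\}$.

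\textbf{Step 1 (encoder block with noise carried along).} In the first $\ell_1$ hidden layers I place the hidden layers of $e$, acting only on $x_{1:p}$. In the first layer I add $2s$ neurons $\sigma(x_{p+k})$ and $\sigma(-x_{p+k})$, $k=1,\dots,s$. In each later layer these $2s$ neurons are copied with identity weights, which is possible since $\sigma(\sigma(u))=\sigma(u)$. The weight matrices are block diagonal, with the blocks of $e$ and blocks with entries $\pm1$. The width is then $r_1+2s$, and entries stay bounded by $\max(1,B_w)=B_w$.

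\textbf{Step 2 (the bottleneck layer of width $q+s$, then the decoder block).} This layer has to contain, after a ReLU, enough information to recover the affine output $e(x_{1:p})$ of the encoder and the noise $\eta=x_{(p+1):(p+s)}$, which the first affine map of $d$ then consumes. I expect this to be the main obstacle. With only $q+s$ neurons I cannot use the two-neuron split for both $e$ and $\eta$. The fix I would try first is to exploit boundedness. Under Assumption \ref{assumption:supportMULTIY} and the bounded weights, the encoder output on $[-B_x,B_x]^p$ is bounded by some constant, so a bias shift makes each coordinate nonnegative. Then $\sigma(e_j+c_j)=e_j+c_j$, and the shift $-c_j$ is absorbed into the bias of the next affine map, namely the first layer of $d$. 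The noise coordinates are recombined from the two carried neurons $\sigma(\eta_k)-\sigma(-\eta_k)$ and shifted in the same way. The delicate point is to check that the shifted biases still respect the bound $B_w$. If they do not, I would have to state the containment on the support only, or compose the biases differently. The next $\ell_2$ hidden layers are exactly the hidden layers of $d$, with the bias correction folded into the first one.

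\textbf{Step 3 (output of $d$, clipping, and inner product).} The layer of width $m$ and the layer of width $4m$ implement the clipping. Coordinatewise,
\[
T_{B_y}(u)=\sigma(u+B_y)-\sigma(u-B_y)-B_y ,
\]
and a signed value $u_j$ can be rebuilt from pairs such as $\sigma(\pm u_j)$. So I let the $m$-layer pass forward the output of $d$, again after a bias shift or the sign split. The $4m$-layer then computes $\sigma(\pm u_j + B_y)$ and $\sigma(\pm u_j - B_y)$, which is four neurons per coordinate and gives enough room to form $T_{B_y}(u_j)$ linearly. The biases here are $\pm B_y$, and these are admissible because $B_w\ge B_y$. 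Finally, the output affine map multiplies by $t_j$, with $|t_j|\le B\le B_w$, and sums, which yields the scalar $t^{\top}T_{B_y}d(\cdot)$.

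Counting layers gives $\ell_1+1+\ell_2+2=\ell_1+\ell_2+3$ hidden layers, with input dimension $p+s$ and output dimension $1$, as claimed in the statement.
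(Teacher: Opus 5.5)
Your Steps 1 and 3 reproduce the paper's argument. The paper also carries the noise through the encoder layers with the neurons $(x_{p+i})_+$, $(-x_{p+i})_+$. It realizes $T_{B_y}$ with the four neurons of $T_B(a) = -(-a+B)_+ + (a+B)_+ + (-a)_+ - a_+$, whose biases are $0$ and $B_y$, and it puts $t$ into the output layer. The gap is in Step 2 and in the width-$m$ layer of Step 3, exactly where you flagged it, and the bias shift does not close it.

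To get $\sigma(e_j + c_j) = e_j + c_j$ you need $c_j \geq \sup_x |e_j(x)|$. Over $[-B_x,B_x]^p$ this can be of order $B_w(r_1B_w)^{\ell_1}pB_x$ for $e \in \mathcal{F}(p,\ell_1,r_1,q,B_w)$: take all hidden weights and biases equal to $B_w$ and output weights $-B_w$. The bottleneck bias $b_j + c_j \geq c_j - B_w$ then exceeds $B_w$. The corrected first bias of $d$, namely $b_1^d - A_1^d c$, can be of size $(q+s)B_w\max_j c_j$.

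The same happens at the $m$-layer. Once it carries $d_j + c_j''$ instead of $d_j$, the biases of the $4m$-layer are $\pm B_y \mp c_j''$, not $\pm B_y$ as you state. Rescaling by positive homogeneity only moves the problem. A factor $\lambda<1$ used to shrink the shift must be undone downstream, ultimately through output weights $t_j/\lambda$, and in general there is no room for this (take $B = B_w$).

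Finally, $\mathcal{F}(B_y)$ consists of functions on all of $\mathbb{R}^{p+s}$, where $e$ and $\eta$ are unbounded and no finite shift exists. Restricting to the support, your first fallback, would be harmless for Lemma \ref{lem:entropy}. But it fixes only this last point, not the violation of the bound $B_w$.

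The paper's proof does not confront this issue. It concatenates the sub-networks, so the $(q+s)$-layer holds $(e(x_{1:p}), x_{(p+1):(p+s)})$ and the $m$-layer holds $d(\cdot)$ with no activation applied. Literally, that network is not in the class defined with a ReLU after every hidden layer.

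The clean repair is the sign split you dismissed for lack of room. Give these two layers widths $2(q+s)$ and $2m$, with neurons $\sigma(\pm e_j)$, $\sigma(\pm \eta_k)$ and $\sigma(\pm d_j)$, and recombine through $u = \sigma(u) - \sigma(-u)$ in the next affine map. Then every weight is, up to sign, an entry of $e$ or $d$, or $0$, $1$, or $t_j$. Every bias is, up to sign, an entry of $e$ or $d$, or $0$ or $B_y$. So all parameters lie in $[-B_w,B_w]$ when $B_w \geq \max(1,B_y,B)$, and the identity holds on all of $\mathbb{R}^{p+s}$. The parameter count $\mathcal{S}$ grows only by $O(r_1+r_2)$. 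Hence Lemma \ref{lem:entropy} changes only in lower-order terms, and the rate conditions of Assumption \ref{assumption:parametersMULTIY} are unaffected.

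A minor point: your displayed identity $T_{B_y}(u) = \sigma(u+B_y) - \sigma(u-B_y) - B_y$ would need the output bias $-B_y\sum_j t_j$, which can exceed $B_w$. Use a bias-free four-neuron combination such as the paper's instead.
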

\begin{proof}
For the ReLU activation function and weights and biases bounded from above by $B_w \geq 1$, we can extend each layer of the first network $e$ with neurons containing $(x_{p+i})_+$ or $(-x_{p+i})_+$, and concatenate $e(x)$ with
\[
    x_{p+i} = (x_{p+i})_+ - (x_{p+i})_-, \ i = 1, \dots, s.
\]
This shows that $x \mapsto (e(x_{1:p}), x_{(p+1):(p+s)})$ can be obtained as the output of a network with $\ell_1$ hidden layers and $r_1 + 2s$ neurons in each layer. As shown in Lemma S.2 of \citet{TangLi2025}, in a ReLU network with $B_w \geq \max(1, B_y)$ and output size $d$, the operator $T_{B_y}$ can be represented by adding an additional layer of $4d$ neurons, by using the relationship
\[
    T_B(a) = -(-a+B)_+ + (a+B)_+ + (-a)_+ -a_+.
\]
The scalar product $x \mapsto t^{\top}x$ for $x \in \mathbb{R}^d$ can be represented by the linear output layer to dimension $1$, since $B_w \geq B$.
\end{proof}

Denote by $\mathcal{N}(\varepsilon, \mathcal{F}, \|\cdot\|)$ the $\varepsilon$ covering number of a function class $\mathcal{F}$ with norm $\|\cdot\|$.

\begin{lemma} \label{lem:entropy}
Suppose that
\begin{enumerate}
    \item the support $\mathcal{X}$ of $X$ is a subset of $[-B_x, B_x]^p$,
    \item the support of the noise variable $\eta$ is contained in $[-B_x, B_x]^s$,
    \item and $B_w \geq \max(1, B_y, B)$.
\end{enumerate}
Then for any $t \in R(B)$ and any $\varepsilon > 0$,
\begin{align*}
    & \sup_{x_1, \dots, x_n \in \mathcal{X}} \mathcal{N}(\varepsilon, \mathcal{F}^{(t)}(B_y), \|\cdot\|_{L_1(\nu_n)}) \\
    & \leq \frac{\left(32B^{m+1}(\ell_1 + \ell_2 + 4)(\sqrt{p + s}B_x + 1)(2B_w)^{\ell_1 + \ell_2 + 5}(p + s)(r_1 + 2s)^{\ell_1}(q+s)r_2^{\ell_2}m^2\right)^{\mathcal{S}}}{((r_1+2s)!)^{\ell_1}((q+s)!)(r_2!)^{\ell_2}(m!)((4m)!)}\varepsilon^{-\mathcal{S}},
\end{align*}
with the total number of parameters $\mathcal{S}$ equal to 
\begin{align*}
    \mathcal{S} & = (\ell_1 - 1)(r_1 + 2s)^2 + (p + \ell_1 + q + 2s)(r_1 + 2s) + (\ell_2  - 1)r_2^2 \\
    & \quad + (q + \ell_2 + m + s)r_2 + (q + s + 4m^2 + 9m + 1).
\end{align*}
\end{lemma}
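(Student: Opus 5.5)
The plan is to reduce the covering number of $\mathcal{F}^{(t)}(B_y)$ to a sup-norm covering number of the auxiliary ReLU class $\check{\mathcal{F}}$ from Lemma \ref{lem:auxiliary_network_class}. That larger class then has a standard parameter-counting covering bound, as in Lemma S.3 of \citet{TangLi2025}.

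\textbf{Step 1 (Lipschitz reduction).} Fix $t\in R(B)$ and write $u(x,z)=t^{\top}T_{B_y}d(e(x),z)$. Then every $f\in\mathcal{F}^{(t)}(B_y)$ has the form
\[
f(x)=\mathbb{E}_\eta\big[\cos(u(x,\eta))\big]\,\|t\|^{-(m+1)/2}.
\]
Since $|\cos a-\cos b|\le|a-b|$ and $\|t\|\ge 1/B$ on $R(B)$, we have $\|t\|^{-(m+1)/2}\le B^{(m+1)/2}$. Hence for two such functions $f,f'$ with inner maps $u,u'\in\mathcal{F}(B_y)$,
\[
\|f-f'\|_{L_1(\nu_n)}\le B^{(m+1)/2}\sup_{(x,z)\in[-B_x,B_x]^{p+s}}|u(x,z)-u'(x,z)|.
\]
This uses that $\eta$ is supported in $[-B_x,B_x]^s$, so averaging over $\eta$ costs nothing. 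It follows that an $\varepsilon B^{-(m+1)/2}$-cover of $\mathcal{F}(B_y)$ in sup-norm on the cube $[-B_x,B_x]^{p+s}$ induces an $\varepsilon$-cover of $\mathcal{F}^{(t)}(B_y)$ in $L_1(\nu_n)$, uniformly in $x_1,\dots,x_n$. The factor $B^{m+1}$ in the claimed bound presumably absorbs $B^{(m+1)/2}$ together with the further factor $B$ coming from $t$.

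\textbf{Step 2 (embedding).} By Lemma \ref{lem:auxiliary_network_class}, $\mathcal{F}(B_y)\subseteq\check{\mathcal{F}}$, which has depth $\ell_1+\ell_2+3$, the stated widths, and weights in $[-B_w,B_w]$. I would then cover $\check{\mathcal{F}}$ by covering its parameter cube $[-B_w,B_w]^{\mathcal{S}}$ with a grid of mesh $\delta$. This needs the standard perturbation bound: if two networks in $\check{\mathcal{F}}$ have parameters differing by at most $\delta$ entrywise, then on $[-B_x,B_x]^{p+s}$ their outputs differ by at most
\[
\delta\,(\ell+1)(\sqrt{p+s}\,B_x+1)\prod_j (r_j B_w)\cdot\mathrm{const}.
\]
This is proved by induction over the layers, using that ReLU is $1$-Lipschitz, that $\|A_jx\|\le r_{j-1}B_w\|x\|$, and that the hidden activations are bounded by $(\sqrt{p+s}B_x+1)\prod (r_jB_w)$. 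Choosing $\delta$ so that this bound is at most $\varepsilon B^{-(m+1)/2}$ and counting grid points gives $(2B_w/\delta)^{\mathcal{S}}$. The parameter count $\mathcal{S}$ is obtained by summing $r_{j-1}r_j+r_j$ over the layers of $\check{\mathcal{F}}$, which produces the stated expression.

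\textbf{Step 3 (the factorial denominators).} These come from the permutation symmetry of hidden neurons: permuting the neurons within a layer leaves the function unchanged. One can therefore restrict to a fundamental domain, for instance by requiring the neurons of each layer to be sorted, and so divide the grid count by $\prod_j r_j!$. I would follow the corresponding argument in \citet{TangLi2025} for this step.

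The \textbf{main obstacle} is carrying out the layer-wise perturbation bound carefully enough to obtain exactly the constants $32(\ell_1+\ell_2+4)(2B_w)^{\ell_1+\ell_2+5}(p+s)(r_1+2s)^{\ell_1}(q+s)r_2^{\ell_2}m^2$. Alongside this, the symmetry quotient must be justified rigorously, because a symmetric grid cell need not map cleanly into a sorted fundamental domain. If the exact constants do not match, the fallback is to prove the bound with some explicit constant of the same polynomial order. This still suffices for the subsequent estimation-error argument, since only $\mathcal{S}\log(\cdot)$ enters through Assumption \ref{assumption:parametersMULTIY}.
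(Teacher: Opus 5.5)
Your proposal is correct and follows essentially the paper's route: it also passes from $L_1(\nu_n)$ to the sup norm, uses $|\cos a-\cos b|\le|a-b|$ and $\|t\|^{-(m+1)/2}\le B^{(m+1)/2}\le B^{m+1}$ for $B\ge 1$, and reduces to $\mathcal{N}(\varepsilon/(2B^{m+1}),\check{\mathcal{F}},\|\cdot\|_\infty)$ via Lemma \ref{lem:auxiliary_network_class}. There is no extra factor of $B$ from $t$, since $t$ sits in the last-layer weights of $\check{\mathcal{F}}$ and is controlled by $B_w\ge B$; also, the paper does not re-derive the grid, perturbation and permutation-quotient bound for $\check{\mathcal{F}}$ but cites it from Theorem 3.5 of \citet{Shen2024}, which gives exactly the factorial denominators whose justification you flagged as the main obstacle.
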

\begin{proof}
As in \citet{TangLi2025}, we provide a bound for the covering number of $\mathcal{F}^{(t)}(B_y)$ in the supremum norm. For $f \in \check{\mathcal{F}}$ and $x \in \mathbb{R}^p$, define
\begin{equation} \label{eq:f_ring}
    \mathring{f}(x) = \mathbb{E}_{\eta}[\cos(f((x,\eta)))/\|t\|^{(m+1)/2}].
\end{equation}
By Lemma \ref{lem:auxiliary_network_class}, the class $\mathring{\mathcal{F}} = \{\mathring{f}\colon f \in \check{\mathcal{F}}\}$ contains $\mathcal{F}^{(t)}(B_y)$, and so
\[
    \mathcal{N}(\varepsilon, \mathcal{F}^{(t)}(B_y), \|\cdot\|_{\infty}) \leq \mathcal{N}(\varepsilon/2, \mathring{\mathcal{F}}, \|\cdot\|_{\infty}).
\]
Let now $f_1, \dots, f_N$ be an $\varepsilon$-covering of $\check{\mathcal{F}}$, and define $\mathring{f}_i$ as in \eqref{eq:f_ring}. Then for any $f \in \check{\mathcal{F}}$ and the $f_i$ with $\|f - f_i\|_{\infty} \leq \varepsilon$, we have
\begin{align*}
    & \sup_{x \in \mathcal{X}}|\mathbb{E}_{\eta}[\cos(f((x,\eta)))/\|t\|^{(m+1)/2}] - \mathbb{E}_{\eta}[\cos(f_i((x,\eta)))/\|t\|^{(m+1)/2}]| \\
    \quad & \leq \sup_{(x,\eta) \in [-B_x, B_x]^{p+s}}B^{m+1}|f((x,\eta)) - f_i((x,\eta))| \\
    & = B^{m+1}\|f_i - f\|_{\infty},
\end{align*}
where we use $|\cos(a) - \cos(b)| \leq |a - b|$ and $\|t\|^{-(m+1)/2} \leq B^{(m+1)/2} \leq B^{m+1}$ for $t \in R(B)$ and $B \geq 1$. This implies that $\mathring{f}_1,  \dots, \mathring{f}_N$ is a $B^{m+1}\varepsilon$ cover of $\mathring{\mathcal{F}}$, so
\begin{align*}
    \sup_{x_1, \dots, x_n \in \mathcal{X}} \mathcal{N}(\varepsilon, \mathcal{F}^{(t)}(B_y), \|\cdot\|_{L_1(\nu_n)}) \leq \mathcal{N}(\varepsilon/2,\mathring{\mathcal{F}}, \|\cdot\|_{\infty}) \leq \mathcal{N}(\varepsilon/(2B^{m+1}), \check{\mathcal{F}}, \|\cdot\|_{\infty}).
\end{align*}
The entropy bound for $\check{\mathcal{F}}$ follows by Theorem 3.5 of \citet{Shen2024}.
\end{proof}

The remainder of the bound for $S_{21}$ follows exactly the same arguments as in the proof of \citet{TangLi2025}, and we show the main steps for completeness. Express $\tilde{S}_{21}$ as
\begin{align*}
    \tilde{S}_{21} & = \sum_{j \in J} \Big\{\mathbb{E}\left[(\hat{h}(\hat{e}(X),t_j) - g(Y, t_j))^2 \mid D_n\right] - \mathbb{E}\left[(s^*(X,t_j) - g(Y,t_j))^2\right] \\ 
     & \qquad - 2 \left[\mathbb{E}_n\left[(\hat{h}(\hat{e}(X),t_j) - g(Y, t_j))^2\right] - \mathbb{E}_n\left[(s^*(X,t_j) - g(Y,t_j))^2\right]\right]\Big\} =: \sum_{j \in J} \tilde{S}_{21}^{(j)},
\end{align*}
such that, since $|J| \leq M$, $\mathbb{P}(\tilde{S}_{21} > u)$ is bounded from above by $\sum_{j \in J}\mathbb{P}(\tilde{S}_{21}^{(j)} > u/M)$. Denote by $A^{(j)}(\varepsilon)$ the set
\begin{align*}
    A^{(j)}(\varepsilon) & = \Big\{\exists \ f \in \mathcal{F}^{(t_j)}(B_y)\colon \mathbb{E}\left[(f(X) - g(Y, t_j))^2 \mid D_n\right] - \mathbb{E}\left[(s^*(X,t_j) - g(Y,t_j))^2\right] \\ 
     & \qquad - 2 \left[\mathbb{E}_n\left[(f(X) - g(Y, t_j))^2\right] - \mathbb{E}_n\left[(s^*(X,t_j) - g(Y,t_j))^2\right]\right] \geq \varepsilon\Big\}.
\end{align*}
Then $\mathbb{P}(\tilde{S}_{21}^{(j)} \geq u/M) \leq \mathbb{P}(A^{(j)}(u/M))$. Let $B_g = \max\{|g(y,t)| \colon (y,t) \in [-B_y, B_y]^m \times R(B)\}$, which bounds $|g(Y,t_j)|$, $|s^*(X,t_j)|$ and $|f(X)|$ for $f \in \mathcal{F}^{(t_j)}(B_y)$. Theorem 11.4 of \citet{GyoerfiEtAl2002} with their $\alpha, \beta$ set to $\varepsilon/2$ and their $\varepsilon$ set to $1/2$ yields
\[
    \mathbb{P}(A^{(j)}(\varepsilon)) \leq 14\sup_{x_1, \dots, x_n \in \mathcal{X}} \mathcal{N}\left(\frac{\varepsilon}{80B_g}, \mathcal{F}^{(t_j)}(B_y), \|\cdot\|_{L_1(\nu_n)}\right)\exp(-C_2n\varepsilon) \leq C_1\varepsilon^{-\mathcal{S}}\exp(-C_2n\varepsilon),
\]
where, by Lemma \ref{lem:entropy},
\begin{align*}
    C_1 & = 14\frac{\left(2560B_gB^{m+1}(\ell_1 + \ell_2 + 4)(\sqrt{p + s}B_x + 1)(2B_w)^{\ell_1 + \ell_2 + 5}(p + s)(r_1 + 2s)^{\ell_1}(q+s)r_2^{\ell_2}m^2\right)^{\mathcal{S}}}{((r_1+2s)!)^{\ell_1}((q+s)!)(r_2!)^{\ell_2}(m!)((4m)!)},
\end{align*}
and
\begin{align*}
    C_2 = \frac{1}{5136B_g^4}.
\end{align*}
This implies
\begin{align*}
    \mathbb{P}(\tilde{S}_{21} > u) & \leq \sum_{j \in J} C_1(u/M)^{-\mathcal{S}}\exp(-C_2nu/M) \\
    & \leq C_1 M^{\mathcal{S}+1}u^{-\mathcal{S}}\exp(-C_2nu/M) \\
    & \leq C_1 M^{\mathcal{S}+1}n^{\mathcal{S}}\exp(-C_2nu/M),
\end{align*}
where the last line holds for $u \geq 1/n$. Lemma S.5 of \citet{TangLi2025} then yields
\[
    \mathbb{E}[\tilde{S}_{21}] \leq \frac{1+\log(C_1) + (\mathcal{S} + 1)\log(M) + \mathcal{S}\log(n)}{C_2n/M},
\]
which, together with \eqref{eq:s21_upper_bound}, gives
\[
   \mathbb{E}[S_{21}] \leq \kappa^{m}n^{-1}C_2^{-1}(1 + \log(C_1) + (\mathcal{S}+1)\log(M) + \mathcal{S}\log(n)) + cM^{-2/m} + \varepsilon/4.
\]

\subsection{Bounding $\mathbb{E}[S_{22}]$} \label{sec:s22}
In this section, we bound the expected value of 
\[
S_{22} = 2\Big[\mathbb{E}_n\Big[\int_{\mathbb{R}^m} (\hat{h}(\hat{e}(X),t) - g(Y,t))^2 \, dt \Big]  - \mathbb{E}_n\Big[\int_{\mathbb{R}^m} (s^*(X,t) - g(Y,t))^2 \, dt\Big]\Big].
\]
Recall that $(\hat{e}, \hat{d})$ are defined as minimizers of this criterion, via the relationship $\hat{h}(\hat{e}(x),t) = \mathbb{E}_{\eta}[g(T_{B_y}\hat{d}(\hat{e}(x),\eta),t)]$. Like at the beginning of Section \ref{sec:s21}, define a discretized version of $S_{22}$ as
\[
    \tilde{S}_{22} = 2\kappa^m M^{-1}\big\{\big[\mathbb{E}_n[\|\mathbf{\hat{h}}(\hat{e}(X)) - \mathbf{g}(Y)\|^2]  - \mathbb{E}_n[\|\mathbf{s}^*(X)  - \mathbf{g}(Y)\|^2]\big]\big\},
\]
and let $(\tilde{e}, \tilde{d})$ and the corresponding $\tilde{h}(\tilde{e}(x),t) = \mathbb{E}_{\eta}[g(T_{B_y}\tilde{d}(\tilde{e}(x),\eta),t)]$ be minimizers of this discretized criterion. Furthermore, choose $B$ large enough such that
\[
    \Big\vert S_{22} - 2\Big(\mathbb{E}_n\Big[\int_{R(B)} (h(e(X),t) - g(Y,t))^2 \, dt \Big]  - \mathbb{E}_n\Big[\int_{R(B)} (s^*(X,t) - g(Y,t))^2 \, dt\Big]\Big) \Big\vert \leq \varepsilon/4,
\]
for all choices of $e$ and $d$, which is possible by Lemma \ref{lem:energy_approximation}. Then,
\begin{align*}
    S_{22} & \leq 2\Big[\mathbb{E}_n\Big[\int_{\mathbb{R}^m} (\tilde{h}(\tilde{e}(X),t) - g(Y,t))^2 \, dt \Big]  - \mathbb{E}_n\Big[\int_{\mathbb{R}^m} (s^*(X,t) - g(Y,t))^2 \, dt\Big]\Big] \\
    & \leq 2\Big[\mathbb{E}_n\Big[\int_{R(B)} (\tilde{h}(\tilde{e}(X),t) - g(Y,t))^2 \, dt \Big]  - \mathbb{E}_n\Big[\int_{R(B)} (s^*(X,t) - g(Y,t))^2 \, dt\Big]\Big] + \varepsilon/4 \\
    & \leq 2\kappa^{m} M^{-1}\big\{\big[\mathbb{E}_n[\|\mathbf{\tilde{h}}(\tilde{e}(X)) - \mathbf{g}(Y)\|^2]  - \mathbb{E}_n[\|\mathbf{s}^*(X)  - \mathbf{g}(Y)\|^2]\big]\big\} + c'M^{-2/m} + \varepsilon/4,
\end{align*}
where the remainder includes a constant $c'$ depending on $B$, $B_y$ and $m$ and is derived with exactly the same arguments as in Section \ref{sec:s21}. Now extend the definition of the function class $\mathcal{F}^{(t)}(B_y)$ to the parameters $t_j$, $j \in J$, as follows,
\begin{align*}
    & \mathcal{F}_M(B_y) = \Big\{f\colon \mathbb{R}^{p} \rightarrow \mathbb{R}^{|J|}, x \mapsto \big(\mathbb{E}_{\eta}\left[g(T_{B_y}d(e(x), \eta), t_j)\right]\big)_{j \in J}, \\
    & \quad \text{ with } e \in \mathcal{F}(p,\ell_1,r_1,q,B_w), \ d \in \mathcal{F}(q+s,\ell_2,r_2,m,B_w)\Big\}.
\end{align*}
Since $(\tilde{d}, \tilde{e})$ are minimizers of the sample criterion $\mathbb{E}_n[\|\mathbf{\tilde{h}}(\tilde{e}(X)) - \mathbf{g}(Y)\|^2]$ and $s^*(x,t) = \mathbb{E}[g(Y,t)\mid X = x]$, we have
\begin{align*}
    \mathbb{E}\left[\big[\mathbb{E}_n[\|\mathbf{\tilde{h}}(\tilde{e}(X)) - \mathbf{g}(Y)\|^2]  - \mathbb{E}_n[\|\mathbf{s}^*(X)  - \mathbf{g}(Y)\|^2]\big]\right] \leq \inf_{f \in \mathcal{F}_M(B_y)}\mathbb{E}\left[\|f(X) - \mathbf{s}^*(X)\|^2\right];
\end{align*}
the detailed steps match those in Section S.2.5 of \citet{TangLi2025}. For $f \in \mathcal{F}_M(B_y)$ represented by $(e,d)$, we have
\begin{align*}
    \left(f_j(X) - s^*_j(X)\right)^2 & = \left(\mathbb{E}_{\eta}\left[g(T_{B_y}d(e(X),\eta), t_j) - g(d^*(e^*(X),\eta), t_j)\right]\right)^2 \\
    & \leq \mathbb{E}_{\eta}\left[\left(g(T_{B_y}d(e(X),\eta), t_j) - g(d^*(e^*(X),\eta), t_j)\right)^2\right] \\
    & \leq L_g^2\mathbb{E}_{\eta}\left[\|T_{B_y}d(e(X),\eta) - d^*(e^*(X),\eta)\|^2\right] \\
    & \leq  L_g^2\mathbb{E}_{\eta}\left[\|d(e(X),\eta) - d^*(e^*(X),\eta)\|^2\right].
\end{align*}
Here we use that $g$ is Lipschitz continuous with constant $L_g$ by Lemma \ref{lem:g_lipschitz_t_MULTIY}, and the fact that $d^*(e^*(X),\eta) \in [-B_y, B_y]^m$ for the last inequality. The above implies
\begin{align*}
    & \inf_{f \in \mathcal{F}_M(B_y)}\mathbb{E}\left[\|f(X) - \mathbf{s}^*(X)\|^2\right] \leq \\
    & \ ML_g^2\inf\Big\{\mathbb{E}\left[\mathbb{E}_{\eta}\left[\|d(e(X),\eta) - d^*(e^*(X),\eta)\|^2\right]\right]\colon
     e \in \mathcal{F}(p,\ell_1,r_1,q,B_w), d \in \mathcal{F}(q+s,\ell_2,r_2,m,B_w)\Big\}.
\end{align*}

The following lemma is a slight simplification of Corollary S.2 of \citet{TangLi2025}.

\begin{lemma} \label{lem:nn_approximation}
Suppose that $\rho \colon \mathcal{X} \subseteq [-B, B]^{r} \rightarrow \mathbb{R}^d$ is Lipschitz continuous with constant $C$, and $B \geq 1/2$. For any $L, N \in \mathbb{N}$ and some sufficiently large $B_w$, there exists $\phi \in \mathcal{F}(r, 12L + 14, \max\{4r\lfloor N^{1/r}\rfloor + 3r, 12dN + 8d\}, d, B_w)$ such that
\[
    |\phi_j(x)| \leq \sup_{x' \in \mathcal{X}}|\rho_j(x')| + 2CB\sqrt{r}, \quad x \in \mathcal{X}, \ j = 1, \dots, d,
\]
and for any random vector $X$ with support in $[-B,B]^r$,
\[
    \mathbb{E}\left[\|\rho(X) - \phi(X)\|^2\right] \leq 4d(324r+1)C^2B^2N^{-4/r}L^{-4/r}.
\]
\end{lemma}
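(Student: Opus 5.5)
The plan is to reduce Lemma \ref{lem:nn_approximation} to a known ReLU approximation theorem for Lipschitz functions on the unit cube (Lu, Shen, Yang and Zhang, \emph{Deep network approximation for smooth functions}, and its trifling-region-free refinement by Shen, Yang and Zhang), applied one output coordinate at a time and then stacked.

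\textbf{Rescaling and the scalar case.} First rescale the input. Put $\tilde\rho(u) = \rho(2Bu - B\mathbf{1})$ for $u \in [0,1]^r$; its components are Lipschitz with constant $2BC$. The affine map $x \mapsto (x + B\mathbf{1})/(2B)$ can be absorbed into the first linear layer, with weights bounded by a constant depending on $B$. Enlarging $B_w$ for this is allowed by the phrase ``some sufficiently large $B_w$''.

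For each $j$, I will extend $\tilde\rho_j$ to all of $[0,1]^r$, keeping the Lipschitz constant and the sup bound (McShane extension followed by truncation). The scalar theorem then gives a ReLU network $\phi_j$ of width $\max\{4r\lfloor N^{1/r}\rfloor + 3r,\ 12N+8\}$ and depth $12L+14$ with
\[
|\tilde\rho_j(u) - \phi_j(u)| \le 18\sqrt{r}\,\omega_j\big(N^{-2/r}L^{-2/r}\big), \qquad \omega_j(\delta) \le 2BC\delta,
\]
on $[0,1]^r$ outside a trifling region $\Omega$ of arbitrarily small Lebesgue measure. The same theorem gives $|\phi_j| \le \sup|\tilde\rho_j| + \omega_j(\sqrt{r})$ everywhere, and $\omega_j(\sqrt r) \le 2BC\sqrt{r}$ is exactly the stated sup bound.

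\textbf{Stacking the $d$ coordinates.} Next I will place the $d$ scalar networks in parallel, giving a block-diagonal network of the same depth. The part of the width that scales with $N$ adds up to $12dN + 8d$, which matches the stated width. I still need to check that the $4r\lfloor N^{1/r}\rfloor + 3r$ part, which comes from the input-partition subnetwork, can be shared across coordinates rather than multiplied by $d$. If it cannot, I will note that the maximum in the width absorbs this for the regime used in Assumption \ref{assumption:parametersMULTIY}.

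\textbf{The $L^2$ bound.} Squaring and summing the pointwise bounds over $j$ gives
\[
\|\rho(x) - \phi(x)\|^2 \le d \cdot 324 r \cdot 4B^2C^2 N^{-4/r}L^{-4/r}
\]
off $\Omega$. The extra $+1$ in $324r + 1$ is there to absorb the contribution from $\Omega$.

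\textbf{Main obstacle.} The hardest step is that $X$ is an arbitrary random vector on $[-B,B]^r$, possibly with atoms or singular parts, so the trifling region cannot be dismissed by a Lebesgue-measure argument. I will first try the ``horizontal shift / middle value'' construction of Lu et al. This replaces $\phi_j$ by a median of $r+1$ or $3^r$ shifted copies, or uses Shen--Yang--Zhang's uniform version directly. It yields a uniform bound on all of $[0,1]^r$ at the cost of an additional constant factor in the error and a few extra layers; the choice of $12L + 14$ layers seems designed to accommodate exactly this.

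If the constants do not match exactly, I fall back on a different route. Because $\Omega$ can be shifted by a small translation (a family of disjoint shifted grids), some shift puts $P_X$-mass at most $\varepsilon$ on it. Bounding the error on $\Omega$ by the sup bound, squared, then gives a contribution of order $\varepsilon\,(\sup|\rho| + 2CB\sqrt r)^2$. Choosing $\varepsilon$ small makes this at most $4dC^2B^2N^{-4/r}L^{-4/r}$, which is precisely the ``$+1$'' term.
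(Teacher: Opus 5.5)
Your skeleton is the argument behind the paper's citation of Corollary S.2 of Tang and Li: rescale to $[0,1]^r$, McShane-extend and truncate each $\tilde\rho_j$, apply the trifling-region theorem, square and sum over $j$, and let the $+1$ pay for $\Omega$. The theorem you need gives width $\max\{4r\lfloor N^{1/r}\rfloor+3r,12N+8\}$, depth $12L+14$, error $18\sqrt{r}\,\omega(N^{-2/r}L^{-2/r})$ off $\Omega$, and the global bound $|f(0)|+\omega_f(\sqrt r)$. The paper adds nothing beyond that citation. Two steps in your proposal are still open.

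\emph{Width.} For $d\ge 2$, placing $d$ complete scalar networks side by side gives width $d\max\{4r\lfloor N^{1/r}\rfloor+3r,\,12N+8\}$. This is strictly larger than the stated width whenever $4r\lfloor N^{1/r}\rfloor+3r>12N+8$, i.e.\ for $N$ small relative to $r$. Retreating to ``the regime of Assumption \ref{assumption:parametersMULTIY}'' does not prove the lemma, which is claimed for all $L,N\in\mathbb{N}$. That assumption also allows $N_i$ to stay bounded while $L_i\to\infty$, so even the hedge fails.

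The missing point is structural. The width is a maximum because the construction is a composition of two blocks. The first maps $u$ to the index of its grid cube by a coordinatewise step function (the $4r\lfloor N^{1/r}\rfloor+3r$ part), and depends only on $r,N,L,\delta$, not on the target. The second is a point-fitting block of width $12N+8$ that carries all dependence on $\tilde\rho_j$. Share the first block and run $d$ copies of the second in parallel; this gives exactly $\max\{4r\lfloor N^{1/r}\rfloor+3r,12dN+8d\}$ with unchanged depth. You have to open the construction to verify this; it does not follow from the scalar theorem used as a black box.

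\emph{Trifling region.} Your first plan cannot give this lemma. The middle-value construction takes medians of three shifted copies in each of the $r$ coordinate directions. This multiplies the width by a factor of order $3^r$ and adds on the order of $2r$ layers. Also, $12L+14$ is already the depth of the trifling-region theorem, not slack reserved for the median. Your shifted-grid fallback can be made rigorous, but only by approximating on an enlarged cube. The Lipschitz constant then picks up a factor $1+\tau$ that must also be absorbed into the $+1$.

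None of this is needed. The trifling region is $\Omega_\delta=\bigcup_{i=1}^r\{u: u_i\in\bigcup_{k=1}^{K-1}(k/K-\delta,k/K)\}$, with $K$ fixed and $\delta\in(0,1/(3K)]$ arbitrary. These intervals are open at $k/K$, so $\Omega_\delta\downarrow\emptyset$ as $\delta\downarrow 0$. Continuity from above then gives $\mathbb{P}(\tilde X\in\Omega_\delta)\to 0$ for every law of $X$, atoms and singular parts included. The only cost is that the step-function slopes grow like $1/\delta$, which ``sufficiently large $B_w$'' permits.

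Your worry was legitimate. In the proof of Theorem \ref{thm:consistencyMULTIY} the lemma is applied to $(\mathring{e}(X),\eta)$, and $\mathring{e}(X)$ can have atoms because ReLU networks have flat regions. This monotone limit settles it.

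In every version $\phi$ and $B_w$ depend on the law of $X$, and you should say so explicitly. Under the literal quantifier order (one $\phi$ for all $X$), point masses would force a uniform bound, which the trifling-region theorem does not provide.
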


Assume now that $B_x > 1/2$. Lemma \ref{lem:nn_approximation} yields
\begin{align*}
    \mathbb{E}\left[\mathbb{E}_{\eta}\left[\|d^*(\mathring{e}(X),\eta) - d^*(e^*(X),\eta)\|^2\right]\right] & \leq \mathbb{E}\left[\mathbb{E}_{\eta}\left[mL_d^2\|(\mathring{e}(X),\eta) - (e^*(X),\eta)\|^2\right]\right] \\
    & = mL_d^2\mathbb{E}[\|\mathring{e}(X) - e^*(X)\|^2] \\
    & \leq 4qmL_d^2L_e^2(324p+1)B_x^2N_1^{-4/p}L_1^{-4/p}
\end{align*}
for some function
\[
    \mathring{e} \in \mathcal{F}(p, 12L_1 + 14, \max\{4p\lfloor N_1^{1/p}\rfloor + 3p, 12qN_1+8q\}, q, B_{we}),
\]
with sufficiently large $B_{we}$. For $k = 1, \dots, q$ and $x \in \mathcal{X}$, the function satisfies
\[
    |\mathring{e}_k(x)| \leq M_e + 2L_eB_x\sqrt{p},
\]
where $M_e = \max_{j=1, \dots, q}\sup_{x \in \mathcal{X}}|e_j^*(x)|$. Define $B_z := \max(B_x, M_e + 2L_eB_x\sqrt{p})$. Again by Lemma \ref{lem:nn_approximation}, there exists
\[
    \mathring{d} \in \mathcal{F}(q+s, 12L_2 + 14, \max\{4(q+s)\lfloor N_2^{1/(q+s)} \rfloor + 3(q + s), 12N_2 + 8m\}, m, B_{wd})
\]
with sufficiently large $B_{wd}$, such that for $Z = \mathring{e}(X)$,
\begin{align*}
    \mathbb{E}\left[\mathbb{E}_{\eta}\left[\|\mathring{d}(\mathring{e}(X),\eta) - d^*(\mathring{e}(X),\eta)\|^2\right]\right] & = \mathbb{E}\left[\mathbb{E}_{\eta}\left[\|\mathring{d}(Z,\eta) - d^*(Z,\eta)\|^2\right]\right] \\
    & \leq 4mL_d^2(324(q+s) + 1)B_z^2N_2^{-4/(q+s)}L_2^{-4/(q+s)}.
\end{align*}
Choosing $B_w \geq \max(B_{we}, B_{wd})$, we have
\begin{align*}
    \mathbb{E}[S_{22}] & \leq 2\kappa^mL_g^2\left(\mathbb{E}\left[\|\mathring{d}(\mathring{e}(X),\eta) - d^*(e^*(X),\eta)\|^2\right]\right) + c'M^{-2/m} + \varepsilon/4 \\
    & \leq 4\kappa^mL_g^2\left(\mathbb{E}\left[\|\mathring{d}(\mathring{e}(X),\eta) - d^*(\mathring{e}(X),\eta)\|^2\right] + \mathbb{E}\left[\|d^*(\mathring{e}(X),\eta) - d^*(e^*(X),\eta)\|^2\right]\right) \\
    & \qquad + c'M^{-2/m} + \varepsilon/4 \\
    & \leq 16qm\kappa^m L_g^2L_d^2L_e^2(324p+1)B_x^2N_1^{-4/p}L_1^{-4/p} +  c'M^{-2/m} + \varepsilon/4 \\
    & \qquad + 16m\kappa^m L_g^2 L_d^2(324(q+s) + 1)B_z^2N_2^{-4/(q+s)}L_2^{-4/(q+s)}.
\end{align*}

\subsection{Completing the Proof} \label{sec:completing}
Combining the bounds for $S_{21}$ and $S_{22}$ gives
\begin{align}
    \tilde{\mathcal{L}} & \leq \kappa^{m}n^{-1}C_2^{-1}(1 + \log(C_1) + (\mathcal{S}+1)\log(M) + \mathcal{S}\log(n)) + (c+c')M^{-2/m} \label{eq:c_bound} \\
    & \qquad + 16qm\kappa^m L_g^2L_d^2L_e^2(324p+1)B_x^2N_1^{-4/p}L_1^{-4/p} \label{eq:s22_bound1} \\
    & \qquad + 16m\kappa^m L_g^2 L_d^2(324(q+s) + 1)B_z^2N_2^{-4/(q+s)}L_2^{-4/(q+s)} \label{eq:s22_bound2} \\
    & \qquad + \varepsilon/2. \nonumber
\end{align}
In the bound above, \eqref{eq:s22_bound1} and \eqref{eq:s22_bound2} converge to zero if $\min(N_1L_1, N_2L_2) \rightarrow \infty$. For \eqref{eq:c_bound}, we have
\begin{align*}
     & \log(C_1) = \\
     &\quad  \mathcal{S}\log\left[2560B_g B^{m+1}(\ell_1 + \ell_2 + 4)(\sqrt{p + s}B_x + 1)(2B_w)^{\ell_1 + \ell_2 + 5}(p + s)(r_1 + 2s)^{\ell_1}(q+s)r_2^{\ell_2}m^2\right] \\
    & \qquad - \log\left[((r_1+2s)!)^{\ell_1}((q+s)!)(r_2!)^{\ell_2}(m!)((4m)!)\right] + \log(14).
\end{align*}
Here $\ell_1, r_1, \ell_2, r_2$ are defined as
\begin{align*}
    & \ell_1 = 12L_1 + 14, & \ r_1 = \max\{4p\lfloor N_1^{1/p}\rfloor + 3p, 12qN_1+8q\}, \\
    & \ell_2 = 12L_2 + 14, & \ r_2 = \max\{4(q+s)\lfloor N_2^{1/(q+s)} \rfloor + 3(q + s), 12N_2 + 8m\}.
\end{align*}

Assume now that $L_1, N_1, L_2, N_2, B_w$, and $M$ depend on $n$ and are non-decreasing, as $n$ increases, with $M \rightarrow \infty$, $M = \mathcal{O}(n^{\gamma_M})$, $B_{w} = \mathcal{O}(n^{\gamma_B})$ for some $\gamma_M, \gamma_B > 0$. To lighten the notation, we do not indicate the dependence on $n$ explicitly. Under these assumptions, we have
\begin{align*}
    & \ell_1 = \mathcal{O}(L_1), \ r_1 = \mathcal{O}(N_1), \ \ell_2 = \mathcal{O}(L_2), \ r_2 = \mathcal{O}(N_2), \\
    & \mathcal{S} = \mathcal{O}(L_1N_1^2 + L_2N_2^2), \\
    & \log(C_1) = \mathcal{O}\left((L_1N_1^2 + L_2N_2^2)((L_1 + L_2)\log(n) + L_1\log(N_1) + L_2\log(N_2))\right).
\end{align*}
This implies that the part \eqref{eq:c_bound} in the error is of order
\[
    n^{-1}(L_1N_1^2 + L_2N_2^2)((L_1 + L_2)\log(n) + L_1\log(N_1) + L_2\log(N_2)) +  M^{-2/m}.
\]
A sufficient condition for this to converge to zero is
\[
    \frac{\log(n)(L_1N_1^2 + L_2N_2^2)(L_1\log(N_1) + L_2\log(N_2))}{n} \rightarrow 0, \ n \rightarrow \infty.
\]

\section{Consistency in Multi-environment Setting}  \label{sec:proof_multi}

Below we extend Theorem \ref{thm:consistencyMULTIY} to the setting with multiple data environments. The assumptions are analogous to the single-environment case.

\begin{assumption} \label{assumption:modelENV}
There exist $e^*\colon \mathbb{R}^p \rightarrow \mathbb{R}^q$, functions $d_k^* \colon \mathbb{R}^{q+s} \rightarrow \mathbb{R}^m$ and $\eta_k \in \mathbb{R}^s$ independent of $X_k$ so that the outcome variable $Y_k \in \mathbb{R}^m$ satisfies
\[
    Y_k = d_k^*(e^*(X_k),\eta_k), \quad k = 1, \dots, K.
\]
For each $k$, the observations $(X_{i,k}, Y_{i,k})$, $i = 1, \dots, n$, are independent copies of $(X_k, Y_k)$, and the samples from different environments are independent.
\end{assumption}

In the above assumption, we take equal environment sizes $n$ for simplicity.

\begin{assumption}
The covariates and noise variables, $(X_k, \eta_k)$, satisfy Assumption \ref{assumption:supportMULTIY} for $k = 1, \dots, K$. 
\end{assumption}

\begin{assumption}
The function $e^*$ satisfies Assumption \ref{assumption:encoderMULTIY}.
\end{assumption}

\begin{assumption}
The functions $d_k^*$ satisfy Assumption \ref{assumption:decoderMULTIY} for $k = 1, \dots, K$.
\end{assumption}

\begin{assumption}
The estimator $(\hat{e}, \hat{d}_{1}, \dots, \hat{d}_{K})$ satisfies
\begin{align}
    & (\hat{e}_n, \hat{d}_{1, n}, \dots, \hat{d}_{K,n}) \in \argmin_{(e,d_1, \dots, d_K) \in \mathcal{M}} \sum_{k=1}^K \sum_{i=1}^{n} \mathrm{ES}(P_{e, d_k, X_{i,k}, B_y}, Y_{i,k}),
\end{align}
with $B_y \geq \max\{|d_{j,k}^*(e^*(x),\eta)|\colon (x,\eta) \in [-B_x, B_x]^{p+s}, j = 1, \dots, m, k = 1, \dots, K\}$, and
\[
    \mathcal{M} = \{(e,d_1, \dots, d_K)\colon e \in \mathcal{F}(p, \ell_1, r_1, q, B_w), \ d_k \in \mathcal{F}(q+s, \ell_{2,k}, r_{2,k}, m, B_w), k = 1, \dots, K\},
\]
for $B_w > 0$.
\end{assumption}

\begin{assumption}
The bound $B_w$ is non-decreasing in $n$ with $B_w = \mathcal{O}(n^{\gamma_B})$ for some $\gamma_B > 0$, and
\begin{align*}
    & \ell_1 = 12L_1 + 14, & \ r_1 = \max\{4p\lfloor N_1^{1/p}\rfloor + 3p, 12qN_1+8q\} \\
    & \ell_{2,k} = 12L_{2,k} + 14, & \ r_{2,k} = \max\{4(q+s)\lfloor N_{2,k}^{1/(q+s)} \rfloor + 3(q + s), 12N_{2,k} + 8m\},
\end{align*}
where $L_1, N_1, L_{2,k}, N_{2,k}$ are non-decreasing in $n$ with limits
\begin{align*}
    & \lim_{n \rightarrow \infty} \min(L_1N_1, L_{2,k}N_{2,k}) = \infty, \\ 
    & \lim_{n \rightarrow \infty} \frac{\log(n)(L_1N_1^2 + L_{2,k}N_{2,k}^2)(L_1\log(N_1) + L_{2,k}\log(N_{2,k}))}{n} = 0.
\end{align*}
\end{assumption}

\begin{theorem} \label{thm:consistencyENV}
If the assumptions above hold, then
\[
    \lim_{n \rightarrow \infty} \mathbb{E}\left[\sum_{k=1}^K \int_{\mathcal{X}}\mathcal{D}^2(P_{\hat{e}_n, \hat{d}_{k,n}, x, B_y}, \cond{Y_k}{X_k=x}) \, dP_{X_k}(x)\right] = 0.
\]
\end{theorem}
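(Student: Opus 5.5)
The plan is to reduce Theorem \ref{thm:consistencyENV} to $K$ coupled copies of the argument for Theorem \ref{thm:consistencyMULTIY}. The target is a sum over environments. The empirical criterion is also a sum over environments, with the shared encoder $e$ coupling them. So I would run the single-environment decomposition on the aggregated loss. Concretely, I define for each $k$ the quantities $s_k^*(x,t)=\mathbb{E}[g(Y_k,t)\mid X_k=x]$ and $\hat h_k(\hat e(x),t)=\mathbb{E}_\eta[g(T_{B_y}\hat d_k(\hat e(x),\eta),t)]$, together with the empirical measures $\nu_{n,k}$. The cosine part of the error is then
\[
\tilde{\mathcal L}_K=\sum_{k=1}^K\mathbb{E}\Big[\mathbb{E}_{X_k}\Big[\int_{\mathbb{R}^m}(s_k^*(X_k,t)-\hat h_k(\hat e(X_k),t))^2\,dt\Big]\Big].
\]
I would apply Lemma S.1 of \citet{TangLi2025} to each summand and decompose it as $\mathbb{E}[S_{21,k}]+\mathbb{E}[S_{22,k}]$, exactly as in Section \ref{sec:decomposition}.

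The estimation terms $S_{21,k}$ are handled environment by environment. Each $S_{21,k}$ involves only the fixed pair $(\hat e,\hat d_k)$ and the $k$-th sample. It is therefore bounded by a supremum over the class $\mathcal F^{(t)}(B_y)$ built with depth $\ell_{2,k}$ and width $r_{2,k}$. Lemmas \ref{lem:energy_approximation}, \ref{lem:auxiliary_network_class} and \ref{lem:entropy} apply verbatim, with $B_y$ now the common bound over all $k$. The Taylor discretization over the cubes $R_j$ and Theorem 11.4 of \citet{GyoerfiEtAl2002} then carry over unchanged. This gives, for each $k$, the same bound as in Section \ref{sec:s21}, with $(L_2,N_2)$ replaced by $(L_{2,k},N_{2,k})$. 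Since $K$ is fixed, summing these bounds preserves convergence to zero under the rate assumption.

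The approximation terms need the most care. Because $(\hat e,\hat d_1,\dots,\hat d_K)$ minimizes the joint criterion, I cannot bound each $S_{22,k}$ separately by an environment-specific minimizer. Instead I bound $\sum_k S_{22,k}$ by comparison with any fixed candidate $(e,d_1,\dots,d_K)\in\mathcal M$. Using $\mathbb{E}[g(Y_k,t)\mid X_k]=s_k^*$ and the independence of the candidate from the data, this gives
\[
\mathbb{E}\Big[\sum_k S_{22,k}\Big]\lesssim \sum_k \mathbb{E}\,\mathbb{E}_\eta\|d_k(e(X_k),\eta)-d_k^*(e^*(X_k),\eta)\|^2+\text{discretization}+\varepsilon/4,
\]
where the Lipschitz bound on $g$ from Lemma \ref{lem:g_lipschitz_t_MULTIY} is used as in Section \ref{sec:s22}. (Discretizing first and using a discretized minimizer, as in that section, also works.) I would then construct a single encoder $\mathring e$ via Lemma \ref{lem:nn_approximation} that approximates $e^*$ simultaneously in $L^2(P_{X_k})$ for every $k$. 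This is possible because the lemma's bound holds for any $X$ supported in $[-B_x,B_x]^p$, and the lemma's construction does not depend on the distribution of $X$; it yields a uniform-type approximant. For each $k$ I would then approximate $d_k^*$ by some $\mathring d_k$ on the common bounded range $[-B_z,B_z]^{q}\times[-B_x,B_x]^s$ of $(\mathring e(x),\eta)$. The step I expect to be the main obstacle is checking that the $L^2$ guarantee of Lemma \ref{lem:nn_approximation} really applies to every $P_{X_k}$ at once with the same network. If it did not, I would instead use the version of the construction that is uniform on the cube (sup-norm approximation), whose error also tends to zero as $L_1N_1\to\infty$.

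Finally, I would add the bounds over $k$ as in Section \ref{sec:completing}, using $B_w=\mathcal O(n^{\gamma_B})$ and $M=\mathcal O(n^{\gamma_M})$. Letting $n\to\infty$ and then $\varepsilon\to0$ gives the claim, and the sine parts are treated identically.
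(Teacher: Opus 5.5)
Your proposal is correct and follows the paper's proof essentially step by step. The estimation terms are bounded per environment through the uniform deviation argument, and $\sum_k S_{22,k}$ is bounded jointly by comparison with a fixed candidate $(\mathring e,\mathring d_1,\dots,\mathring d_K)$ built from Lemma~\ref{lem:nn_approximation}. The step you flag as the main obstacle is handled in the paper by applying Lemma~\ref{lem:nn_approximation} once to the uniform mixture $X_W$ of $P_{X_1},\dots,P_{X_K}$, so that $\sum_k \mathbb{E}\|\mathring e(X_k)-e^*(X_k)\|^2 = K\,\mathbb{E}\|\mathring e(X_W)-e^*(X_W)\|^2$. Your distribution-free reading also works, because the lemma, which is a sup-norm result, fixes $\phi$ before quantifying over $X$.
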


\subsection{Proof of Theorem \ref{thm:consistencyENV}}
The proof for the multi-environment case is a slight modification of the proof for a single environment. We use the same notation, indicating dependence on the environment $k$ with a subscript whenever necessary. Recall the definitions
\begin{align*}
    & g(y,t) = \frac{\cos(t^{\top}y)}{\|t\|^{(m+1)/2}}, \\
    & s_k^*(x,t) = \mathbb{E}[g(Y_k,t) \mid X_k = x] = \mathbb{E}_{\eta_k}[g(d_k^*(e^*(x),\eta_k), t)] =: h_k^*(e^*(x),t), \\
    & \hat{h}_k(\hat{e}(x),t) = \mathbb{E}_{\eta_k}[g(T_{B_y}\hat{d}_k(\hat{e}(x),\eta_k), t)],
\end{align*}
and the error to be bounded is
\begin{equation*}
    \tilde{\mathcal{L}} = \mathbb{E}\Big[\sum_{k=1}^K\mathbb{E}_{X\sim P_{X_k}}\Big[\int_{\mathbb{R}^m} (s_k^*(X,t) - \hat{h}_k(\hat{e}(X),t))^2 \, dt \Big]\Big].
\end{equation*}
We again decompose the error as the sum $\tilde{\mathcal{L}} = \mathbb{E}[S_{21} + S_{22}]$, where
\begin{align*}
    & S_{21} = \sum_{k=1}^K\Big\{\mathbb{E}\Big[\int_{\mathbb{R}^m} (\hat{h}_k(\hat{e}(X_k),t) - g(Y_k,t))^2 \, dt \mid D_{n}\Big]  - \mathbb{E}\Big[\int_{\mathbb{R}^m} (s_k^*(X_k,t) - g(Y_k,t))^2 \, dt\Big] \\
    & \qquad - 2\Big[\mathbb{E}_{n,k}\Big[\int_{\mathbb{R}^m} (\hat{h}_k(\hat{e}(X_k),t) - g(Y_k,t))^2 \, dt \Big]  - \mathbb{E}_{n,k}\Big[\int_{\mathbb{R}^m} (s_k^*(X_k,t) - g(Y_k,t))^2 \, dt\Big]\Big]\Big\}, \\
    & S_{22} = 2\sum_{k=1}^K\Big[\mathbb{E}_{n,k}\Big[\int_{\mathbb{R}^m} (\hat{h}_k(\hat{e}(X_k),t) - g(Y_k,t))^2 \, dt \Big]  - \mathbb{E}_{n,k}\Big[\int_{\mathbb{R}^m} (s_k^*(X_k,t) - g(Y_k,t))^2 \, dt\Big]\Big].
\end{align*}
Here $D_{n,k} = ((X_{1,k}, Y_{1,k}), \dots, (X_{n, k}, Y_{n, k}))$, $D_n = (D_{n,1}, \dots, D_{n,K})$, the expectation $\mathbb{E}_{n,k}[\cdot]$ is over the empirical distribution $n^{-1}\sum_{i=1}^{n} \delta_{(X_{i,k}, Y_{i,k})}$, and $(X_k, Y_k)$ has the same distribution as $(X_{1,k}, Y_{1,k})$ and is independent of all other random quantities.

The bound on $\mathbb{E}[S_{21}]$ from the single-environment proof can be applied separately to all $K$ summands in the definition of $S_{21}$ above, which gives the upper bound
\[
       \mathbb{E}[S_{21}] \leq \sum_{k=1}^K \kappa^{m}n^{-1}C_2^{-1}(1 + \log(C_{1,k}) + (\mathcal{S}_k+1)\log(M) + \mathcal{S}_k\log(n)) + KcM^{-2/m} + K\varepsilon/4,
\]
where the constants $C_{1,k}$ and $\mathcal{S}_k$ depend on the environment through $\ell_{2,k}$ and $r_{2,k}$. For $S_{22}$, the same arguments as at the beginning of Section \ref{sec:s22} show that $\mathbb{E}[S_{22}]$ is bounded from above by $K(c'M^{-2/m} + \varepsilon/4)$ plus
\[
    2\kappa^mL_g^2 \cdot \inf_{e, d_1, \dots, d_K}\sum_{k=1}^K \mathbb{E}\left[\mathbb{E}_{\eta_k}[\|d_k(e(X_k), \eta_k) - d^*_k(e^*(X_k),\eta_k)\|^2]\right],
\]
where the infimum is over functions in the class
\[
    \{(e,d_1, \dots, d_K)\colon e \in \mathcal{F}(p, \ell_1, r_1, q, B_w), \ d_k \in \mathcal{F}(q+s, \ell_{2,k}, r_{2,k}, m, B_w), k = 1, \dots, K\}.
\]
For any fixed $\mathring{e}, \mathring{d}_1, \dots, \mathring{d}_K$ in the function class, we have 
\begin{align*}
    & \inf_{e, d_1, \dots, d_K}\sum_{k=1}^K \mathbb{E}\left[\mathbb{E}_{\eta_k}[\|d_k(e(X_k), \eta_k) - d^*_k(e^*(X_k),\eta_k)\|^2]\right] \\
    & \leq \sum_{k=1}^K \mathbb{E}\left[\mathbb{E}_{\eta_k}[\|\mathring{d}_k(\mathring{e}(X_k), \eta_k) - d^*_k(e^*(X_k),\eta_k)\|^2]\right] \\
    & \leq 2\sum_{k=1}^K \Big(\mathbb{E}\left[\mathbb{E}_{\eta_k}[\|\mathring{d}_k(\mathring{e}(X_k), \eta_k) - d^*_k(\mathring{e}(X_k),\eta_k)\|^2]\right] \\
    & \qquad + \mathbb{E}\left[\mathbb{E}_{\eta_k}[\|d^*_k(\mathring{e}(X_k), \eta_k) - d^*_k(e^*(X_k),\eta_k)\|^2]\right]\Big) \\
    & \leq 2\sum_{k=1}^K \mathbb{E}\left[\|\mathring{d}_k(\mathring{e}(X_k), \eta_k) - d^*_k(\mathring{e}(X_k),\eta_k)\|^2\right] + 2mKL_d^2 \mathbb{E}[\|\mathring{e}(X_W) - e^*(X_W)\|^2],
\end{align*}
where $X_W$ has the mixture distribution with $X_W \mid W = k \sim P_{X_k}$ and $W$ uniformly distributed on $\{1, \dots, K\}$. Lemma \ref{lem:nn_approximation} can now be applied to find a suitable $\mathring{e}$ to bound the approximation error $\mathbb{E}[\|\mathring{e}(X_W) - e^*(X_W)\|^2]$. Given $\mathring{e}$, each term $\mathbb{E}\left[\|\mathring{d}_k(\mathring{e}(X_k), \eta_k) - d^*_k(\mathring{e}(X_k),\eta_k)\|^2\right]$ in the upper bound is again bounded by Lemma \ref{lem:nn_approximation}, with neural network parameters $\ell_{2,k}, r_{2,k}$ depending on $L_{2,k}, N_{2,k}$. The proof is concluded with the same steps as in the case $K = 1$, where we use that the rates on the parameters apply to all $K$ environments.

\section{Alternative Dimension Selection Criterion} \label{app:mixture}

A computationally attractive alternative to the criterion of Section \ref{sec:dimension_selection} has been suggested by \citet{ShenMeinshausen2024}: for a given maximal $D$ for the input $(e(X), \eta)$ of $d$, they compute a weighted mixture loss
\begin{align*}
    & \sum_{q=1}^Dw_q\Big( \frac{1}{n}\sum_{i=1}^n\Big(\frac{1}{N}\sum_{j=1}^N \|d(e_{1:q}(X_i),\eta_{i,j; \, (q+1):D}) - Y_i\| -  \\
    & \qquad \frac{1}{2N(N-1)}\sum_{j,k=1}^N \|d(e_{1:q}(X_i),\eta_{i,j; \, (q+1):D}) - d(e_{1:q}(X_i),\eta_{i,k;(q+1):D})\|\Big)\Big),
\end{align*}
where $x_{a:b}$ denotes the subvector of $x$ from indices $a$ to $b$, the $w_1, \dots, w_D \geq 0$ are weights, and $e(x), \eta \in \mathbb{R}^D$ are expanded dimension reduction vector and noise variables from which sub-vectors of different lengths are extracted. The loss simultaneously considers different distributions of the total dimension $D$ to $e_{1:q}(x) \in \mathbb{R}^q$ and to the noise $\eta_{(q+1):D} \in \mathbb{R}^{D-q}$. Minimizing this mixture loss during training makes it possible to select the desired dimension reduction by extracting the first $q$ components, $e_{1:q}(x)$, and filling up the remaining $D-q$ components with noise $\eta_{(q+1):D}$.

\section{Additional Experiments} \label{app:experiments}

\subsection{Simulation Settings} \label{app:setting}

This section describes variations of the simulation examples from Section \ref{sec:simulations}.

\paragraph{Dense-linear.}
We let \(X\sim \mathcal{N}_{80}(0,I_{80})\), and take the sufficient dimension reduction
\[
Z=\Theta X,\qquad \Theta\in\mathbb{R}^{3\times 80}, \qquad \Theta\Theta^\top=I_3.
\]
Thus, all coordinates of $X$ contribute to the sufficient representation. The response is generated as
\[
Y=\mu(Z_{1:3})+\sigma(Z_{1:3})\odot \varepsilon,
\qquad \varepsilon\sim \mathcal{N}_4(0,I_4),
\]
where $\mu(\cdot)$ and $\sigma(\cdot)$ are the same mean and scale functions as in Section \ref{sec:Engression}, with $X_{1:3}$ replaced by $Z_{1:3}$.

\paragraph{Dense-nonlinear.}
Similarly to ``dense-linear'', we first form five dense linear projections
\[
Z=\Theta X,\qquad \Theta\in\mathbb{R}^{5\times 100}, \qquad \Theta\Theta^\top=I_5.
\]
The response depends on $X$ through the nonlinear three-dimensional representation
\[
S_1=Z_1^2+0.5\sin(Z_2),\qquad
S_2=Z_2Z_4,\qquad
S_3=\cos(Z_3-Z_5)+0.25Z_4^2.
\]
Then
\[
Y=\mu(S_{1:3})+\sigma(S_{1:3})\odot \varepsilon,
\qquad \varepsilon\sim \mathcal{N}_4(0,I_4),
\]
using the same mean and scale functions as before, with $Z_{1:3}$ replaced by $S_{1:3}$. This creates a dense nonlinear sufficient representation while avoiding dependence on only a few raw coordinates of $X$.

\paragraph{Heavy-tailed.}
For the simulation in Section \ref{sec:comp}, the data are generated by
$$
Y = \cos(b_1^\top X) + \varepsilon, \quad \text{where }
b_1 = \frac{1}{\sqrt{6}}(1,1,1,1,1,1,0,\ldots,0)^\top, \quad
X \sim \mathcal{N}_{20}(0,\Sigma), \quad \Sigma_{ij} = 0.5^{|i-j|}.
$$
\(\varepsilon\) follows a generalized normal distribution with shape parameter \(c=0.5\) and variance \(0.25\). 

\paragraph{Mixture.} 
The mixture design creates a genuinely distributional prediction problem: \(Y\mid X\) is multimodal, so a method must learn more than the conditional mean to achieve a good energy score. Since the mixture weights, component means, and variances all depend only on the low-dimensional representation, this setting tests whether SRDR can recover the sufficient structure for the full conditional distribution.

For the mixture simulation, the data are generated as 
\(X \sim \mathcal{N}_{40}(0,I_{40}), Z=(X_1,X_2,X_3), \) and 
\[ 
    C\mid X \sim \mathrm{Bernoulli}\{\pi(X_{1:3})\}, \qquad \pi(Z)=\mathrm{sigmoid}(0.9Z_1-0.7Z_2+0.5Z_3). 
\] 
Conditional on \(C\), the response is generated by
\[
    Y = \mu_C(X_{1:3})+\sigma_C(X_{1:3})\odot \varepsilon,
    \qquad \varepsilon\sim \mathcal{N}_4(0,I_4).
\]
The two component means are
\[
\mu_0(X_{1:3})=
\begin{pmatrix}
\sin(X_1)\\
\cos(X_2)\\
X_3\\
0.5X_1X_2
\end{pmatrix},
\qquad
\mu_1(X_{1:3})=
\begin{pmatrix}
\sin(X_1)+1.5+0.25X_3\\
\cos(X_2)-1.0+0.25X_1\\
-X_3+0.5\sin(X_2)\\
0.5X_1X_2+0.75\tanh(X_3)
\end{pmatrix}.
\]
The two component standard deviations are
\[
\sigma_0(X_{1:3})
=
0.12+0.18\,\mathrm{sigmoid}
\begin{pmatrix}
X_1\\
X_2\\
X_3\\
X_1+X_2
\end{pmatrix},
\qquad
\sigma_1(X_{1:3})
=
0.18+0.22\,\mathrm{sigmoid}
\begin{pmatrix}
X_2-X_3\\
X_1+X_3\\
X_1-X_2\\
X_3
\end{pmatrix},
\]
where the sigmoid function is applied componentwise. Hence \(Y\mid X\) is multimodal and depends on \(X\) only through \(X_{1:3}\).

\paragraph{Nuisance.} 
The correlated nuisance design is of interest because the irrelevant covariates are correlated with the true signal, making it harder to distinguish predictive structure from nuisance variation. This tests whether SRDR can focus on the response-relevant reduced representation rather than overusing the correlated noise.

We first generate the signal coordinates \(S=(S_1,S_2,S_3)\sim \mathcal{N}_3(0,I_3).\) The observed covariates are \(X_1=S_1, X_2=S_2, X_3=S_3\), and for 
\(j=4,\ldots,40\), 
\[ 
    X_j = A_j^\top S+\eta_j, 
    \qquad A_j\sim \mathcal{N}_3(0,0.35^2 I_3), 
    \qquad \eta_j\sim \mathcal{N}(0,1). 
\] 
The response is generated as 
\[ 
Y = \mu(S)+\sigma(S)\odot \varepsilon, \qquad \varepsilon\sim \mathcal{N}_4(0,I_4), 
\] 
where 
\[ 
\mu(S)= \begin{pmatrix} \cos(S_1+S_2)\\ \sin(S_2S_3)\\ S_1-0.5S_3\\ \tanh(S_1S_2)+0.25S_3 \end{pmatrix}, 
\qquad
\sigma(S)= \begin{pmatrix} 0.10+0.20\,\mathrm{sigmoid}(S_1)\\ 0.15+0.20\,\mathrm{sigmoid}(S_2)\\ 0.20+0.20\,\mathrm{sigmoid}(S_3)\\ 0.25+0.20\,\mathrm{sigmoid}(S_1-S_2+S_3) \end{pmatrix}. 
\] 
Although the nuisance coordinates \(X_4,\ldots,X_{40}\) are correlated with the signal, they contain no additional information about \(Y\) once \((X_1,X_2,X_3)\) is known.

\paragraph{Heteroscedastic.}
This setting is adapted from \citet{TangLi2025}. We draw $X \in \mathbb{R}^p$ with $p=100$, where each coordinate $X_j$ is independently distributed as $\mathcal{N}(0.2, 0.5)$. The true sufficient reduction is $f(X) = (f_1(X), f_2(X))$, where
\[
f_1(X) = \sin\!\left(\frac{(X_1 + X_2)\pi}{10}\right) + X_1^2, 
\quad
f_2(X) = 2\sin^2\!\left(\frac{(X_3 + X_4)\pi}{10}\right) + X_3^2.
\]
The response is generated as $Y = f_1(X) + f_2(X)\,\varepsilon$, where $\varepsilon \sim \mathcal{N}(0,1)$ is independent of $X$. 

\subsection{Results} \label{app:Engression}

We compare the test energy score of SRDR-linear, SRDR-nonlinear, and engression in four additional settings. The results are shown in Figure~\ref{fig:synthetic_additional}. The conclusion remains similar: SRDR improves upon engression at smaller sample sizes and remains competitive at larger sample sizes when \(q\) is at least $q^*$. SRDR also exhibits an elbow pattern, except that SRDR-linear suffers from misspecification and struggles in identifying $q^*$ in the dense-nonlinear setting.

\begin{figure}[!htp]
\centering
\begin{subfigure}[t]{\textwidth}
    \centering
    \includegraphics[width=\textwidth]{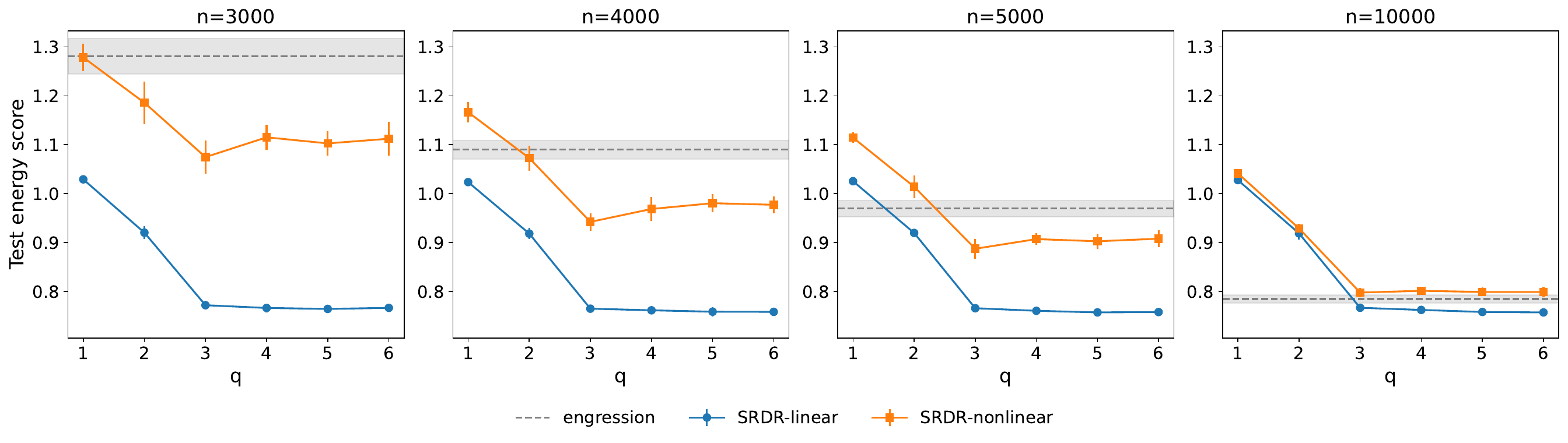}
    \caption{Mixture.}
    \label{fig:mixture}
\end{subfigure}
\vspace{0.5em}
\begin{subfigure}[t]{\textwidth}
    \centering
    \includegraphics[width=\textwidth]{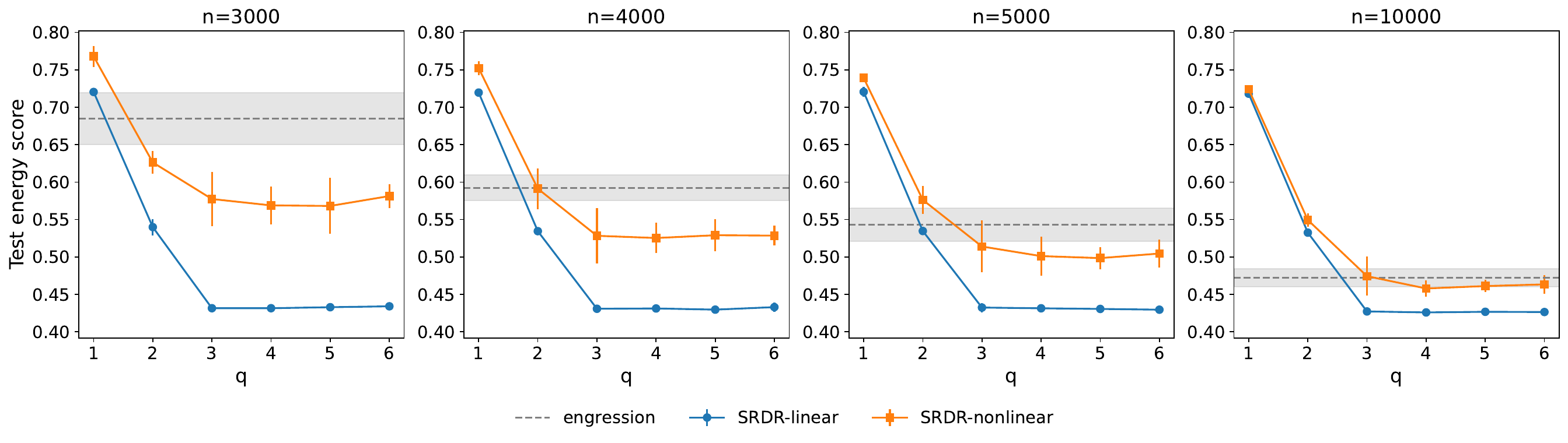}
    \caption{Nuisance.}
    \label{fig:nuisance}
\end{subfigure}
\vspace{0.5em}
\begin{subfigure}[t]{\textwidth}
    \centering
    \includegraphics[width=\textwidth]{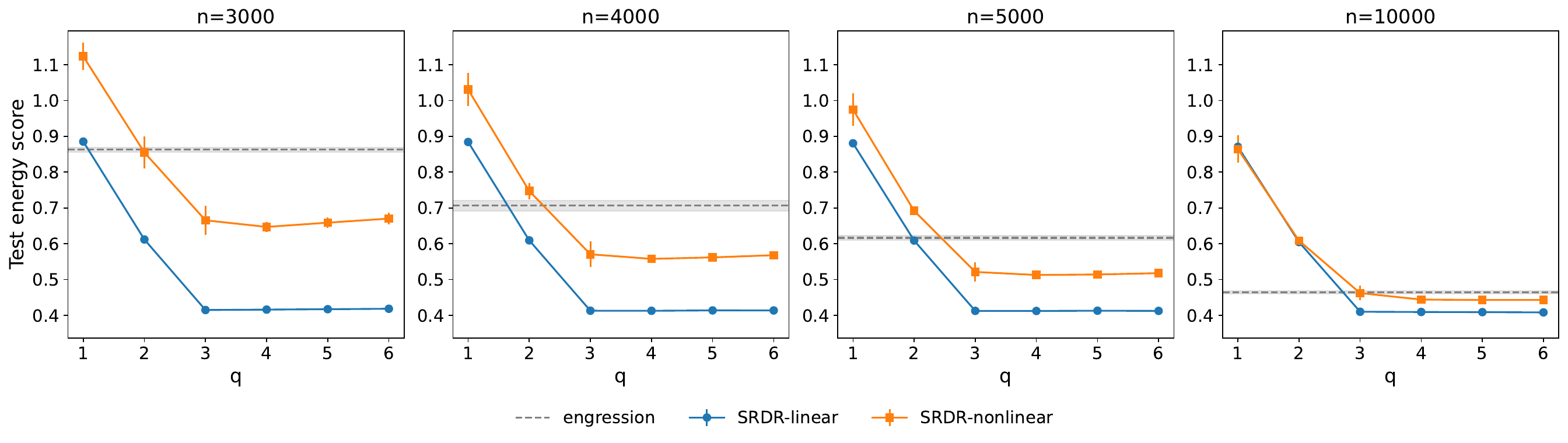}
    \caption{Dense-linear.}
    \label{fig:dense}
\end{subfigure}
\vspace{0.5em}
\begin{subfigure}[t]{\textwidth}
    \centering
    \includegraphics[width=\textwidth]{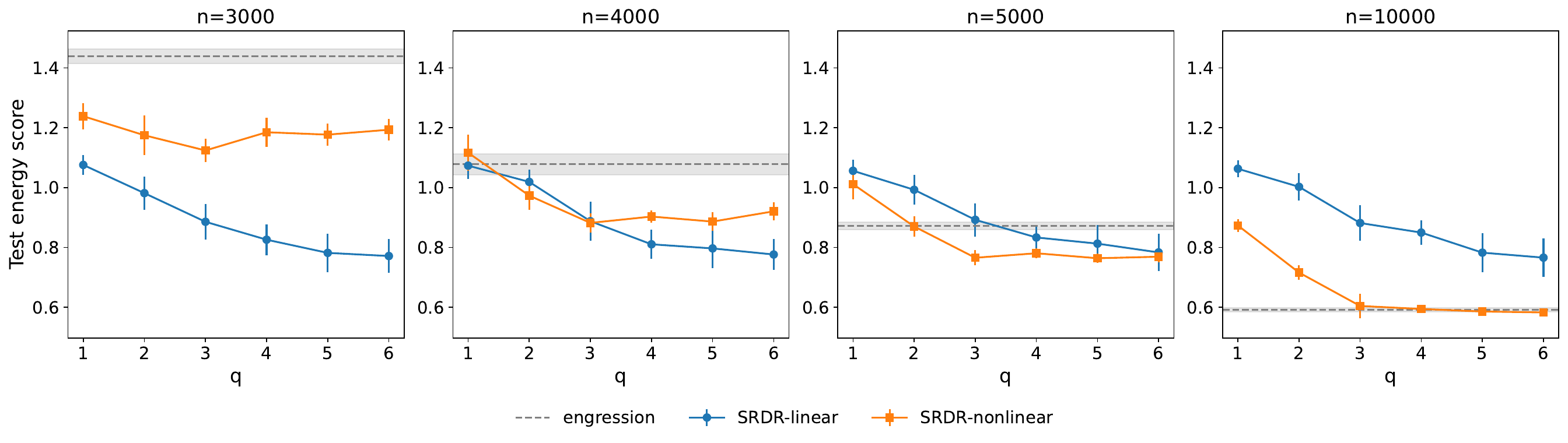}
    \caption{Dense-nonlinear.}
    \label{fig:dense-nonlinear}
\end{subfigure}

\caption{Additional results of the synthetic experiments.}
\label{fig:synthetic_additional}
\end{figure}

\begin{figure}[!ht]
\centering
\includegraphics[width=0.6\textwidth]{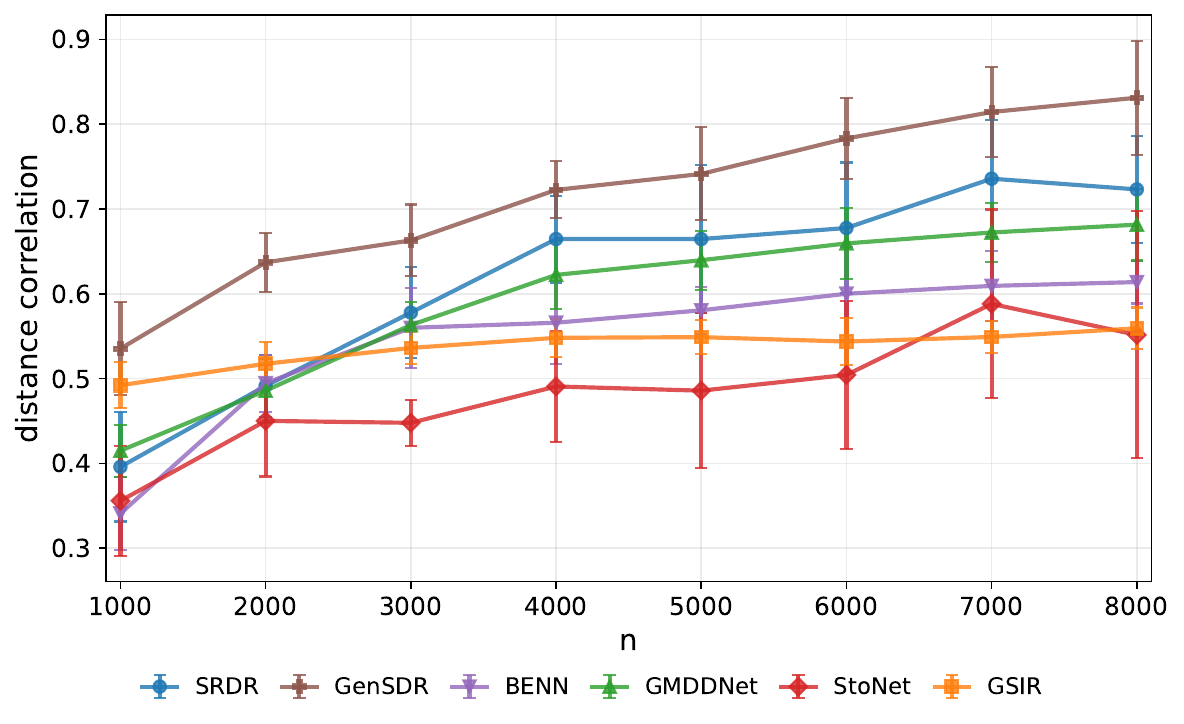}
\caption{$\mathrm{dCor}((f_1(X),f_2(X)), \hat e(X))$ for the heteroscedastic simulation model. Results are reported as mean with error bar over 10 replicates. }
\label{fig:div_dcor}
\end{figure}

We also compare the performance of GSIR, GMDDNet, StoNet, BENN, GenSDR, and SRDR in recovering sufficient reductions under the heteroscedastic setting. For each method, we learn a two-dimensional reduction and evaluate its quality by computing the distance correlation with the true sufficient reduction $f(X)$ on an independent test set of $2000$ data points. We consider sample sizes $n \in \{1000, 2000, \ldots, 8000\}$. For SRDR and GenSDR, we use a three-layer reduction network with 20 hidden units per layer. For the other methods, we adopt the settings of \citet{TangLi2025}. The results are summarized in Figure~\ref{fig:div_dcor}. Overall, for relatively large sample sizes, SRDR is among the most competitive methods.

\subsection{Additional results for Superconductivity} \label{app:supercond}

We present two additional empirical results on the superconductivity dataset.

\begin{table}[!t]
\centering
\caption{Superconductivity prediction-interval results for SRDR, GenSDR, and engression. Entries are reported as mean (sd) across replicates.}
\label{tab:supercond_pi}
\begin{tabular}{cccccc}
\toprule
Method & $q$ & 90\% cov. & 95\% cov. & 90\% width & 95\% width \\
\midrule
SRDR & 1 & 0.878 (0.007) & 0.928 (0.005) & 29.945 (0.483) & 36.212 (0.755) \\
SRDR & 2 & 0.873 (0.006) & 0.923 (0.005) & 29.080 (0.532) & 35.015 (0.915) \\
SRDR & 4 & 0.873 (0.008) & 0.924 (0.005) & 28.541 (0.693) & 34.440 (0.995) \\
SRDR & 8 & 0.863 (0.007) & 0.914 (0.006) & 28.069 (0.553) & 33.847 (0.655) \\
SRDR & 16 & 0.865 (0.010) & 0.915 (0.007) & 27.626 (0.790) & 33.305 (0.910) \\
SRDR & 32 & 0.849 (0.010) & 0.902 (0.008) & 26.165 (0.666) & 31.393 (0.823) \\
SRDR & 64 & 0.850 (0.008) & 0.900 (0.007) & 25.746 (0.701) & 30.764 (0.822) \\
SRDR & 82 & 0.845 (0.010) & 0.897 (0.006) & 24.942 (0.701) & 29.742 (0.860) \\
\midrule
GenSDR & 1 & 0.874 (0.019) & 0.934 (0.011) & 29.920 (1.493) & 36.592 (1.756) \\
GenSDR & 2 & 0.876 (0.018) & 0.932 (0.011) & 29.285 (1.542) & 35.358 (1.671) \\
GenSDR & 4 & 0.873 (0.022) & 0.929 (0.015) & 28.401 (1.053) & 34.356 (1.303) \\
GenSDR & 8 & 0.876 (0.023) & 0.931 (0.016) & 28.430 (1.176) & 34.361 (1.450) \\
GenSDR & 16 & 0.877 (0.017) & 0.929 (0.011) & 28.675 (1.291) & 34.402 (1.385) \\
GenSDR & 32 & 0.867 (0.023) & 0.925 (0.016) & 28.148 (2.102) & 33.769 (2.269) \\
GenSDR & 64 & 0.877 (0.007) & 0.931 (0.005) & 28.811 (1.468) & 34.457 (1.395) \\
GenSDR & 82 & 0.873 (0.028) & 0.928 (0.020) & 27.939 (1.109) & 33.491 (1.363) \\
\midrule
engression & 82 & 0.844 (0.008) & 0.895 (0.004) & 25.450 (0.973) & 30.284 (1.117) \\
\bottomrule
\end{tabular}
\end{table}

Table~\ref{tab:supercond_pi} reports the prediction interval performance for the three generative methods on the superconductivity data. SRDR and GenSDR both produce intervals whose empirical coverages are slightly below the nominal levels, with coverage decreasing as $q$ increases. Overall, the three methods exhibit similar prediction interval performance, with no clear advantage for any single method.

\begin{figure}[!htp]
\centering

\begin{subfigure}[t]{\textwidth}
    \centering
    \includegraphics[width=0.95\linewidth]{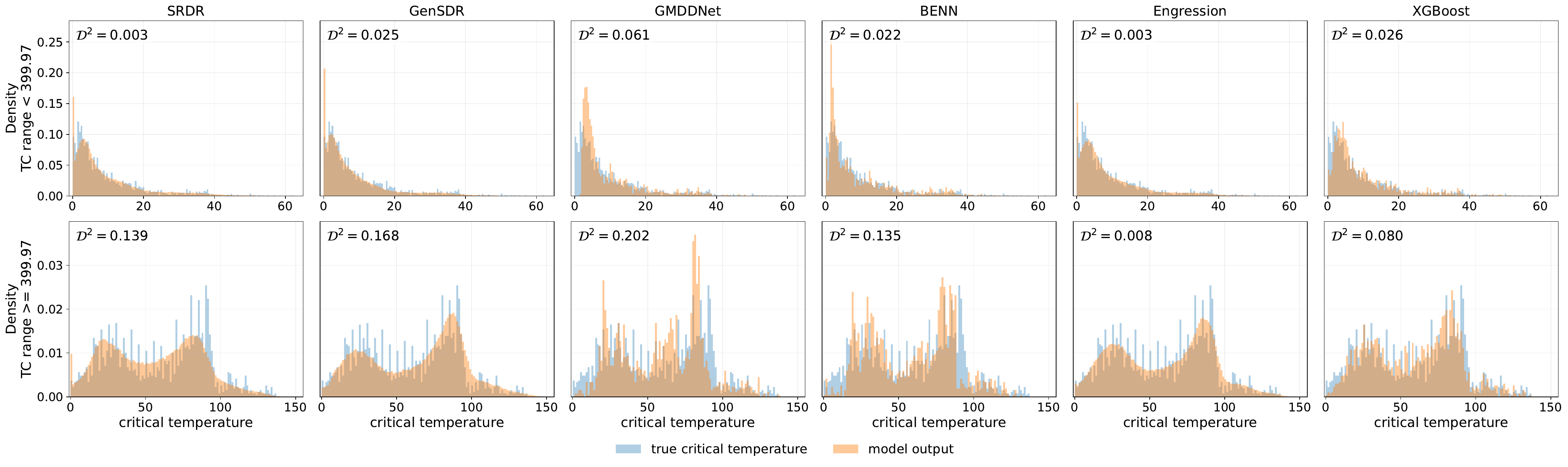}
    \caption{$q = 2$}
    \label{fig:supercond_hist_2}
\end{subfigure}
\vspace{0.5em}
\begin{subfigure}[t]{\textwidth}
    \centering
    \includegraphics[width=0.95\linewidth]{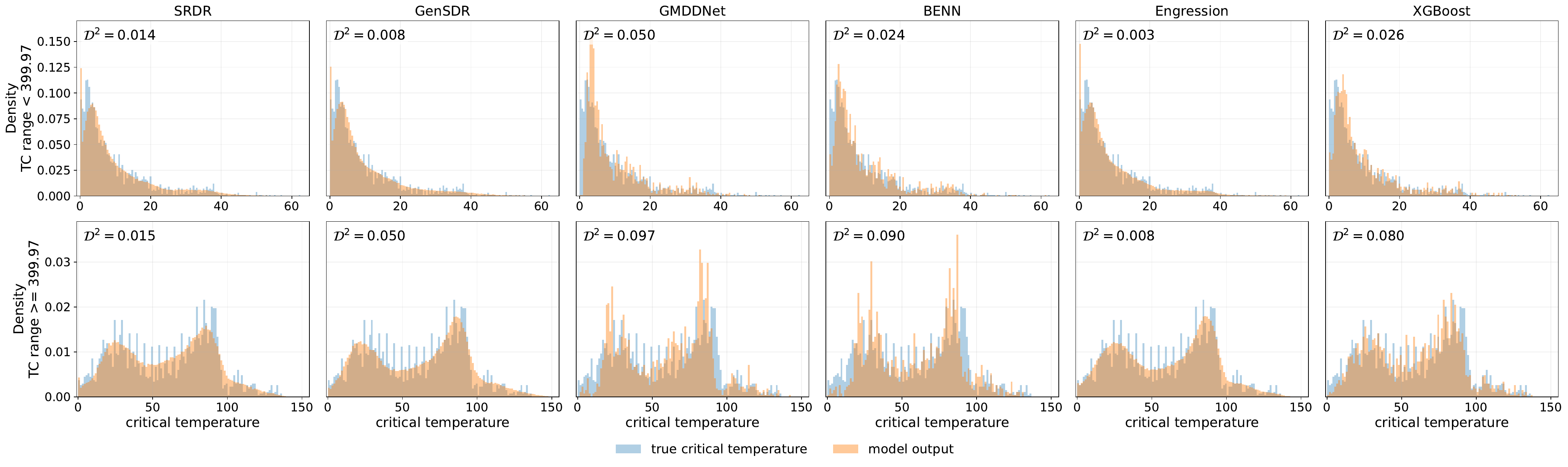}
    \caption{$q = 4$}
    \label{fig:supercond_hist_4}
\end{subfigure}
\vspace{0.5em}
\begin{subfigure}[t]{\textwidth}
    \centering
    \includegraphics[width=0.95\linewidth]{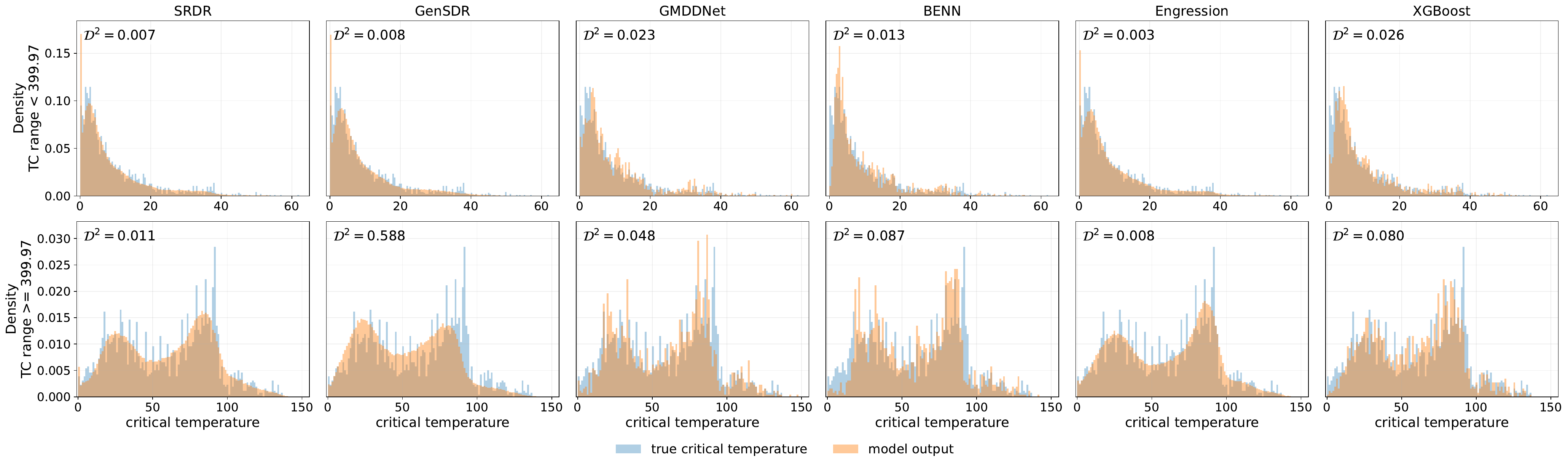}
    \caption{$q = 8$}
    \label{fig:supercond_hist_8}
\end{subfigure}

\caption{More histograms of the model outputs for the superconductivity experiment.}
\label{fig:supercond_hist_more}
\end{figure}

\begin{figure}[!htp]
\centering
\begin{subfigure}[t]{\textwidth}
    \centering
    \includegraphics[width=0.95\linewidth]{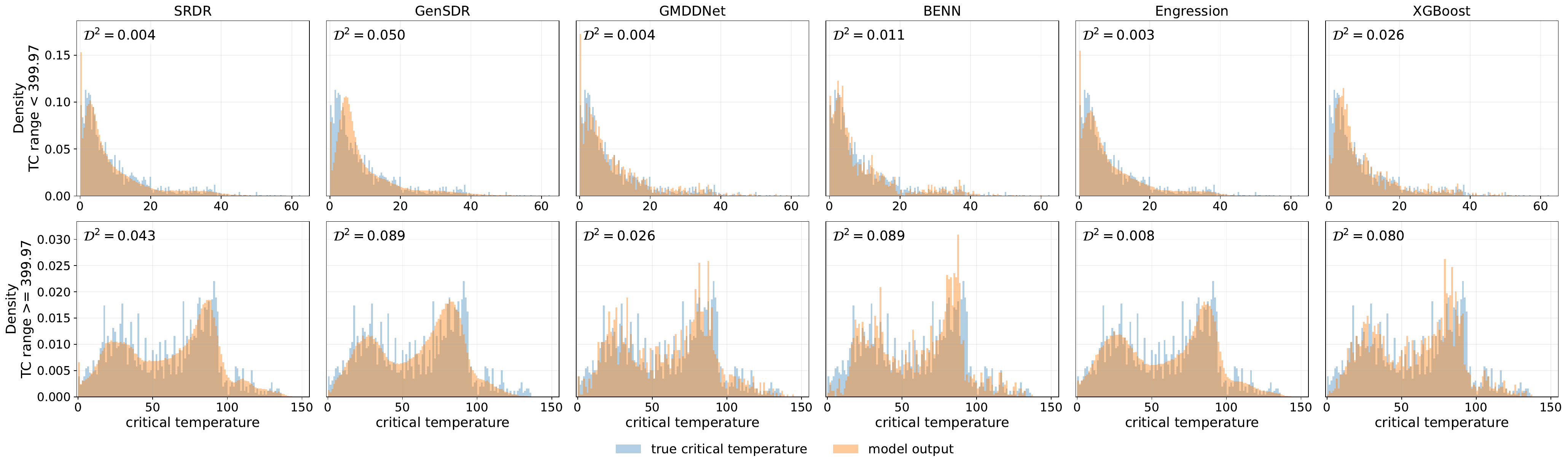}
    \caption{$q = 16$}
    \label{fig:supercond_hist_16}
\end{subfigure}
\begin{subfigure}[t]{\textwidth}
    \centering
    \includegraphics[width=0.95\linewidth]{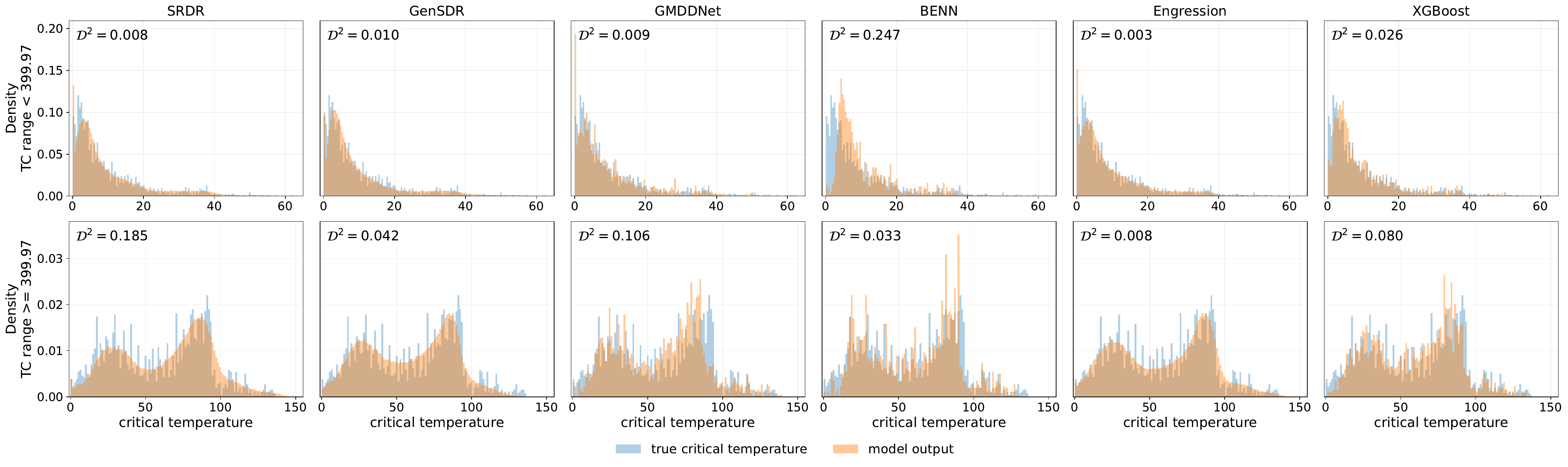}
    \caption{$q = 32$}
    \label{fig:supercond_hist_32}
\end{subfigure}
\begin{subfigure}[t]{\textwidth}
    \centering
    \includegraphics[width=0.95\linewidth]{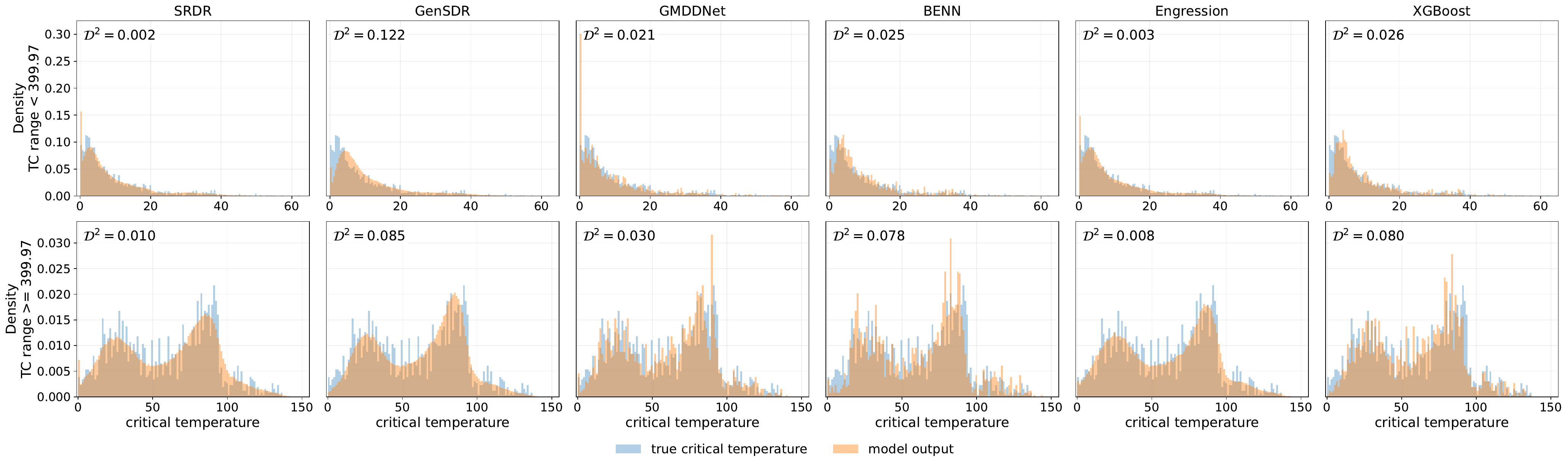}
    \caption{$q = 64$}
    \label{fig:supercond_hist_64}
\end{subfigure}
\begin{subfigure}[t]{\textwidth}
    \centering
    \includegraphics[width=0.95\linewidth]{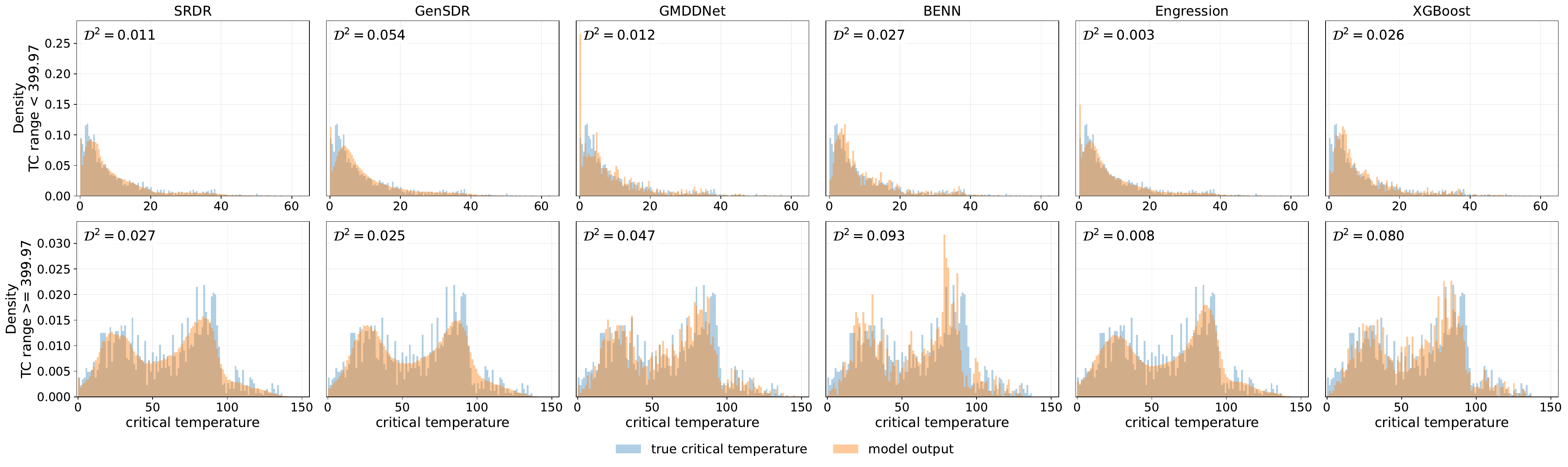}
    \caption{$q = 82$}
    \label{fig:supercond_hist_82}
\end{subfigure}

\caption{Further histograms of the model outputs for the superconductivity experiment, for larger $q$.}
\label{fig:supercond_hist_moremore}
\end{figure}

Figures~\ref{fig:supercond_hist_more} and~\ref{fig:supercond_hist_moremore} provide additional histogram comparisons for other choices of \(q\), complementing the main results in Figure~\ref{fig:supercond_hist}. The overall pattern is consistent across representation dimensions: SRDR continues to recover the main distributional features of the observed outcomes, while the full-feature engression and XGBoost benchmarks remain competitive. In contrast, GMDDNet and BENN often produce more concentrated output distributions.

\subsection{Additional results for CT Slice Localization} \label{app:ct}

We provide additional distributional comparisons for the CT slice localization dataset in Figures~\ref{fig:ct_hist_more} and \ref{fig:ct_hist_moremore}. These figures show patient-level histograms for additional choices of \(q\). The results are consistent with the main text: SRDR generally captures the support and shape of the observed patient-level distributions, while some competing methods produce overly concentrated predictions for smaller \(q\).

We also compare the 90\% and 95\% prediction intervals produced by GenSDR, SRDR and engression. The average coverages and prediction interval widths are reported in Table~\ref{tab:ct_pi}. SRDR yields shorter intervals while achieving similar coverage as engression. GenSDR, on the other hand, is more conservative and tends to overcover with relatively wider intervals.

\begin{table}[!t]
\centering
\caption{CT prediction-interval results for SRDR, GenSDR, and engression. Entries are reported as mean (sd) across replicates.}
\label{tab:ct_pi}
\begin{tabular}{cccccc}
\toprule
Method & $q$ & 90\% cov. & 95\% cov. & 90\% PI width & 95\% PI width \\
\midrule
SRDR & 3 & 0.857 (0.020) & 0.903 (0.016) & 1.283 (0.054) & 1.520 (0.064) \\
SRDR & 6 & 0.849 (0.022) & 0.897 (0.016) & 1.217 (0.065) & 1.441 (0.077) \\
SRDR & 12 & 0.848 (0.018) & 0.897 (0.014) & 1.190 (0.067) & 1.409 (0.080) \\
SRDR & 24 & 0.825 (0.016) & 0.877 (0.015) & 1.167 (0.087) & 1.381 (0.102) \\
SRDR & 48 & 0.836 (0.020) & 0.887 (0.016) & 1.101 (0.053) & 1.303 (0.062) \\
SRDR & 96 & 0.838 (0.015) & 0.889 (0.012) & 1.104 (0.070) & 1.306 (0.083) \\
SRDR & 192 & 0.837 (0.018) & 0.888 (0.014) & 1.108 (0.058) & 1.311 (0.069) \\
\midrule
GenSDR & 3 & 0.976 (0.009) & 0.991 (0.004) & 4.596 (0.092) & 5.717 (0.144) \\
GenSDR & 6 & 0.976 (0.009) & 0.991 (0.005) & 4.381 (0.293) & 5.400 (0.357) \\
GenSDR & 12 & 0.980 (0.008) & 0.993 (0.003) & 4.362 (0.177) & 5.363 (0.214) \\
GenSDR & 24 & 0.984 (0.005) & 0.994 (0.002) & 4.518 (0.193) & 5.540 (0.220) \\
GenSDR & 48 & 0.977 (0.009) & 0.992 (0.004) & 4.651 (0.160) & 5.704 (0.195) \\
GenSDR & 96 & 0.975 (0.009) & 0.990 (0.004) & 4.651 (0.306) & 5.710 (0.359) \\
GenSDR & 192 & 0.960 (0.021) & 0.984 (0.011) & 5.123 (0.525) & 6.274 (0.624) \\
\midrule
engression & 192 & 0.855 (0.015) & 0.902 (0.011) & 1.998 (0.102) & 2.363 (0.121) \\
\bottomrule
\end{tabular}
\end{table}

\begin{figure}[!t]
\centering

\begin{subfigure}[t]{\textwidth}
    \centering
    \includegraphics[width=0.95\linewidth]{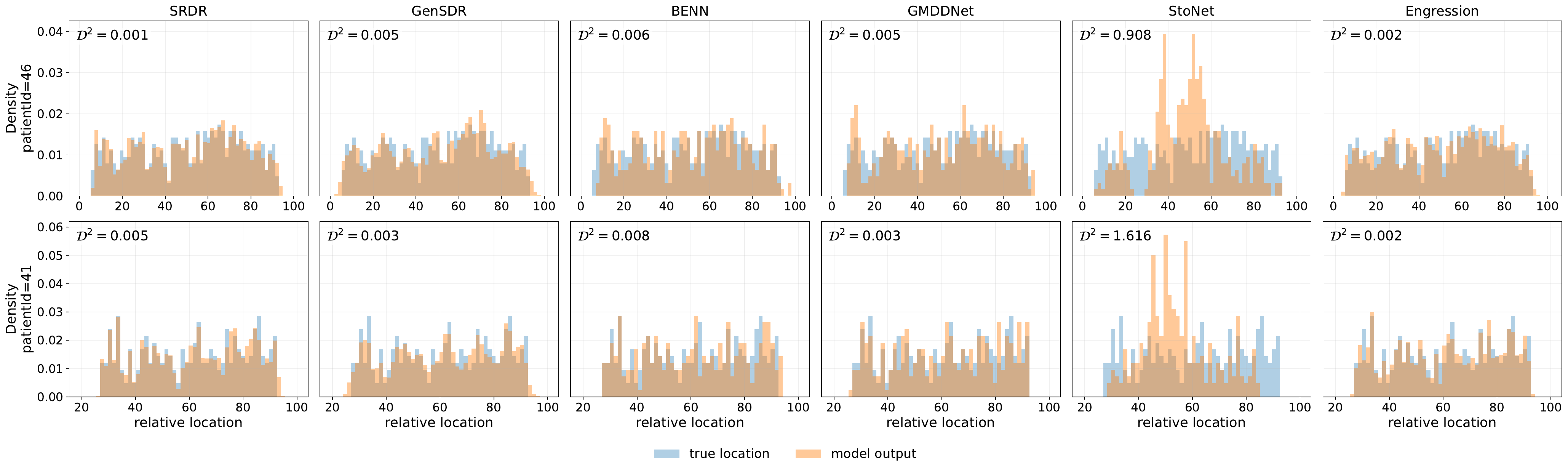}
    \caption{$q=6$}
    \label{fig:ct_hist_6}
\end{subfigure}
\vspace{0.5em}
\begin{subfigure}[t]{\textwidth}
    \centering
    \includegraphics[width=0.95\linewidth]{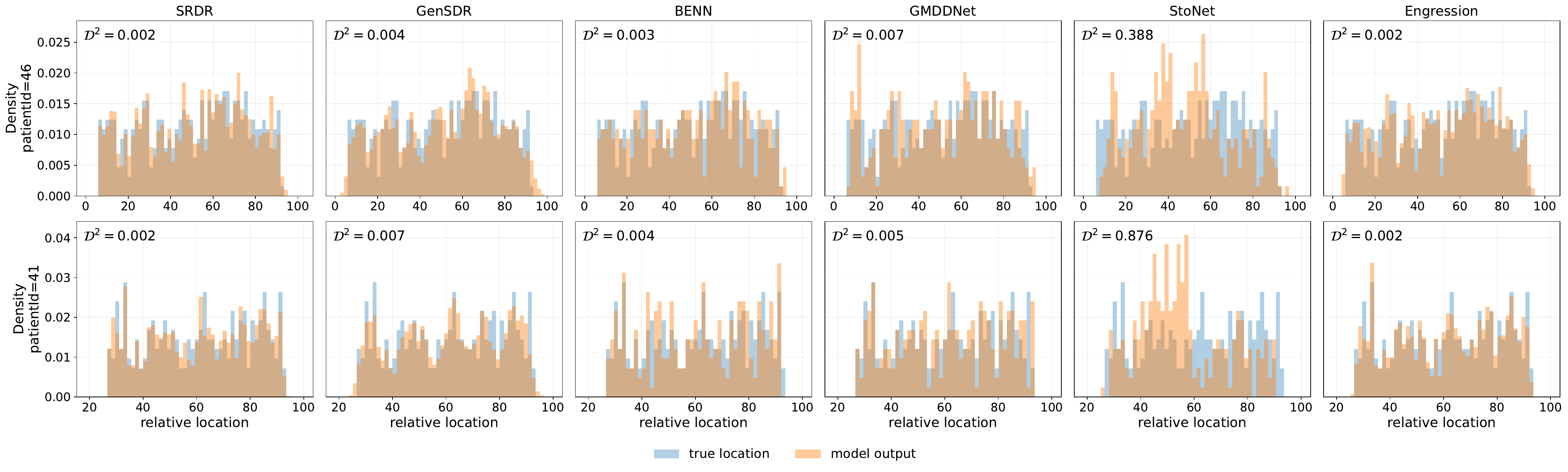}
    \caption{$q=12$}
    \label{fig:ct_hist_12}
\end{subfigure}
\vspace{0.5em}
\begin{subfigure}[t]{\textwidth}
    \centering
    \includegraphics[width=0.95\linewidth]{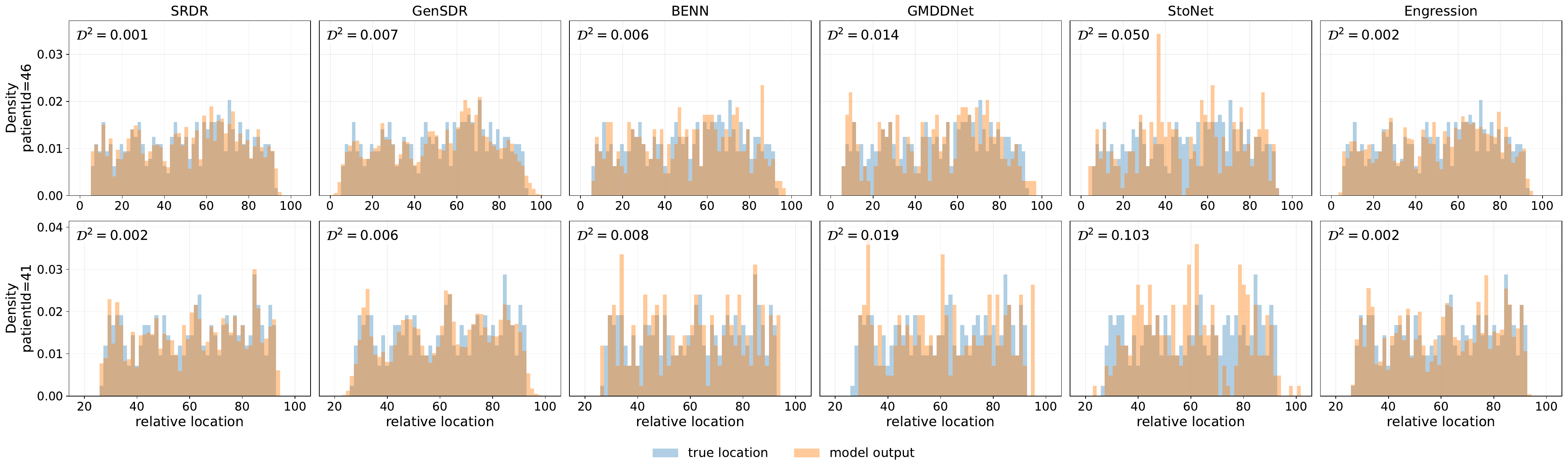}
    \caption{$q = 24$}
    \label{fig:ct_hist_24}
\end{subfigure}

\caption{Histograms for the CT localization experiment.}
\label{fig:ct_hist_more}
\end{figure}

\begin{figure}[!t]
\centering

\begin{subfigure}[t]{\textwidth}
    \centering
    \includegraphics[width=0.95\linewidth]{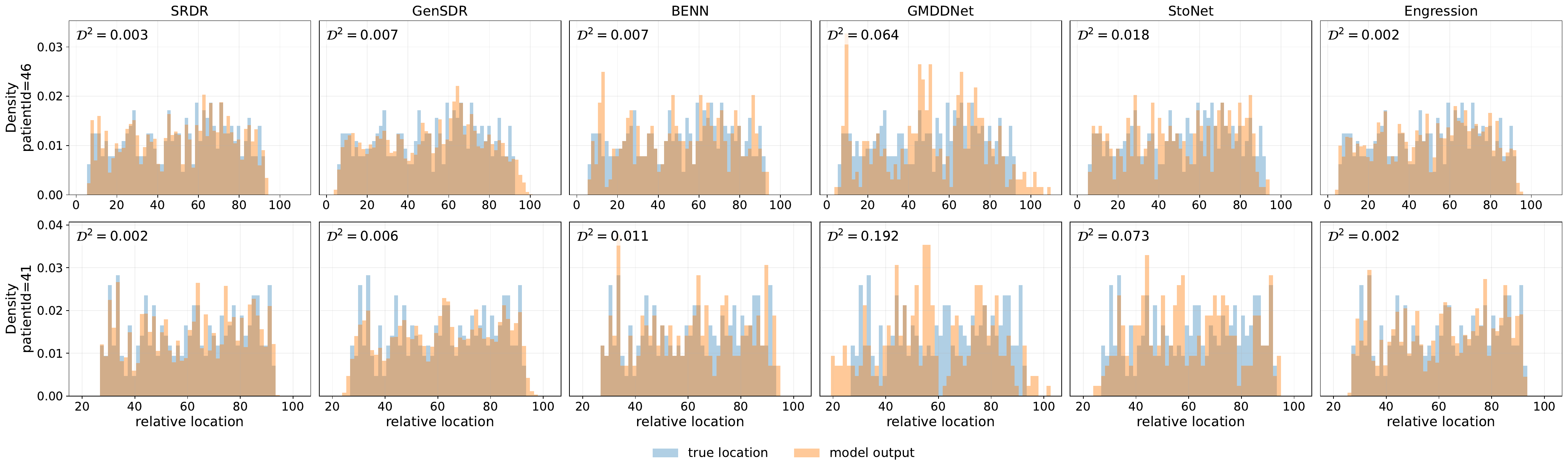}
    \caption{$q=48$}
    \label{fig:ct_hist_48}
\end{subfigure}
\vspace{0.5em}
\begin{subfigure}[t]{\textwidth}
    \centering
    \includegraphics[width=0.95\linewidth]{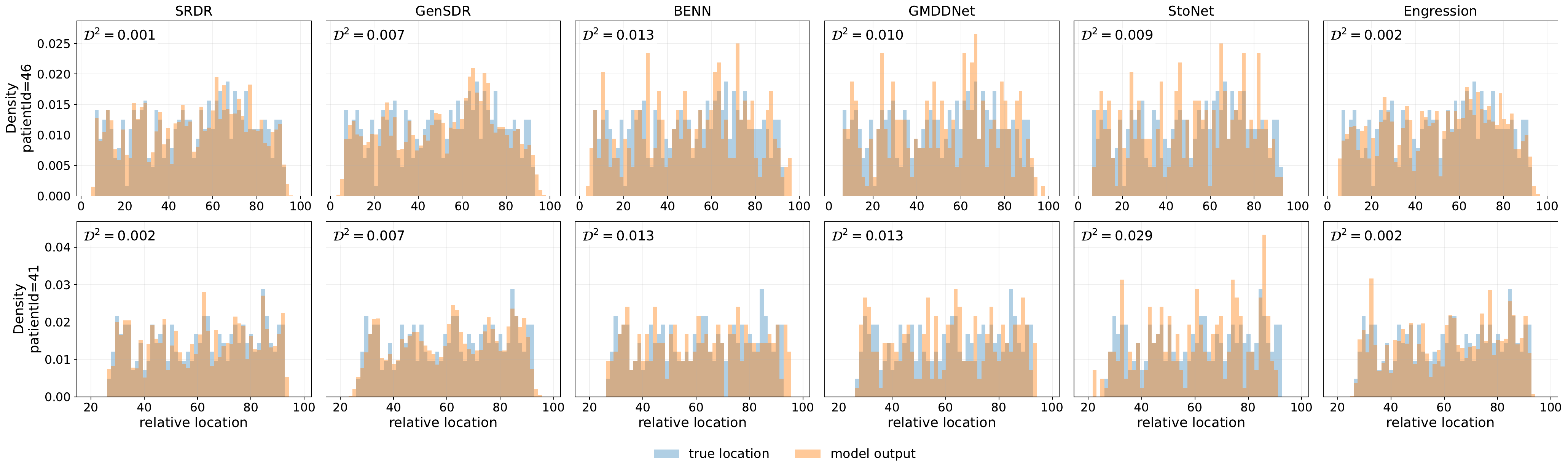}
    \caption{$q=96$}
    \label{fig:ct_hist_96}
\end{subfigure}
\vspace{0.5em}
\begin{subfigure}[t]{\textwidth}
    \centering
    \includegraphics[width=0.95\linewidth]{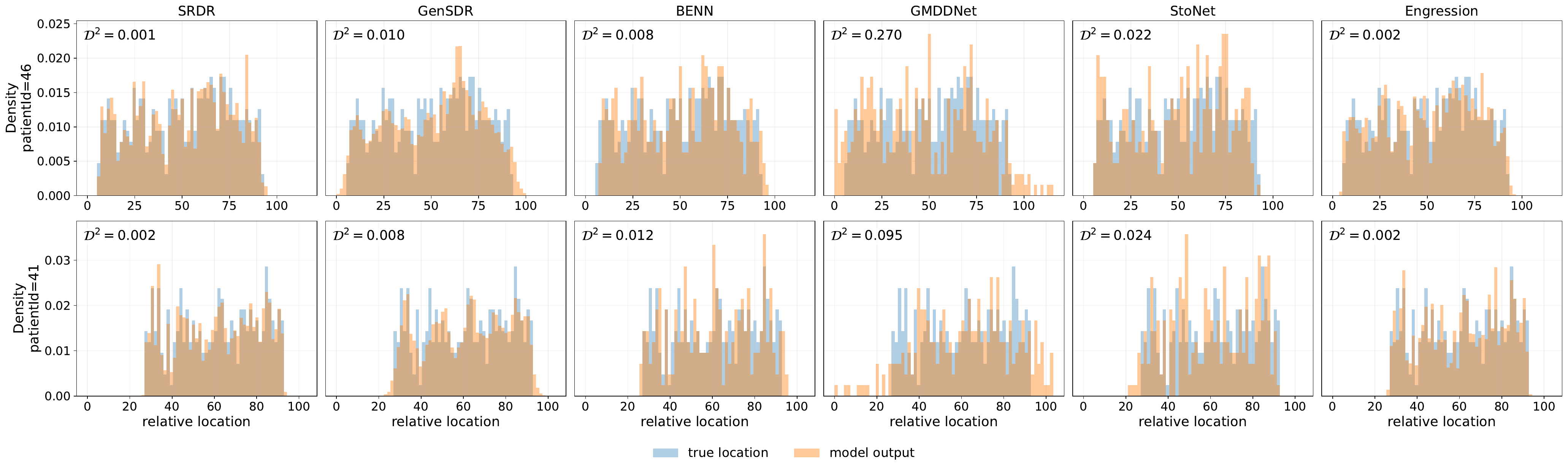}
    \caption{$q=192$}
    \label{fig:ct_hist_192}
\end{subfigure}

\caption{More histograms for the CT localization experiment.}
\label{fig:ct_hist_moremore}
\end{figure}

\subsection{Transfer Learning} \label{app:transfer}

\begin{table}[!ht]
\centering
\caption{CIFAR-10-to-PACS transfer learning classification results. Entries are reported as mean (sd) across seeded PACS resplits. Transfer gap is target accuracy minus scratch accuracy on the same held-out PACS test split.}
\label{tab:transfer}
\begin{tabular}{lcccc}
\toprule
Method & Source val. & Transfer & Scratch & Gap \\
\midrule
\multicolumn{5}{l}{\textit{CIFAR-10 $\rightarrow$ art painting}} \\
SRDR & 0.543 (0.002) & \textbf{0.733 (0.017)} & 0.724 (0.020) & 0.010 (0.017) \\
SRDR-Frozen & 0.543 (0.002) & 0.294 (0.011) & 0.725 (0.008) & -0.431 (0.012) \\
TESR & 0.481 (0.005) & 0.572 (0.032) & 0.779 (0.016) & -0.207 (0.039) \\
\midrule
\multicolumn{5}{l}{\textit{CIFAR-10 $\rightarrow$ cartoon}} \\
SRDR & 0.543 (0.002) & \textbf{0.835 (0.011)} & 0.824 (0.022) & 0.011 (0.019) \\
SRDR-Frozen & 0.543 (0.002) & 0.426 (0.017) & 0.821 (0.015) & -0.395 (0.020) \\
TESR & 0.479 (0.006) & 0.779 (0.037) & 0.882 (0.008) & -0.103 (0.043) \\
\midrule
\multicolumn{5}{l}{\textit{CIFAR-10 $\rightarrow$ photo}} \\
SRDR & 0.543 (0.002) & \textbf{0.827 (0.012)} & 0.824 (0.016) & 0.003 (0.014) \\
SRDR-Frozen & 0.543 (0.002) & 0.490 (0.028) & 0.823 (0.016) & -0.333 (0.039) \\
TESR & 0.480 (0.006) & 0.731 (0.018) & 0.863 (0.013) & -0.131 (0.024) \\
\midrule
\multicolumn{5}{l}{\textit{CIFAR-10 $\rightarrow$ sketch}} \\
SRDR & 0.543 (0.002) & \textbf{0.829 (0.011)} & 0.822 (0.018) & 0.006 (0.022) \\
SRDR-Frozen & 0.543 (0.002) & 0.407 (0.022) & 0.817 (0.021) & -0.410 (0.035) \\
TESR & 0.473 (0.021) & 0.737 (0.026) & 0.878 (0.010) & -0.141 (0.022) \\
\bottomrule
\end{tabular}
\end{table}

We evaluate transfer from CIFAR-10 to the four PACS domains: art painting, cartoon, photo, and sketch. This setting follows the image classification experiment in \citet{GeZhouHuang2025}, where CIFAR-10 provides a single source task with 10 classes and each PACS domain provides a 7-class target task. For each target domain, we repeat the procedure over 10 random target resplits and report the average results in Table~\ref{tab:transfer}.

We compare TESR with two SRDR variants. Enhanced SRDR, listed as ``SRDR'' in Table \ref{tab:transfer}, first pretrains a source model on CIFAR-10 and then adapts it to the PACS target using a frozen source reduction, a trainable target reduction, and a generative prediction network.
SRDR-Frozen instead keeps the pretrained source reduction fixed and trains only the target prediction network on the target domain. For both methods, we also report the accuracies of a model that is trained from scratch on the target domain (column ``Scratch'' in Table \ref{tab:transfer}). The scratch accuracies of TESR are higher than those of SRDR because TESR adopts a convolutional network that exploits the spatial structure of images, whereas SRDR uses flattened image inputs with an MLP dimension reduction network. Scratch accuracies are therefore not directly comparable across methods, and the transfer gap, the difference between the transfer and scratch accuracies of the same method, is the fairer measure of transfer effectiveness.

Table~\ref{tab:transfer} shows that the enhanced SRDR consistently achieves small positive transfer gaps across all four target domains, indicating that transferring the pretrained reduction provides a modest but consistent improvement over training the same SRDR architecture from scratch. In contrast, SRDR-Frozen exhibits substantial negative transfer gaps on every target domain, suggesting that target-specific adaptation is essential for heterogeneous image transfer. TESR also exhibits negative transfer gaps relative to its target-only baseline under our implementation.

\subsection{Implementation Details} \label{app:implementation}

For all experiments involving GMDDNet, we use the successive procedure, which \citet{Chen2024} report to improve performance.

\paragraph{Synthetic (SRDR and engression).}
Across all scenarios, we use the same network architecture and optimization hyperparameters for a fair comparison. For SRDR, the nonlinear dimension reduction variants use a two-layer dimension reduction network, while the linear variant uses a single linear layer; the prediction network has two layers, with hidden width \(100\). Engression is implemented with a four-layer network of hidden width \(100\). All methods use \(s=10\), Adam optimization with learning rate \(10^{-4}\), and \(20{,}000\) training steps. The batch size is capped at \(256\). For each setting, we run \(10\) Monte Carlo replicates and evaluate test energy score using \(100\) generated samples per test point on an independent test set of size \(5000\).

\paragraph{Heavy-tailed.}
All neural-network-based methods use the same \(2\times 20\) dimension reduction network for comparability. Specifically, GMDDNet uses a two-hidden-layer ReLU network. SRDR uses a two-layer prediction network with hidden width \(20\) and \(s=5\). StoNet has a one-dimensional bottleneck after the dimension reduction network. BENN uses \(m=3\) test functions. GenSDR uses a matching two-hidden-layer ReLU velocity network, which takes \((e(X),Y_t,t)\) as input and outputs a scalar velocity.

For optimization, GMDDNet is trained with Adam using learning rate \(10^{-3}\), batch size at most \(100\), and \(100\) training iterations. BENN is trained for \(100\) epochs with learning rate \(10^{-3}\). SRDR is trained for \(50{,}000\) steps using learning rate \(10^{-4}\), weight decay \(10^{-5}\), and batch size capped at \(256\). StoNet is trained for \(200\) epochs with step size \(0.005\), momentum \(0.9\), weight decay \(0.01\), and then fine-tuned for \(30\) epochs with step size \(0.001\). GenSDR is trained for \(50\) epochs using Adam with learning rate \(10^{-3}\), batch size capped at \(256\), \(\tau=0.001\), and no weight decay, using the stochastic-interpolation velocity-matching objective. All experiments use an independent test set of size \(2000\).

\paragraph{CT.}
All experiments use a fixed random \(80/20\) train/test split. All neural-network dimension reduction methods use a one-hidden-layer network with hidden width \(400\). SRDR uses a one-hidden-layer generative prediction network with hidden width \(400\). Engression is used as a full-feature benchmark with a two-layer network of hidden width \(400\). GenSDR uses a one-hidden-layer ReLU velocity network and generates response samples directly from the learned conditional flow. GMDDNet is trained using a one-layer ReLU network, and BENN uses \(m=50\). For GMDDNet, BENN, and StoNet, the downstream regressors are trained with $L_2$ loss for MSE evaluation, or with $L_1$ loss for the MAE report.

SRDR is trained for \(25{,}000\) steps using Adam with batch size \(256\), learning rate \(2\times10^{-4}\), and weight decay \(10^{-4}\). Engression uses the same energy-score training objective on the full feature vector. GenSDR is trained for \(50\) epochs using Adam with learning rate \(10^{-3}\), batch size \(256\), \(\tau=0.001\), and no weight decay. GMDDNet is trained for \(100\) iterations with learning rate \(10^{-3}\) and batch size \(256\). BENN is trained for \(400\) epochs with learning rate \(10^{-3}\). StoNet uses proposal learning rates \((5\times10^{-9},5\times10^{-10})\), sigma list \((10^{-7},10^{-8},10^{-8})\), \(100\) epochs, subsample size \(800\), step size \(0.01\), momentum \(0.9\), and weight decay \(0.01\). Prediction from SRDR, GenSDR, and engression uses Monte Carlo summaries of generated samples.

\paragraph{Superconductivity.}
We use a fixed random 80/20 train/test split and repeat model fitting over 10 replicates on this split. All methods use a two-layer dimension reduction network with hidden width \(200\). SRDR uses a two-layer generative prediction network with width \(200\) and \(s=10\). Engression is used as a full-feature benchmark with a four-layer network of hidden width \(200\). GenSDR uses a matching two-hidden-layer ReLU velocity network. GMDDNet is trained using a ReLU network, and BENN uses \(m=2\). For GMDDNet and BENN, a two-layer DNN regressor with hidden width \(200\) is fitted from the learned reduction to the response; separate regressors are trained with $L_2$ loss for MSE evaluation and $L_1$ loss for MAE evaluation.

All neural generative models are trained with the same main optimization budget whenever applicable. SRDR and engression are trained for \(30{,}000\) steps using Adam with learning rate \(2\times10^{-4}\), batch size \(256\), and weight decay \(5\times10^{-5}\). GenSDR is trained for \(50\) epochs using Adam with learning rate \(10^{-3}\), batch size \(256\), \(\tau=0.001\), and no weight decay. BENN is trained for \(500\) epochs with learning rate \(10^{-3}\), while GMDDNet is trained for \(100\) iterations with learning rate \(10^{-3}\) and batch size \(256\). Prediction from GenSDR, SRDR and engression uses Monte Carlo averaging with \(500\) generated samples and prediction batch size \(512\). The XGBoost baseline follows the full-feature setting with \(374\) trees, learning rate \(0.02\), maximum depth \(16\), subsampling rate \(0.5\), and column subsampling rate \(0.5\). We fit XGBoost separately for the two evaluation criteria, using squared-error loss for MSE and absolute-error loss for MAE.

\paragraph{MNIST.}
GMDDNet, StoNet, BENN, GenSDR, and SRDR are all implemented with a dimension reduction network consisting of four hidden layers of width \(200\). A downstream logistic regression classifier is trained on the learned reduction for GMDDNet, StoNet, and BENN. For SRDR, the \(4\times 200\) generative prediction network uses a softmax output, so generated responses lie on the probability simplex; predictions are obtained by averaging the generated vectors and taking the largest coordinate. For GenSDR, the one-hot labels are treated as Euclidean response vectors: the \(4\times 200\) learned velocity network generates unconstrained \(10\)-dimensional response samples, which are not forced to be one-hot or simplex-valued. We average these generated response vectors and predict by the largest coordinate. All experiments use the full MNIST training set of 60,000 images and are evaluated on the standard test set of 10,000 images.

The optimization hyperparameters are chosen according to the original implementations of the competing methods whenever possible. GMDDNet is trained using Adam with learning rate \(10^{-3}\) for \(100\) iterations. BENN is trained for \(150\) epochs with learning rate \(10^{-3}\). StoNet is trained for \(20\) epochs using stochastic gradient MCMC with \(25\) Metropolis--Hastings steps per iteration, followed by a \(30\)-epoch fine-tuning stage. GenSDR is trained for \(50\) epochs using Adam with learning rate \(10^{-3}\), batch size \(256\), \(\tau=0.001\), and no weight decay. SRDR is trained for \(40{,}000\) optimization steps using Adam with learning rate \(2\times10^{-4}\), weight decay \(10^{-4}\), batch size \(256\), and \(s=10\).

\paragraph{Heteroscedastic.}
All methods are evaluated with \(s=2\). GMDDNet is implemented with a \(2\times 50\) ReLU network, and is trained for \(100\) iterations using Adam with learning rate \(10^{-3}\) and batch size \(100\). StoNet follows the bottleneck architecture \(25 \rightarrow 2 \rightarrow 1\), where the two-dimensional hidden layer is used as the learned reduction. It is pretrained for \(60\) epochs with batch size \(64\), step size \(0.01\), proposal learning rates \((7\times 10^{-4},7\times 10^{-4})\), sigma list \((10^{-2},10^{-4})\), Metropolis--Hastings step \(25\), momentum \(0.9\), and no weight decay. The StoNet reduction is then fine-tuned for \(30\) epochs with step size \(0.001\), followed by a \(\tanh\) transformation of the bottleneck output.

BENN uses the kernel transform with \(m=1000\), \(150\) training epochs, and learning rate \(10^{-3}\). SRDR uses a three-layer dimension reduction network with hidden width \(20\), a two-layer generative network with hidden width \(20\) and noise dimension \(5\). It is trained for \(20{,}000\) steps using learning rate \(2\times 10^{-4}\) and batch size \(\min(256,n)\). GenSDR uses a three-hidden-layer ReLU dimension reduction network with hidden width \(20\), together with a matching three-hidden-layer ReLU velocity network. GenSDR is trained for \(50\) epochs using Adam with learning rate \(10^{-3}\), batch size \(\min(256,n)\), \(\tau=0.001\), and no weight decay. Performance for each method is summarized using distance correlation between the learned representation and the true sufficient reduction \(f(X)\) on an independently generated test set of size \(2000\).

\paragraph{PACS.}
We make a stratified $80/20$ train/validation split on CIFAR-10. For each PACS target domain, we construct a stratified $60/20/20$ train/validation/test split and repeat this procedure over seeded target resplits.

The TESR implementation follows the two-stage construction of \citet{GeZhouHuang2025}. PACS images are normalized to $[-1,1]$, and resized to $32\times32$. Both the source representation $R_c$ and the target augmentation $R_t$ use the paper-style PACS CNN architecture: two $3\times3$ convolutional layers with widths $48$ and $96$, batch normalization and LeakyReLU$(0.2)$ activations, a $2\times2$ max-pooling layer, two additional $3\times3$ convolutional layers with widths $192$ and $256$, a second $2\times2$ max-pooling layer, and a linear projection from $256\cdot 8\cdot 8$ to $1024$, followed by a final representation head. We set $\dim(R_c)=\dim(R_t)=64$, use batch size $128$, and optimize with RMSprop using learning rate $5\times10^{-4}$ and weight decay $10^{-4}$. Each phase is trained for $300$ epochs. The regularization weights are $\lambda_E=\lambda_Z=\lambda_{E,0}=0.01$ and $\lambda_C=1$. 

SRDR uses a six-layer dimension reduction network with hidden width $320$, paired with a two-layer generative prediction network with hidden width $128$ and noise dimension $10$. Both the enhanced and frozen SRDRs are optimized with Adam using learning rate $5\times10^{-4}$ and weight decay $10^{-5}$.

\end{document}